\documentclass[journal,12pt,draftclsnofoot, onecolumn]{IEEEtran}
\usepackage{ifpdf}
\usepackage{cite}
\usepackage[pdftex]{graphicx}
\usepackage{amsmath}
\usepackage{algorithmic}
\usepackage{array}
\usepackage{caption}
\usepackage[caption=false,font=footnotesize]{subfig}

\usepackage{fixltx2e}
\usepackage{stfloats}
\usepackage{url}
\usepackage{amsthm}
\usepackage{amsmath}
\usepackage{amsfonts}
\usepackage{amssymb}
\usepackage[T1]{fontenc} 
\usepackage{amsmath}
\usepackage[cmintegrals]{newtxmath}
\usepackage{bm} 
\usepackage[all]{xy}
\usepackage{enumitem}
\usepackage{float}
\usepackage{amsmath}
\interdisplaylinepenalty=2500
\usepackage[usenames, dvipsnames]{color}
\usepackage{ifpdf}
\usepackage{cite}

\usepackage{amsmath}
\usepackage{algorithmic}
\usepackage{array}
\usepackage{mathtools}
\DeclarePairedDelimiter{\abs}{\lvert}{\rvert}

\makeatletter
\def\widebreve{\mathpalette\wide@breve}
\def\wide@breve#1#2{\sbox\z@{$#1#2$}%
\mathop{\vbox{\m@th\ialign{##\crcr
\kern0.08em\brevefill#1{0.8\wd\z@}\crcr\noalign{\nointerlineskip}%
$\hss#1#2\hss$\crcr}}}\limits}
\def\brevefill#1#2{$\m@th\sbox\tw@{$#1($}%
 \hss\resizebox{#2}{\wd\tw@}{\rotatebox[origin=c]{90}{\upshape(}}\hss$}
\makeatletter

\usepackage{url}
\usepackage{mathtools}
\usepackage{amsthm}
\usepackage{amsmath}
\usepackage{amsfonts}
\usepackage{amssymb}
\usepackage[T1]{fontenc} 
\usepackage{amsmath}
\usepackage[cmintegrals]{newtxmath}
\usepackage{bm} 
\usepackage[all]{xy}
\usepackage{enumitem}
\usepackage{float}
\usepackage{amsmath}
\interdisplaylinepenalty=2500
\usepackage[dvipsnames]{xcolor}
\usepackage{multirow}
\usepackage[T1]{fontenc}
\usepackage{mwe} 
\usepackage{times}
\usepackage{wasysym}
\usepackage{epsfig}
\usepackage{multirow}

\usepackage{tabularx}
\usepackage{graphicx}
\usepackage{color}
\usepackage{xspace}
\usepackage{thumbpdf}
\usepackage{listings}
\usepackage{verbatim}
\usepackage{outlines}
\usepackage{textgreek}
\usepackage{booktabs}
\usepackage{amsfonts}
\usepackage{colortbl}
\usepackage{multicol}
\usepackage{multirow}
\usepackage{balance}
\usepackage{makecell}
\usepackage{array}
\usepackage[normalem]{ulem}
\usepackage{mathtools,amssymb}
\usepackage{float} 
\usepackage{booktabs} 
\usepackage{wrapfig}
\usepackage{subfloat}

\usepackage{diagbox}

\theoremstyle{definition}

\newtheorem{thm}{Theorem}

\newtheorem{example}{Example}
\newtheorem{lem}{Lemma}

\newtheorem{define}{Definition}
\newcommand{\cov}{\textrm{Cov}}

\newcommand{\no}{\nonumber}
\newcommand{\Bern}{\mathop{\mathrm{Bernoulli}}}
\newcommand{\Unif}{\mathop{\mathrm{Uniform}}}
\newcommand{\Bino}{\mathop{\mathrm{Binomial}}}


\hyphenation{op-tical net-works semi-conduc-tor}

\begin{document}
\title{Privacy of Dependent Users Against\\ Statistical Matching}

\author{Nazanin~Takbiri~\IEEEmembership{Student Member,~IEEE,}
	Amir~Houmansadr~\IEEEmembership{Member,~IEEE,}
	Dennis~Goeckel~\IEEEmembership{Fellow,~IEEE,}
	Hossein~Pishro-Nik~\IEEEmembership{Member,~IEEE}
	\thanks{This work was supported by National Science Foundation under grants CCF--1421957 and CNS--1739462.
		This work was presented in part in IEEE International Symposium on Information Theory (ISIT 2018)~\cite{nazanin_ISIT2018}.}
	\thanks{N. Takbiri, D. L. Goeckel, and H. Pishro-Nik are with the Department
		of Electrical and Computer Engineering, University of Massachusetts, Amherst,
		MA, 01003 USA e-mail: (ntakbiri@umass.edu; goeckel@ecs.umass.edu; pishro@engin.umass.edu).}
	\thanks{A. Houmansadr is with the College of Information and Computer Sciences, University of Massachusetts, Amherst,
		MA, 01003 USA e-mail:(amir@cs.umass.edu).}
}

\maketitle

\begin{abstract}	
Modern applications significantly enhance user experience by adapting to each user's individual condition and/or preferences. While this adaptation can greatly improve a user's experience or be essential for the application to work, the exposure of user data to the application presents a significant privacy threat to the users\textemdash even when the traces are anonymized\textemdash since the statistical matching of an anonymized trace to prior user behavior can identify a user and their habits. Because of the current and growing algorithmic and computational capabilities of adversaries, provable privacy guarantees as a function of the degree of anonymization and obfuscation of the traces are necessary. Our previous work has established the requirements on anonymization and obfuscation in the case that data traces are independent between users. However, the data traces of different users will be dependent in many applications, and an adversary can potentially exploit such. In this paper, we consider the impact of dependency between user traces on their privacy. First, we demonstrate that the adversary can readily identify the association graph of the obfuscated and anonymized version of the data, revealing which user data traces are dependent. Next, we demonstrate that the adversary can use this association graph to break user privacy with significantly shorter traces than in the case of independent users, and that obfuscating data traces independently across users is often insufficient to remedy such leakage. Finally, we discuss how users can improve privacy by employing joint obfuscation that removes or reduces the data dependency.
\end{abstract}

\begin{IEEEkeywords}
Information theoretic privacy, inter-user dependency, Internet of Things (IoT), obfuscation and anonymization, Privacy-Protection Mechanisms (PPM).
\end{IEEEkeywords}


\section{Introduction}
\label{intro}

Many modern applications provide an enhanced user experience {\color{black} by exploiting users' characteristics}, including their past choices and present states. In particular, emerging Internet of Things (IoT) applications include smart homes, healthcare, and connected vehicles that intelligently tailor their performances to their users.
For instance, a typical connected vehicle application optimizes its route selection based on the current location of the vehicle, traffic conditions, and the users' preferences.
For such applications to be able to provide their enhanced, user-tailored performances, they need to request their clients for potentially sensitive user information such as mobility behaviors and social preferences.
Therefore, such applications trade off user privacy for enhanced utility.
Previous work~\cite{Naini2016} shows that even if users' data traces are anonymized before being provided to such applications, standard statistical matching techniques can be used to leak users' private information. Thus, privacy and security threats are a major obstacle to the wide adoption of IoT applications, as demonstrated by prior studies{\color{black}~\cite{14sadeghi2015security,lin2016iot,9dalipi2016security,11al2015security, 10harris2016security,8ukil2015privacy, iscas2019,12sivaraman2015network, nad_privacy, FTC2015,0Quest2016, 3ukil2014iot, 4Hosseinzadeh2014,iotCastle,flavio,matching}}.

The bulk of previous work assumes independence between the traces of different users.~\cite{corgeo,corPLP,cordummy,cordiff,Negar,negar_database,shirani_database} have mostly considered temporal and spatial dependency within data traces, but not cross-user dependency.
In~\cite{corgeo}, an obfuscation technique is employed to achieve privacy; however, for continuous Location-Based Services (LBS) queries, there is often strong temporal dependency in the locations. Hence,~\cite{corgeo} considers how dependency of the users' obfuscated data can impact privacy, and then employs an adaptive noise level to achieve more privacy while still maintaining an acceptable level of utility. Liu et al.~\cite{cordummy} show that the spatiotemporal dependency between neighboring location sets can ruin the privacy achieved using a dummy-based location-privacy preserving mechanism (LPPM); to solve this problem, they propose a spatiotemporal dependency-aware privacy protection that perturbs the spatiotemporal dependency between neighboring locations. Zhang et al.~\cite{corPLP} employ Protecting Location Privacy (PLP) against dependency-analysis attack in crowd sensing: the potential dependency between users' data is modeled, and the data is filtered to remove the samples that disclose the user's private data. In~\cite{cordiff}, locations of a single user are temporally dependent, and $\delta$-location set based differential privacy is proposed to achieve location privacy at every timestamp. Finally, Song et al.~\cite{correlection_timeseries} provide privacy when there is dependency within the data of a single user.
In summary, previous studies do not consider dependency between users, which is the focus of this work.
We argue that for many applications, there is dependency between the traces of different users. For example, friends tend to travel together or might meet at given places, hence introducing dependency between the traces of their location information.
Several previous works~\cite{cordiff8, cordata1, cordiff2, cordiff3, cordiff4, pufferfish_kifer, dependency_Liu} have considered cross-user dependency; however, this only has been for protecting queries on aggregated data, which is different than our application scenario.

{\color{black}We use the notion of ``perfect privacy'' and ``no privacy'', as introduced in our prior work~\cite{tifs2016, Nazanin_IT}, to evaluate the privacy of user traces.} The ``perfect privacy'' notion provides an information-theoretic guarantee on privacy in the presence of a strong adversary who has complete knowledge on users' prior data traces. {\color{black} On the opposite extreme is the notion of `no privacy''. It means there exists an algorithm for the adversary to estimate the actual data points of users with diminishing error probability.}
Through a series of work~\cite{Nazanin_IT,nazanin_ISIT2017,ciss2017,ciss2018, tifs2016,Nazanin_CISS2019}, we have derived the degree of user anonymization and data obfuscation required to obtain perfect privacy\textemdash assuming that the data traces of different users are \emph{independent} across users. Particularly, we evaluated the case of independent and identically distributed (i.i.d.) samples from a given user and the case when there is temporal dependency within the trace of a given user~\cite{Nazanin_IT,nazanin_ISIT2017} (but independent across users). In this work, we expand our study to the case where there is dependency between the data traces of users. That is, we investigate how privacy is affected by the presence of dependency between the data traces of users when anonymization and obfuscation techniques are used. {\color{black} We show that dependency significantly reduces the privacy of users. Specifically, we show that the same anonymization and obfuscation levels that could produce perfect privacy for independent users result in no privacy for dependent users. Thus, in the presence of inter-user dependency, we need to employ much stronger anonymization and obfuscation compare to the case data traces of different users are independent.}

We model dependency between user traces with an association graph, where the presence of an edge between the vertices corresponding to a pair of users indicates a non-zero dependency between their data traces. We employ standard concentration inequalities to demonstrate that the adversary can readily determine this association graph. Using this association graph and statistical data about the users, the adversary can attempt to identify users, and we demonstrate that this provides the adversary with a significant advantage versus the case when the data traces of different users are independent of one another. This suggests that, unless additional countermeasures are employed, the results of~\cite{Nazanin_IT,nazanin_ISIT2017,ciss2017,tifs2016} for independent traces are optimistic when user traces are dependent. We next consider the effectiveness of countermeasures. First, we argue that adding independent obfuscation to user data points is often ineffective in improving the privacy of (dependent) users. Next, we demonstrate that, if users with dependent traces can jointly design their obfuscation, user privacy can be significantly improved.

{\color{black}
A related but parallel approach to our study is graph alignment in which the edge set is sampled at random. Graph alignment is the problem of finding a matching between the vertices of the two graphs that matches, or aligns, many edges of the first graph with edges of the second graph. Shirani et. al.~\cite{shirani1, Shirani2018TypicalityMF} and Cullina et. al.~\cite{negar2} have done significant work on graph alignment. Although the graph alignment problem looks similar to our problem on the surface, there exist notable differences between the two. First, in Shirani et. al.'s work~\cite{shirani1, Shirani2018TypicalityMF, shirani3}, graphs are generated using a model which is sampled at random from a probability distribution, while here the association graph is deterministic, as it is based on the dependency between data traces of users. Consequently, Shirani et. al.~\cite{shirani1, Shirani2018TypicalityMF, shirani3} used a completely different approach and solution to de-anonymize users. In other words, they have not used the probability distribution of the data traces of each user to break anonymization, while here the probability distribution of the data trace of each user is a key characteristic which helps the adversary to break users' privacy. Finally, Shirani et. al.~\cite{shirani1, Shirani2018TypicalityMF} considered discrete values for the correlation between users and used them to de-anonymize the graph, while here the correlation between users have continuous values and the adversary does not have access to the exact value of them.

~\cite{negar2,negar3,negar4} considered the graph alignment for two correlated graphs, while here we assume the adversary has the association graph and tries to reconstruct it from the anonymized and obfuscated data traces. Thus, in our work, the adversary has two identical graphs and their goal is to identify all of the users based on the observed data and their statistical knowledge of users. Also, Cullina et. al.~\cite{negar2} considered fractional matching, while here the adversary can identify not only all of the users but also the data points of each user at all time with small error probability.
~\cite{gr1,gr2,gr3,gr4} studied matching of non-identical pairs of correlated Erd\"{o}s-R\'{e}nyi graphs.

Also, graph isomorphism studied in~\cite{iso1,iso2,iso3,Bollobs2001RandomG} is an instance of the matching problem where the two graphs are identical copies of one another.~\cite{Bollobs2001RandomG} studied different algorithms such as maximum degree algorithms to match two identical graphs for the case where each edge of the graph has a fixed probability of being present or absent which is in the range of $\left[\omega\left(\log n /n^{\frac{1}{5}}\right), 1-\omega\left(\log n /n^{\frac{1}{5}}\right)\right]$, where $n$ is the number of vertices in the graph. Here, the approach of our work is completely different, as the adversary uses probability distributions of users' data traces to reconstruct the association graph. After reconstruction of the association graph, the adversary uses the size of each disjoint group to identify all of the members.

In summary, although matching (alignment) between graphs can be considered as a part of our analysis, the analysis based on the users' data traces and the statistical knowledge of the adversary is a key part of this paper which distinguishes it from previous works on graph alignment.

}

The rest of this paper is organized as follows. In Section~\ref{sec:framework} we present the model and metrics considered in this work. In Sections~\ref{anon} and~\ref{obfs}, we show dependency between users' traces degrades privacy. {\color{black}In Section~\ref{appendixA}, we discuss how our methodology can be applied to a more general setting for the association graph.} In Section~\ref{perfect}, we propose a method to {\color{black}improve} privacy in the case when there exists inter-user dependency. Finally, Section~\ref{conclusion} presents the conclusions and ideas for continuing work.

\subsection{Summary of the Results}

Consider a setting with $n$ total users. As in our previous work~\cite{Nazanin_IT}, privacy depends on two parameters: (1) $m=m(n)$, the number of data points after which the pseudonyms of users are changed in the anonymization technique, i.e., smaller $m$ implies higher levels of anonymization; and (2) $a_n$, which indicates the amplitude of the obfuscation noise, i.e., larger $a_n$ implies higher levels of obfuscation.

When there are a large number of users in the setting ($n \to \infty$) and each user's dataset is governed by an i.i.d.\ process with $r$ possible values for each data point (e.g., $r$ possible locations), we obtain a no-privacy region in the $m(n)-a_n$ plane. Figure~\ref{fig:region1} shows the no-privacy region for the case when there exists inter-user dependency, and Figure~\ref{fig:region2} shows the no-privacy region when the users' traces are independent across users. There exists a larger no-privacy region in the presence of inter-user dependency; therefore, we find that dependency between users weakens their privacy.

In addition, for the case where users' datasets are governed by an irreducible and aperiodic Markov chains with $r$ states and $|E|$ edges, we obtain similar results, again showing that inter-user dependency degrades user privacy.

{\color{black} Note that for only anonymization case, an initial extension in Gaussian case with known covariance matrix is also presented in~\cite{WCNC2019}. }

\begin{figure*}[t!]
	\centering
	\subfloat[The dependent case.]{
		\includegraphics[width=0.47\columnwidth]{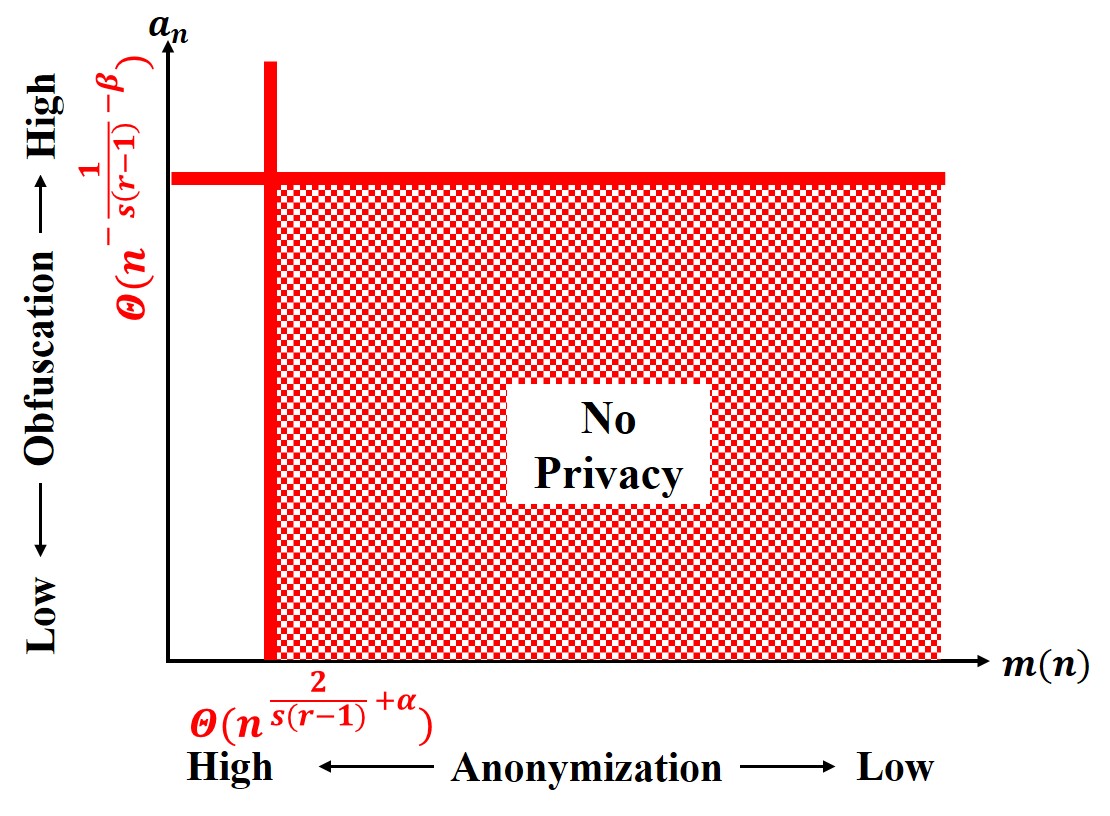}
		\label{fig:region1}
	}
	\subfloat[The independent case.]{
		\includegraphics[width=0.47\columnwidth]{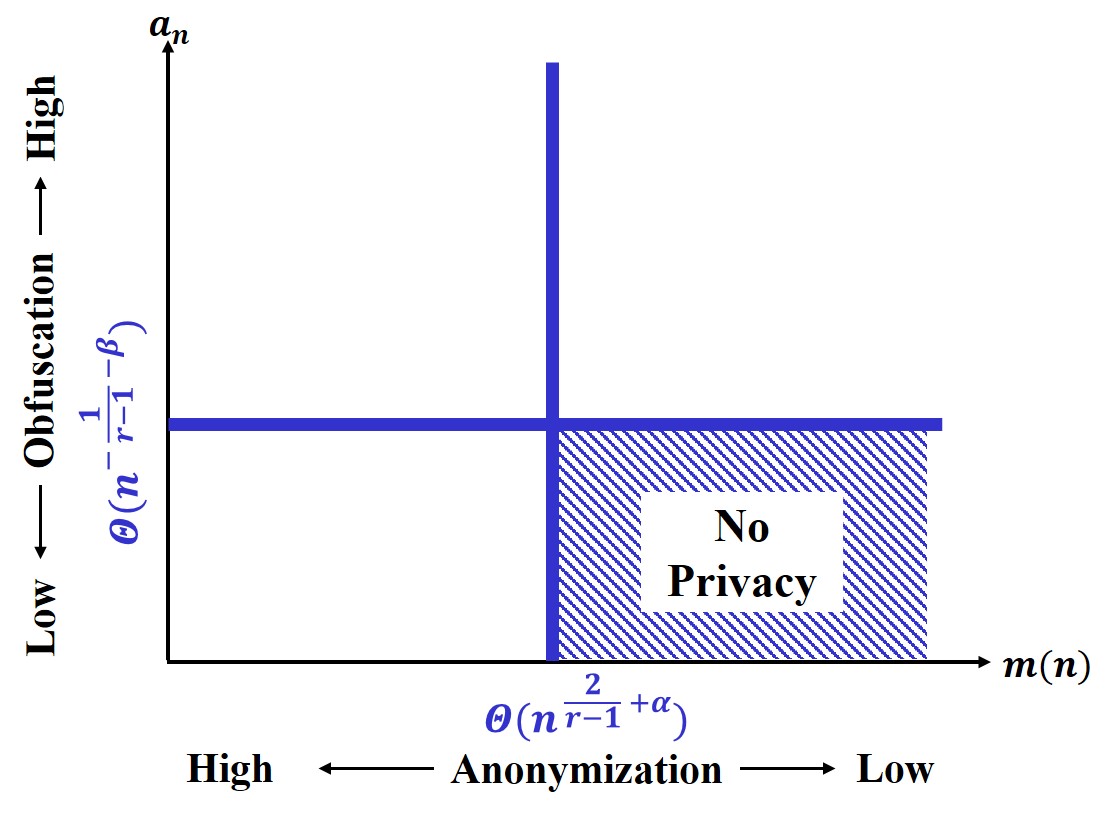}
		\label{fig:region2}
	}
	\caption{Representations of the no-privacy region in the case of dependent and independent users. Note that $m(n)$ is the number of the adversary's observations per user (degree of anonymization), and $a_n$ is the amount of noise level (degree of obfuscation). Here, the size of the group of users whose data traces are dependent is equal to $s$.}
	\label{fig:region}
\end{figure*}

\section{Framework}
	\label{sec:framework}
	Here, we employ a similar framework to~\cite{tifs2016,Nazanin_IT}. The system has $n$ users, and $X_u(k)$ is the data point of user $u$ at time $k$. Our main goal is protecting $X_u(k)$ from a strong adversary who has full knowledge of the (unique) marginal probability distribution function of the data points of each user based on previous observations or other sources. In order to achieve data privacy for users, both anonymization and obfuscation techniques can be used as shown in Figure~\ref{fig:xyz}. In Figure~\ref{fig:xyz}, $Z_u(k)$ shows the (reported) data point of user $u$ at time $k$ after applying obfuscation, and $Y_u(k)$ shows the (reported) data point of user $u$ at time $k$ after applying anonymization to $Z_u(k)$. Let $m=m(n)$ be the number of data points after which the pseudonyms of users are changed using anonymization. To break obfuscation and anonymization, the adversary tries to estimate $X_u(k)$, $k=1, 2, \cdots, m$, from $m$ observations per user by matching the sequence of observations to the known statistical characteristics of the users.
	\begin{figure}[h]
		\centering
		\includegraphics[width = 0.7\linewidth]{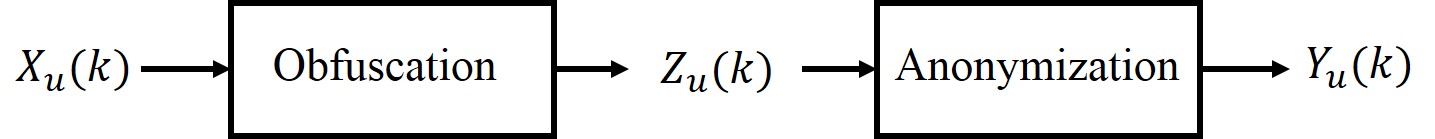}
		\caption{Applying obfuscation and anonymization techniques to the users' data points.}
		\label{fig:xyz}
	\end{figure}
	Let $\textbf{X}_u$ be the $m \times 1$ vector containing the data points of user $u$, and $\textbf{X}$ be the $m \times n$ matrix with the $u^{th}$ column equal to $\textbf{X}_u$:
	\[\textbf{X}_u = \begin{bmatrix}
	X_u(1) \\ X_u(2) \\ \vdots \\X_u(m) \end{bmatrix} , \ \ \ \textbf{X} =\left[\textbf{X}_{1}\ \ \textbf{X}_{2}\ \ \cdots \ \ \textbf{X}_{n}\right].
	\]

	\hspace{-0.18 in}\textbf{{Data Points Model:}} {\color{black} Here, we assume two different models for users' data points: in the first case, we assume the sequence of data for any individual user is modeled by i.i.d.\ which could apply directly to data that is sampled at a low rate. In addition, understanding the i.i.d.\ case can also be considered the first step toward understanding the more complicated case where there is temporal dependency. In the second case, we assume the data trace of any individual users is governed by Markov chain in which each sample of users' data is dependent over time.}
	
	We also assume users' data points can have one of $r$ possible values $(0, 1, \cdots, r-1)$. Thus, according to a user-specific probability distribution $(\textbf{p}_u)$, $X_u(k)$ is equal to a value in $\left\{0,1, \cdots, r-1 \right\}$ at any time. Note $p_u(i)$ is the probability of user $u$ having the data value $i$, so
	\[\textbf{p}_u= \begin{bmatrix}
	p_u(1) \\ p_u(2) \\ \vdots \\p_u(r-1) \end{bmatrix} , \ \ \text{for each } u\in \{1, 2, \cdots, n\}.
	\]
	We also assume $\textbf{p}_u$'s are drawn independently from some continuous density function, $f_\textbf{P}(\textbf{p}_u)$, which has support on a subset of the $(0,1)^{r-1}$ hypercube.
	Note these user-specific probability distributions, i.e., $\textbf{p}_u$'s, are known to the adversary and form the basis upon which they perform (statistical) matching.
	
	\hspace{-0.18 in}\textbf{{Association Graph:}}
	An association graph or dependency graph is an undirected graph representing dependency of the data of users with each other. Let $G(\mathcal{V},F)$ denote the association graph with set of nodes $\mathcal{V}$, $(|\mathcal{V}|=n)$, and set of edges $F$. Two vertices (users) are connected if their data sets are dependent. More specifically,
	\begin{itemize}
		\item $(u,u') \notin F$ iff $I(X_u(k); X_{u'}(k))=0,$
		\item $(u,u') \in F$ iff $I (X_u(k);X_{u'}(k))>0,$
	\end{itemize}
	where $I\left(X_u(k); X_{u'}(k)\right)$ is the mutual information between the $k^{th}$ data point of user $u$ and user $u'$\footnote{{\color{black}It is worth noting that the mechanism that determines the joint distribution of $X_u(k)$ and $X_{u'}(k)$ does not affect the results of this paper as long as the marginal densities of $X_u(k)$'s (i.e., $\textbf{p}_u$'s) are drawn independently from $f_\textbf{P}(\textbf{p}_u)$.}}.
	
	\hspace{-0.18 in}\textbf{{Obfuscation Model:}} Obfuscation perturbs the users' data points~\cite{shokri2012protecting, gruteser2003anonymous, bordenabe2014optimal}; {\color{black} in other words, the obfuscation can be viewed as passing data through a noisy channel.} Normally, in such settings, each user has only limited knowledge of the characteristics of the overall population. {\color{black}Thus, usually, a simple distributed method in which the data points of each user are reported with error with a certain probability is employed~\cite{randomizedresponse}. Note that this probability itself is generated randomly for each user.} Let $\textbf{Z}_u$ be the vector that contains
	the obfuscated version of user $u$'s data points, and ${\textbf{Z}}$ be the collection of ${\textbf{Z}}_u$ for all users,
	\[{\textbf{Z}}_u = \begin{bmatrix}
	{Z}_u(1) \\ {Z}_u(2) \\ \vdots \\ {Z}_u(m) \end{bmatrix} , \ \ \ {\textbf{Z}} =\left[ {\textbf{Z}}_{1} \ \ {\textbf{Z}}_{2}\ \ \cdots \ \ {\textbf{Z}}_{n}\right].
	\]
	
	Here, we define the \emph{asymptotic noise level} for an obfuscation technique. Loosely speaking, the asymptotic noise level of obfuscation is the highest probable percentage of data points that are corrupted. More precisely,
	for a subset of users $\bm{\mathfrak{U}}$, let $X_u(k)$ be the actual data point of user $u$ at time $k$, $u \in \bm{\mathfrak{U}}$, $k \in \{1, 2, \cdots, m\}$, and let ${Z}_u(k)$ be the obfuscated (noisy) version of $X_u(k)$. Define
	\[A_m(u) = \frac{\abs*{\left\{k:{Z}_u(k) \neq X_u(k)\right\}}}{m}.\]
	Then, the asymptotic noise level for user $u$ is defined as follows:
	\[a(u)=\inf \bigg\{ \tau \geq 0: \mathbb{P}\left(A_m(u)>\tau \right) \to 0 \text{ as } m \to \infty\bigg\}.\]
	Also, define
	\[A_m= \frac{\sum\limits_{u \in \bm{\mathfrak{U}}} \abs*{\left\{k:{Z}_u(k) \neq X_u(k)\right\}}}{m|\bm{\mathfrak{U}}|},\]
	then, the asymptotic noise level for the entire dataset is
	\[a =\inf \bigg\{ \tau \geq 0: \mathbb{P}\left(A_m>\tau \right) \to 0 \text{ as } m \to \infty\bigg\}.\]
	Note that while the above definition is given for a general case required in Section~\ref{perfect}, in practice we often use simple obfuscation techniques that employ i.i.d.\ noise sequences. Then, by the Strong Law of Large Number (SLLN),
	\[\frac{\abs*{\left\{k:{Z}_u(k) \neq X_u(k)\right\}}}{m} \xrightarrow{\text{a.s.}} \mathbb{P}\left({Z}_u(k) \neq X_u(k)\right),\]
	and for any $k$,
	\[a(u)=\mathbb{P}\left({Z}_u(k) \neq X_u(k)\right).\]
	
	\hspace{-0.18 in}\textbf{{Anonymization Model:}} In the anonymization technique, the identity of the users is perturbed~\cite{1corser2016evaluating,hoh2005protecting,freudiger2007mix, ma2009location, shokri2011quantifying2, Naini2016,soltani2017towards}. Anonymization is modeled by a random permutation $\Pi$ on the set of $n$ users. Let ${\textbf{Y}}_u$ be the vector which contains the anonymized version of ${\textbf{Z}}_u$, and ${\textbf{Y}}$ is the collection of ${\textbf{Y}}_u$ for all users, thus
	\begin{align}
	\no {\textbf{Y}} &=\textrm{Perm}\left({\textbf{Z}}_{1}, {\textbf{Z}}_{2}, \cdots, {\textbf{Z}}_{n}; \Pi \right) \\
	\nonumber &=\left[ {\textbf{Z}}_{\Pi^{-1}(1)} \ \ {\textbf{Z}}_{\Pi^{-1}(2)}\ \ \cdots \ \ {\textbf{Z}}_{\Pi^{-1}(n)}\right ] \\
	\nonumber &=\left[ {\textbf{Y}}_{1}\ \ {\textbf{Y}}_{2}\ \ \cdots \ \ {\textbf{Y}}_{n}\right], \ \
	\end{align}
	where $\textrm{Perm}( \ . \ , \Pi)$ is the permutation operation with permutation function $\Pi$. As a result, ${\textbf{Y}}_{u} = {\textbf{Z}}_{\Pi^{-1}(u)}$ and ${\textbf{Y}}_{\Pi(u)} = {\textbf{Z}}_{u}$.

	\hspace{-0.18 in}\textbf{{Adversary Model:}} We assume the adversary has full knowledge of the marginal probability distribution function of each of the users on $\{0,1,\ldots,r-1\}$. As discussed in the data points models in succeeding sections, the parameters $\mathbf{p}_u$, $u=1, 2, \cdots, n$ are drawn independently from a continuous density function, $f_{\textbf{P}}(\textbf{p}_u)$, which
	has support on a subset of a given hypercube. The density $f_{\textbf{P}}(\textbf{p}_u)$ might be unknown to the
	adversary, so all that is assumed here is that such a density exists. From the results of the paper, it will be evident that knowing or not knowing $f_{\textbf{P}}(\textbf{p}_u)$ does not change the results asymptotically.

	The adversary knows the anonymization mechanism but does not know the realization of the random permutation. The adversary also knows the obfuscation mechanism but does not know the realization of the noise parameters. And finally, the adversary knows the association graph $G(\mathcal{V},F)$, but does not necessarily know the exact nature of the dependency. That is, while the adversary knows the marginal distributions $X_u(k)$ as well as which pairs of users have strictly positive mutual information, they might not know the joint distributions or even the values of the mutual information $I(X_u(k); X_{u'}(k))$.
	
	It is critical to note that the adversary does not have any other auxiliary information or side information about users' data.

	We adopt the definitions of perfect privacy and no privacy from~\cite{tifs2016,Nazanin_IT}:
	\begin{define}
		For an algorithm for the adversary that tries to estimate the actual data point of user $u$ at time $k$, define the error probability as
		\[\mathbb{P}_e(u,k)= \mathbb{P}\left(\widetilde{X_u(k)} \neq X_u(k)\right),\]
		where $X_u(k)$ is the actual data point of user $u$ at time $k$, $\widetilde{X_u(k)}$ is the adversary's estimated data point of user $u$ at time $k$. Now, define ${\cal E}$ as the set of all possible adversary's estimators. Then, user $u$ has \emph{no privacy} at time $k$, if and only if for large enough $n$,
		\[
		\no \mathbb{P}^{*}_e(u,k)= \inf_{\cal E} {\mathbb{P}\left(\widetilde{X_u(k)} \neq X_u(k)\right)} \to 0.
		\]
		Hence, a user has \textit{no privacy }if there exists an algorithm for the adversary to estimate $X_u(k)$ with diminishing error probability as $n$ goes to infinity.
	\end{define}
	\begin{define}
		User $u$ has \emph{perfect privacy} at time $k$ if and only if
		\begin{align}
		\no \lim\limits_{n\to \infty} \mathbb{I} \left(X_u(k);{\mathbf{Y}}\right) =0,
		\end{align}
		where $\mathbb{I}\left(X_u(k);{\textbf{Y}}\right)$ denotes the mutual information between the data point of user $u$ at time $k$ and the collection of the adversary's observations for all the users.
	\end{define}

	{\color{black}
		\hspace{-0.18 in}\textbf{{Discussion $1$}:}
		The studied anonymization and obfuscation mechanisms improve user privacy at the cost of user utility. An anonymization mechanism works by frequently changing the pseudonym mappings of users to reduce the length of time series that can be exploited by statistical analysis. However, such frequent changes may also degrade the usability of the underlying application by concealing the temporal relation between a user's data points, e.g., for a dining recommendation system that makes suggestions based on the dining history of its users. On the other hand, obfuscation mechanisms work by adding noise to users' collected data, e.g., location information. The added noise may also degrade the utility of the system.
		In this work, our goal is studying the level of anonymization and obfuscation one should employ to ensure privacy with the minimum loss in utility. In other words, we derive the optimal frequency of changing user pseudonyms during anonymization, and the optimal extent of noise added by an obfuscation mechanism while guaranteeing privacy.
		
		However, like similar works in privacy~\cite{matching,1corser2016evaluating,hoh2005protecting,freudiger2007mix, ma2009location, Naini2016}, we consider the quantification of utility orthogonal to our privacy evaluations for two reasons: (1) the implications of our PPMs on utility do not impact our privacy analysis, and (2) unlike privacy, the desired level of utility is application specific.
		
	}
	{\color{black}
		\hspace{-0.18 in}\textbf{{Discussion $2$}:}
		Note that there are two kinds of dependency:
		\begin{itemize}
		\item Intra-user dependency: In this case, there is temporal and spatial dependency within data traces of one user. For example, when the data trace of a user is governed by a Markov chain model, the Markov chain characterizes temporal intra-user dependency. Thus, the adversary can benefit from this dependency and break the users' privacy. According to the results obtained in~\cite{Nazanin_IT}, if data traces of the users are governed by i.i.d. statistics, we can show users have no privacy iff $m=\Omega(n^{\frac{2}{r-1}+\alpha})$ and $a_n=O(n^{-\frac{1}{r-1}-\beta})$; however, if the data trace of users is governed by a Markov chain, we can show users have no privacy iff $m=\Omega(n^{\frac{2}{|E|-r}+\alpha})$ and $a_n=O(n^{-\frac{1}{|E|-r}-\beta})$. Most of the previous work~\cite{corgeo,corPLP,cordummy,cordiff,Negar,negar_database,shirani_database} that considers intra-user dependency assumes independence between the traces of different users, which is different from our work as described below.
	
	\item Inter-user dependency: Here, there exists dependency between the traces of different users. This is the main focus of our work. First, we demonstrate that the adversary can readily identify the association graph of the obfuscated and anonymized version of the data, revealing which user data traces are dependent. Next, we demonstrate that the adversary can use this association graph along with their statistical knowledge and the observed obfuscated and anonymized sequences to break user privacy with significantly shorter traces than in the case of independent users, and that obfuscating data traces independently across users is often insufficient to remedy such leakage.

		\end{itemize}
	}

	{\color{black}
	\hspace{-0.18 in}\textbf{{Discussion $3$}:}
	The general models of multi-user in classical information theory assume a fixed number of users and the fundamental limits of communication systems are characterized by studying the asymptotic
	limits of infinite coding blocklength~\cite{gupta2000capacity,verdu1999spectral,guo2005randomly,li2017bayesian}. However, the emerging Internet of Things enables an ever-increasing number of users to share and access information on a large scale, i.e., applications, such as ride sharing and dining recommendation applications, the number of users is large. Thus, the number of users is allowed to grow with the blocklength~\cite{chen2014many,chen2017capacity,guo2008co}, and our goal is for the asymptotic results to provide a good insight to the performance of the privacy-preserving mechanisms for these applications.
 	Moreover, both of the privacy definitions given above (perfect privacy and no privacy) are asymptotic in the number of users $(n \to \infty)$, which allows us to find clean analytical results for the fundamental limits.
	}
	\section{Impact of Dependency on Privacy Using Anonymization}
	\label{anon}
	
	In this section, we consider only anonymization and thus the obfuscation block in Figure~\ref{fig:xyz} is not present. In this case, the adversary's observation ${\textbf{Y}}$ is the anonymized version of $\textbf{X}$; thus
	\begin{align}
	\no {\textbf{Y}} &=\textrm{Perm}\left(\textbf{X}_{1}, \textbf{X}_{2}, \cdots, \textbf{X}_{n}; \Pi \right) \\
	\nonumber &=\left[ \textbf{X}_{\Pi^{-1}(1)}\ \ \textbf{X}_{\Pi^{-1}(2)}\ \ \cdots \ \ \textbf{X}_{\Pi^{-1}(n)}\right ] \\
	\nonumber &=\left[ {\textbf{Y}}_{1}\ \ {\textbf{Y}}_{2}\ \ \cdots\ \ {\textbf{Y}}_{n}\right]. \ \
	\end{align}
	
	\subsection{{\color{black}$r$-State} i.i.d.\ Model}
	\label{iidr}
	
	There is potentially dependency between the data of different users, but we assume here that the sequence of data for any individual user is i.i.d.. We also assume users' data points can have $r$ possibilities $\left(0, 1, \cdots, r-1\right)$, and $p_u(i)$ is the probability of user $u$ having the data value $i$, i.e., $p_u(i)=\mathbb{P}\left(X_u(k)=i\right)$, for $k=1, 2, \cdots, m$. We define the vectors $\textbf{p}_u$ and $\textbf{p}$ as
	\[\textbf{p}_u= \begin{bmatrix}
p_u(1) \\ p_u(2) \\ \vdots \\p_u(r-1) \end{bmatrix}, \ \ \ \textbf{p} =\left[\textbf{p}_{1}\ \ \textbf{p}_{2}\ \ \cdots\ \ \textbf{p}_{n}\right].
\]
	We also assume $\textbf{p}_u$'s are drawn independently from some continuous density function, $f_\textbf{P}(\textbf{p}_u)$, which has support on a subset of the $(0,1)^{r-1}$ hypercube. In particular, define the range of the distribution as
	\begin{align}
	\no \mathcal{R}_{\textbf{P}} &= \left\{ (x_1, x_2, \cdots, x_{r-1}) \in (0,1)^{r-1}: x_i > 0, x_1+x_2+\cdots+x_{r-1} < 1\right\},
	\end{align}
	then, we assume there are $\delta_1, \delta_2>0$ such that:
	\begin{equation}
	\begin{cases}
	\no \delta_1\leq f_{\textbf{P}}(\mathbf{p}_u) \leq \delta_2, & \textbf{p}_u \in \mathcal{R}_{\textbf{P}}.\\
	f_{\textbf{P}}(\mathbf{p}_u)=0, & \textbf{p}_u \notin \mathcal{R}_{\textbf{P}}.
	\end{cases}
	\end{equation}
	
	The adversary knows the values of $\textbf{p}_u$, $u=1, 2, \cdots, n$, and uses this knowledge to match the observed traces to the users. We will use capital letters (i.e., $\textbf{P}_u$) when we are referring to the random variable, and use lower case (i.e., $\textbf{p}_u$) to refer to the realization of $\textbf{P}_u$.
	
	A vector containing the permutation of those probabilities after anonymization is
	\begin{align}
	\no \textbf{W} &=\textrm{Perm}\left({\textbf{P}}_{1}, {\textbf{P}}_{2}, \cdots, {\textbf{P}}_{n}; \Pi \right) \\
	\nonumber &=\left[ {\textbf{P}}_{\Pi^{-1}(1)}\ \ {\textbf{P}}_{\Pi^{-1}(2)}\ \ \cdots\ \ {\textbf{P}}_{\Pi^{-1}(n)}\right ] \\
	\nonumber &=\left[\textbf{W}_{1}\ \ \textbf{W}_{2}\ \ \cdots\ \ \textbf{W}_{n}\right ], \ \
	\end{align}
	where $\textbf{W}_{u} = {{\textbf{P}}}_{\Pi^{-1}(u)}$ and $\textbf{W}_{\Pi(u)} = {{\textbf{P}}}_{u}$.
	
	%
	
	{\color{black}
		In this case, we can say:
		\begin{itemize}
			\item $(u,u') \notin F$ iff for all $i,j \in\{0, 1, \cdots, r-1\}$, $p_{uu'}(i,j)=p_u(i)p_{u'}(j),$
			\item $(u,u') \in F$ iff for at least one pair of $i,j \in\{0, 1, \cdots, r-1\}$, $p_{uu'}(i,j)\neq p_u(i)p_{u'}(j),$
		\end{itemize}
	where $p_{uu'}(i,j)=\mathbb{P}\left(X_u(k)=i, X_{u'}(k)=j\right)$, $p_{u}(i)=\mathbb{P}\left(X_u(k)=i\right)$, and $p_{u'}(j)=\mathbb{P}\left(X_{u'}(k)=j\right)$.
		Note that the adversary knows the association graph $G(\mathcal{V},F)$, but does not necessarily know the joint probability distribution for each specific $(u,u') \in F$.
		The adversary observes the anonymized version of users' data traces and combines them with their full knowledge of the marginal probability distribution of each of the users and the structure of the whole association graph to break users' privacy with arbitrarily small error probability.

		In the first step, we show that the adversary can reliably reconstruct the entire association graph for \textit{the anonymized version of the data} (i.e., the observed data traces) with relatively few observations.
		
		\begin{lem}
			\label{lem1}
			Consider a general association graph $G(\mathcal{V},F)$. If the adversary obtains $m=(\log n)^3$ anonymized observations per user, they can construct $\widetilde{G}=\widetilde{G}(\widetilde{\mathcal{V}}, \widetilde{F})$, where $\widetilde{\mathcal{V}}=\{\Pi(u):u \in \mathcal{V}\}=\mathcal{V}$, such that with high probability, for all $u, u' \in \mathcal{V}$; $ (u,u')\in F$ iff $\left(\Pi(u),\Pi(u')\right)\in \widetilde{F}$. We write this statement as $\mathbb{P}(\widetilde{G}\simeq G)\to 1$, i.e., Graph $G$ and Graph $\widetilde{G}$ are isomorphic with high probability.
		\end{lem}
	
\begin{proof}
	For $u, u' \in \{1, 2, \cdots, n\}$, we normally write $v=\Pi(u)$ and $v'=\Pi(u')$. We provide an algorithm for the adversary that with high probability obtains all edges of $F$ correctly. First, for all $v,v' \in \{1, 2, \cdots, n\}$, and all $i,j \in \{0, 1, \cdots, r-1\}$ the adversary computes
	$\widetilde{p_{vv'}(i,j)}$, $\widetilde{p_v(i)}$, and $\widetilde{p_{v'}(j)}$ as follow:
	\begin{align}
	\widetilde{p_{vv'}(i,j)}=\frac{\abs*{\left\{k: Y_v(k)=i,Y_{v'}(k)=j\right\}}}{m}=\frac{\widehat{M}_{vv'}(i,j)}{m},\ \
	\label{puu'-tilde}
	\end{align}
		\begin{align}
	\widetilde{p_{v}(i)}=\frac{\abs*{\left\{k: Y_v(k)=i\right\}}}{m}=\frac{\widehat{M}_{v}(i)}{m},\ \
	\label{pu-tilde13}
	\end{align}
	\begin{align}
	\widetilde{p_{v'}(j)}=\frac{\abs*{\left\{k: Y_{v'}(k)=j\right\}}}{m}=\frac{\widehat{M}_{v'}(j)}{m},\ \
	\label{pu-tilde12}
	\end{align}
	where
	\[\widehat{M}_{vv'}(i,j)= \abs*{\left\{k: Y_v(k)=i, Y_{v'}(k)=j\right\}}.\]
	\[\widehat{M}_{v}(i)= \abs*{\left\{k: Y_v(k)=i\right\}}.\]
	\[\widehat{M}_{v'}(j)= \abs*{\left\{k: Y_{v'}(k)=j\right\}}.\]

	After observing $m=\left(\log n\right)^3$ data points per user and computing the above expressions, the adversary constructs $\widetilde{G}$ in the following way:
	
	\begin{itemize}
		\item If $\abs*{\frac{\widehat{M}_{vv'}(i,j)}{m}-\frac{\widehat{M}_{v}(i)}{m}\frac{\widehat{M}_{v'}(j)}{m}}\leq {m^{-\frac{1}{5}}}$ for all $i,j \in\{0, 1, \cdots, r-1\}$, then $\left(v,v'\right)\notin \widetilde{F}.$
		
		\vspace{0.2 in}
		\item If $\abs*{\frac{\widehat{M}_{vv'}(i,j)}{m}-\frac{\widehat{M}_{v}(i)}{m}\frac{\widehat{M}_{v'}(j)}{m}}\geq {m^{-\frac{1}{5}}}$ for at least one pair of $i,j \in\{0, 1, \cdots, r-1\}$, then $\left(v,v'\right)\in \widetilde{F}.$
	\end{itemize}
	We show the above method yields $\mathbb{P}(\widetilde{G}\simeq G)\to 1$ as $n \to \infty$, as follows.
	Note
	\[\widehat{M}_{vv'}(i,j) \sim { \Bino } (m, w_{vv'}(i,j)),\] \[\widehat{M}_{v}(i) \sim { \Bino } (m, w_{v}(i)),\]\[\widehat{M}_{v'}(j) \sim { \Bino } (m, w_{v'}(j)),\] 
	where $w_{vv'}(i,j)=\mathbb{P}\left(Y_v(k)=i, Y_{v'}(k)=j\right)=p_{\Pi^{-1}(v)\Pi^{-1}(v')}(i,j)$, $w_{v}(i)=\mathbb{P}\left(Y_v(k)=i\right)=p_{\Pi^{-1}(v)}(i)$, and $w_{v'}(j)=\mathbb{P}\left(Y_{v'}(k)=j\right)=p_{\Pi^{-1}(v')}(j)$. 
	Now, for all $v,v' \in\{1,2, \cdots, n\}$ and all $i,j \in \{0,1,\cdots, r-1\}$, define
	\[ \mathcal{J}_{vv'}(i,j)=\left\{\abs*{\frac{M_{vv'}(i,j)}{m}-w_{vv'}(i,j)}\geq m^{-\frac{1}{4}}\right\},\]	
	then, for all $v , v' \in \{1,2, \cdots, n \}$ and all $i, j \in \{0,1,\cdots, r-1\}$, the Chernoff bound yields
	\begin{align}
	\no \mathbb{P}\left(\mathcal{J}_{vv'}(i,j)\right)&\leq 2e^{-\frac{\sqrt{m}}{3w_{vv'}(i,j)}}\leq 2e^{-\frac{\sqrt{m}}{3}}.
	\end{align}
	Similarly, for all $v,v' \in\{1,2, \cdots, n\}$ and all $i,j \in \{0,1,\cdots, r-1\}$, define
	\[\mathcal{J}_{v}(i)=\left\{\abs*{\frac{M_{v}(i)}{m}-w_{v}(i)}\geq m^{-\frac{1}{4}}\right\},\]
	\[\mathcal{J}_{v'}(j)=\left\{\abs*{\frac{M_{v'}(j)}{m}-w_{v'}(j)}\geq m^{-\frac{1}{4}}\right\},\]
	then, the Chernoff bound yields,
	\begin{align}
	\no \mathbb{P}\left(\mathcal{J}_{v}(i)\right) \leq 2e^{-\frac{\sqrt{m}}{3}},\ \
	\end{align}
	\begin{align}
	\no \mathbb{P}\left(\mathcal{J}_{v'}(j)\right) \leq 2e^{-\frac{\sqrt{m}}{3}},\ \
	\end{align}
	Now, by employing a union bound, for all $v,v' \in\{1,2, \cdots, n\}$ and all $i,j \in \{0,1,\cdots, r-1\}$, we have
	\begin{align}
	\no \mathbb{P}\left(\mathcal{J}_{vv'}(i,j) \cup \mathcal{J}_{v}(i) \cup \mathcal{J}_{v'}(j) \right)\leq 2\left(e^{-\frac{\sqrt{m}}{3}}+e^{-\frac{\sqrt{m}}{3}}+e^{-\frac{\sqrt{m}}{3}}\right)=6e^{-\frac{\sqrt{m}}{3}}.\ \
	\end{align}
	Then, by employing a union bound again,
	\begin{align}
	\no \mathbb{P}\left(\bigcup\limits_{v=1}^n \bigcup\limits_{v'=1}^n\bigcup\limits_{i=0}^{r-1} \bigcup\limits_{j=0}^{r-1}\left\{\mathcal{J}_{vv'}(i,j) \cup \mathcal{J}_{v}(i) \cup \mathcal{J}_{v'}(j)\right\}\right) &\leq \sum\limits_{v=1}^n \sum\limits_{v'=1}^n\sum\limits_{i=0}^{r-1} \sum\limits_{j=0}^{r-1} 6e^{-\frac{\sqrt{m}}{3}}\\
	\no & = 6n^2r^2e^{-\frac{\sqrt{m}}{3}}
	\\
	&= 6r^2\exp\left\{2 \log n-\frac{(\log n)^{\frac{3}{2}}}{3}\right\}\to 0,\ \
	\label{eq_1_1}
	\end{align}
	as $n \to \infty$.
Thus, (\ref{eq_1_1}) yields that with high probability, for all $v,v' \in \{1,2,\cdots, n\}$ and all $i, j \in \{0, 1,\cdots, r-1\}$, we have
	\begin{align}
	0\leq mw_{vv'}(i,j)-m^{\frac{3}{4}} \leq \widehat{M}_{vv'}(i,j) \leq mw_{vv'}(i,j)+m^{\frac{3}{4}}.\ \
	\label{M1}
	\end{align}
	\begin{align}
	0\leq mw_v(i)-m^{\frac{3}{4}} \leq \widehat{M}_v(i) \leq mw_v(i)+m^{\frac{3}{4}}.\ \
	\label{M2}
	\end{align}
	\begin{align}
	0\leq mw_{v'}(j)-m^{\frac{3}{4}} \leq \widehat{M}_{v'}(j) \leq mw_{v}(j)+m^{\frac{3}{4}}.\ \
	\label{M3}
	\end{align}
Let us define event $A_{vv'}(i,j)$ as the event that (\ref{M1}), (\ref{M2}), and (\ref{M3}) are all valid, thus, as shown in $(\ref{eq_1_1})$, we have
\begin{align}
\mathbb{P}\left(\bigcap\limits_{v=1}^n \bigcap\limits_{v'=1}^n\bigcap\limits_{i=0}^{r-1} \bigcap\limits_{j=0}^{r-1}\left\{A_{vv'}(i,j)\right\}\right) \to 1,\ \
\label{A}
\end{align}
as $n \to \infty$. Now, if $A_{vv'}(i,j)$ is true for some $v,v' \in \{1,2,\cdots, n\}$ and some $i, j \in \{0, 1,\cdots, r-1\}$, we have
\begin{align}
	\no \frac{\widehat{M}_{vv'}(i,j)}{m}-\frac{\widehat{M}_{v}(i)}{m}\frac{\widehat{M}_{v'}(j)}{m}&\leq\frac{mw_{vv'}(i,j)+m^{\frac{3}{4}}}{m}-\frac{mw_v(i)-m^{\frac{3}{4}}}{m}\frac{mw_{v'}(i)-m^{\frac{3}{4}}}{m}\\
	\no&= w_{vv'}(i,j)-w_v(i)w_{v'}(j)+m^{-\frac{1}{4}}+ (w_v(i)+w_{v'}(j))m^{-\frac{1}{4}}-m^{-\frac{1}{2}}\\
	&\leq w_{vv'}(i,j)-w_v(i)w_{v'}(j)+m^{-\frac{1}{4}}+ (w_v(i)+w_{v'}(j))m^{-\frac{1}{4}}+m^{-\frac{1}{2}}.\ \
	\label{eq1}
	\end{align}
	Similarly,
	\begin{align}
	\no \frac{\widehat{M}_{vv'}(i,j)}{m}-\frac{\widehat{M}_{v}(i)}{m}\frac{\widehat{M}_{v'}(j)}{m}&\geq \frac{mw_{vv'}(i,j)-m^{\frac{3}{4}}}{m}-\frac{mw_v(i)+m^{\frac{3}{4}}}{m}\frac{mw_{v'}(i)+m^{\frac{3}{4}}}{m}\\
	&= w_{vv'}(i,j)-w_v(i)w_{v'}(j)-m^{-\frac{1}{4}}- (w_v(i)+w_{v'}(j))m^{-\frac{1}{4}}-m^{-\frac{1}{2}}.\ \
	\label{eq2}
	\end{align}
	Thus, by using (\ref{eq1}) and (\ref{eq2}), we have
	\begin{align}
	\abs*{\left(\frac{\widehat{M}_{vv'}(i,j)}{m}-\frac{\widehat{M}_{v}(i)}{m}\frac{\widehat{M}_{v'}(j)}{m}\right)-\left(w_{vv'}(i,j)-w_v(i)w_{v'}(j)\right)}\leq (1+w_v(i)+w_{v'}(j))m^{-\frac{1}{4}}+m^{-\frac{1}{2}}.\ \
	\label{eq3}
	\end{align}
	
	Let us define event $B_{vv'}(i,j)$ as the event that (\ref{eq3}) is valid for $v$, $v'$, $i$, and $j$. We have shown, for all $v,v' \in \{1,2,\cdots, n\}$ and all $i, j \in \{0, 1,\cdots, r-1\}$, $A_{vv'}(i,j) \subseteq B_{vv'}(i,j)$, thus
$$\left\{\bigcap\limits_{v=1}^n \bigcap\limits_{v'=1}^n\bigcap\limits_{i=0}^{r-1} \bigcap\limits_{j=0}^{r-1}\left\{A_{vv'}(i,j)\right\} \right\} \subseteq \left\{\bigcap\limits_{v=1}^n \bigcap\limits_{v'=1}^n\bigcap\limits_{i=0}^{r-1} \bigcap\limits_{j=0}^{r-1}\left\{B_{vv'}(i,j)\right\} \right\}, $$
and as a result,
\begin{align}
\no \mathbb{P}\left(\bigcap\limits_{v=1}^n \bigcap\limits_{v'=1}^n\bigcap\limits_{i=0}^{r-1} \bigcap\limits_{j=0}^{r-1}\left\{B_{vv'}(i,j)\right\}\right) \geq \mathbb{P}\left(\bigcap\limits_{v=1}^n \bigcap\limits_{v'=1}^n\bigcap\limits_{i=0}^{r-1} \bigcap\limits_{j=0}^{r-1}\left\{A_{vv'}(i,j)\right\}\right).\ \
\end{align}
Thus, by using (\ref{A}), we have 
\begin{align}
\no \mathbb{P}\left(\bigcap\limits_{v=1}^n \bigcap\limits_{v'=1}^n\bigcap\limits_{i=0}^{r-1} \bigcap\limits_{j=0}^{r-1}\left\{B_{vv'}(i,j)\right\}\right) \to 1,\ \
\end{align}
as $n \to \infty$. Hence, with high probability, for all $v,v' \in \{1,2,\cdots, n\}$ and all $i, j \in \{0, 1,\cdots, r-1\}$, we have
\begin{align}
	\abs*{\left(\frac{\widehat{M}_{vv'}(i,j)}{m}-\frac{\widehat{M}_{v}(i)}{m}\frac{\widehat{M}_{v'}(j)}{m}\right)-\left(w_{vv'}(i,j)-w_v(i)w_{v'}(j)\right)}\leq (1+w_v(i)+w_{v'}(j))m^{-\frac{1}{4}}+m^{-\frac{1}{2}}.\ \
	\label{need1}
\end{align}
	

	Now, if $(u,u') \notin F$, then for all $i,j \in\{0, 1, \cdots, r-1\}$, we have $p_{uu'}(i,j)-p_u(i)p_{u'}(j)=0$, and as a result, $w_{vv'}(i,j)-w_v(i)w_{v'}(j)=0$. Thus, by using (\ref{need1}), we have
	\begin{align}
	\no \abs*{\frac{\widehat{M}_{vv'}(i,j)}{m}-\frac{\widehat{M}_{v}(i)}{m}\frac{\widehat{M}_{v'}(j)}{m}} \leq (1+w_v(i)+w_{v'}(j))m^{-\frac{1}{4}}+m^{-\frac{1}{2}},
	\end{align}
and as a result, for large enough $m$,
	\begin{align}
	\no \abs*{\frac{\widehat{M}_{vv'}(i,j)}{m}-\frac{\widehat{M}_{v}(i)}{m}\frac{\widehat{M}_{v'}(j)}{m}} \leq m^{-\frac{1}{5}}.
	\end{align}
Thus, we can conclude, $\left(v,v'\right) \notin \widetilde{F}$, and in other words, $\left(\Pi(u),\Pi(u')\right) \notin \widetilde{F}$. This is true with high probability, for all $u, u' \in \{1, 2, \cdots, n\}$ where $(u,u') \notin F.$ 

Similarly, if $(u,u') \in F$, there exists at least one pair of $i,j \in\{0, 1, \cdots, r-1\}$ with $p_{uu'}(i,j)- p_u(i)p_{u'}(j) \geq \epsilon-m^{-\frac{1}{4}}$ for a fixed value of $\epsilon$. Thus, there exists at least one pair of $i,j \in\{0, 1, \cdots, r-1\}$ with $w_{vv'}(i,j)- w_v(i)w_{v'}(j) \geq \epsilon-m^{-\frac{1}{4}}$. As a result, by using (\ref{need1}), for large enough $m$, we have
\begin{align}
\no \abs*{\frac{\widehat{M}_{vv'}(i,j)}{m}-\frac{\widehat{M}_{v}(i)}{m}\frac{\widehat{M}_{v'}(j)}{m}} \geq m^{-\frac{1}{5}}.
\end{align}
Thus, we can conclude, $\left(v,v'\right) \in \widetilde{F}$, and in other words, $\left(\Pi(u),\Pi(u')\right) \in \widetilde{F}$. Again, this is true with high probability, for all $u, u' \in \{1, 2, \cdots, n\}$ where $(u,u') \in F.$ 
	
Now, we can conclude, for large enough $n$, we have $\mathbb{P}\left(\widetilde{G}\simeq G\right)\to 1$, so the adversary can reconstruct the association graph of the anonymized version of the data with an arbitrarily small error probability. Note that reconstruction of the association graph does not require the adversary's knowledge about user statistics (i.e., the values of $\textbf{p}_u$'s).\end{proof}
}

%
The structure of the association graph $(G)$ can leak a lot of information. 
{\color{black} For the rest of this section, we consider a graph structure shown in Figure~\ref{fig:graph}. In this structure, $G_l$, the subgraph consisting of the users the adversary wants to de-anonymize, has $s_l$ vertices and is disjoint from the reminder of the association graph. So, we can write $G_l(\mathcal{V}_l, F_l)$, where $|\mathcal{V}_l|=s_l$. Note that we assume $s_l$ is finite. In particular, the subgraph $G_l$ can be thought of as a group of ``friends'' or ``associates'' such that their data sets are dependent. In Section~\ref{appendixA}, we discuss how our methodology can be applied to the settings where the subgraph $G_l$ is not disjoint from the reminder of the graph $(G')$~\cite{Com1, Com2, Com3,Com4, Com5, Com6,Com7,Com8, Com_Negar, Com_Negar2}.}


\begin{figure}
	\centering
	\includegraphics[width=.7\linewidth]{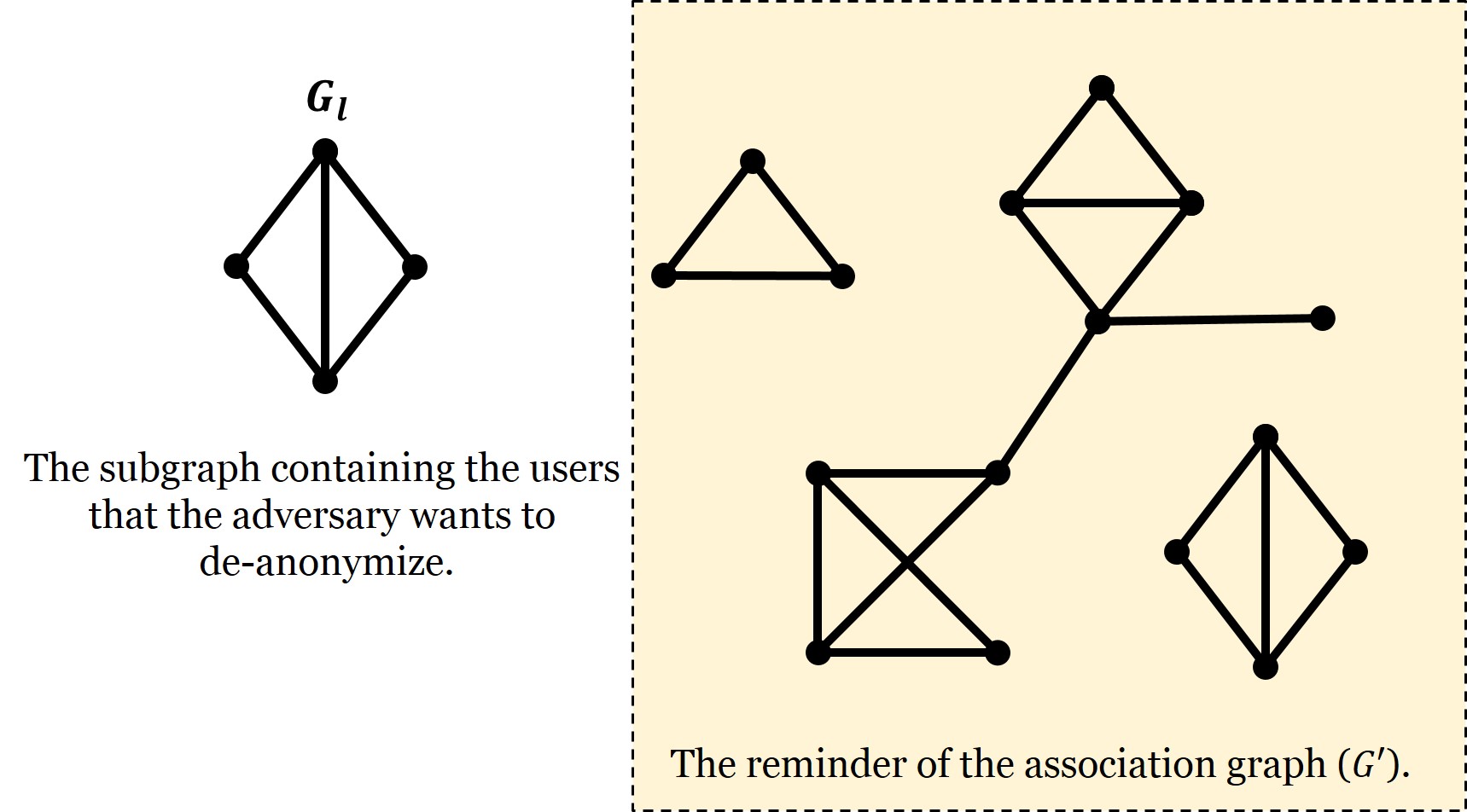}
	\centering
	\caption{The structure of the association graph ($G$):
	Group $l$ with $s_l$ vertices is disjoint from the reminder of the association graph ($G'$).}
	\label{fig:graph}
\end{figure}

The following theorem states that if the number of observations per user $(m)$ is significantly larger than $n^{\frac{2}{s(r-1)}+\alpha} $ in the $r$-state model, where $s$ is the size of a group, then the adversary can successfully de-anonymize the users in that group.

\begin{thm}\label{r_state_thm}
	For the above $r$-state model, if $\textbf{Y}$ is the anonymized version of $\textbf{X}$ as defined above, the size of the group including user $1$ is $s$, and
	\begin{itemize}
		\item {\color{black}$m=\Omega \left(n^{\frac{2}{s(r-1)}+\alpha}\right)$}, for any $\alpha > 0$;
	\end{itemize}
	then, user $1$ has no privacy at time $k$.
\end{thm}

\hspace{-0.18 in}\textbf{{Discussion $4$}:}
It is insightful to compare this result to~\cite[Theorem 2]{tifs2016}, where it is stated that if the users are not dependent, then all users have perfect privacy as long as the number of adversary's observations per user $(m)$ is smaller than {\color{black}$O(n^{\frac{2}{r-1}-\alpha})$}. Here, Theorem~\ref{r_state_thm} states that with much smaller $m$, the adversary can de-anonymize the users. Therefore, we see that dependency can significantly reduce the privacy of users.

{\color{black}
\noindent \textbf{Proof of Theorem~\ref{r_state_thm}:}
\begin{proof}
	As shown in Figure~\ref{fig:algorithm}, the proof of Theorem~\ref{r_state_thm} consists of three parts:
	\begin{itemize}
		\item \textbf{First Step: }Showing the adversary can reconstruct the association graph of the anonymized version of the data with an arbitrarily small error probability (as shown in Figure~\ref{fig:lem1}).
		\item \textbf{Second Step:} Showing the adversary can uniquely identify Group $1$ with an arbitrarily small error probability (as shown in Figure~\ref{fig:lem2}).
		\item \textbf{Third Step:} Showing the adversary can individually identify all the members within Group $1$ with an arbitrarily small error probability (as shown in Figure~\ref{fig:lem3}).
	\end{itemize}
	
	{\color{black}
	\begin{figure}[htp]
		\centering
		\subfloat[First step: Reconstruction of the association graph from the observed data.]{%
			\includegraphics[clip,width=0.8\columnwidth]{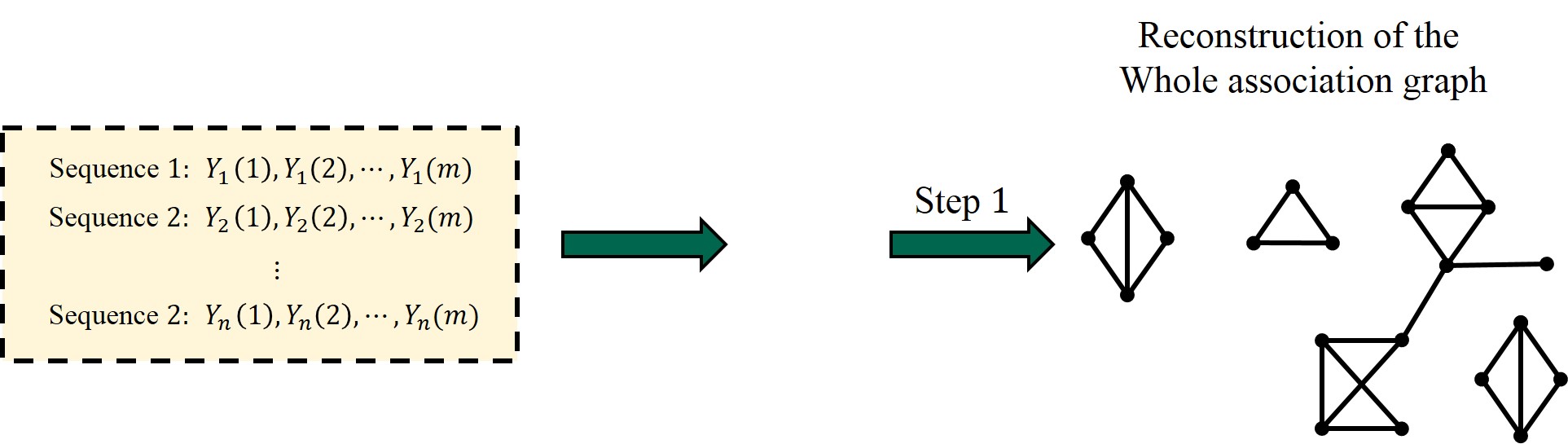}%
			\label{fig:lem1}
		}
		
		\subfloat[Second step: Identifying Group $1$ among all of the groups after association graph is reconstructed.]{%
			\includegraphics[clip,width=0.8\columnwidth]{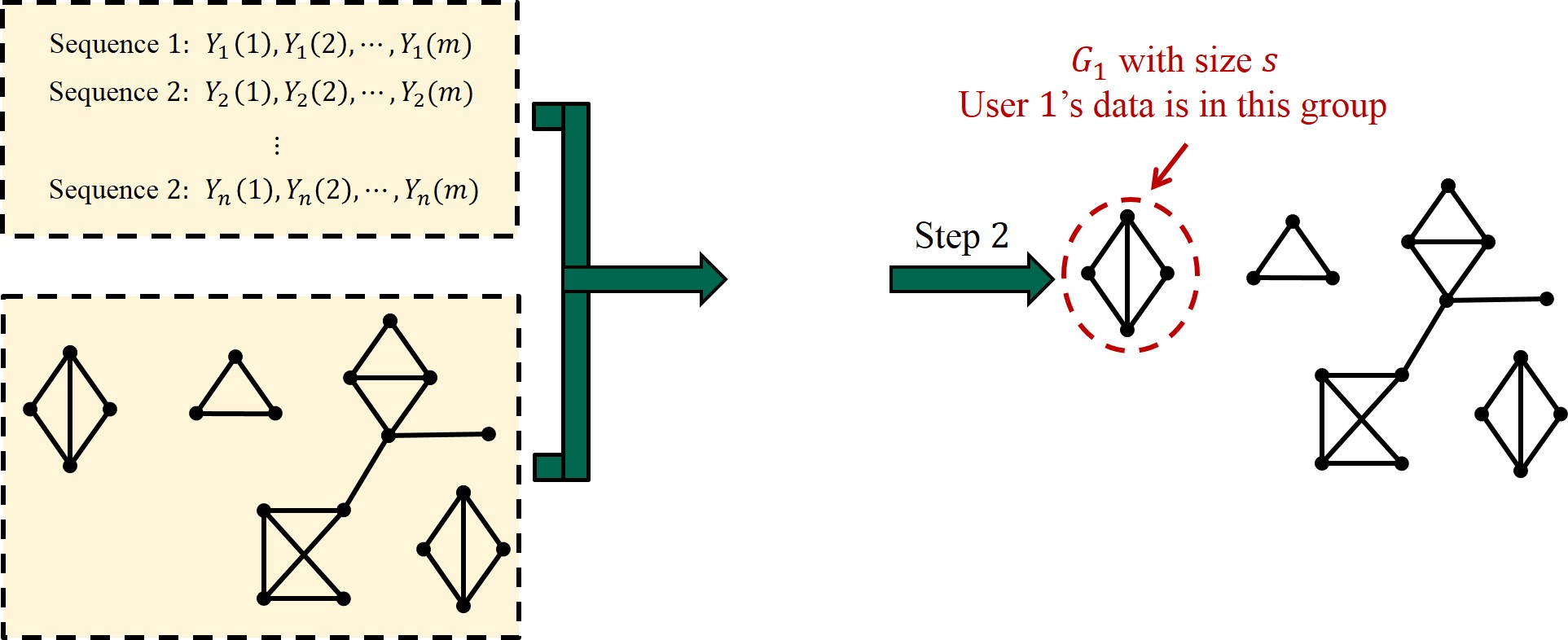}%
			\label{fig:lem2}
		}
	
		\subfloat[Third step: Identifying user $1$ among all of the members of Group $1$ after Group $1$ is uniquely identified.]{%
			\includegraphics[clip,width=0.8\columnwidth]{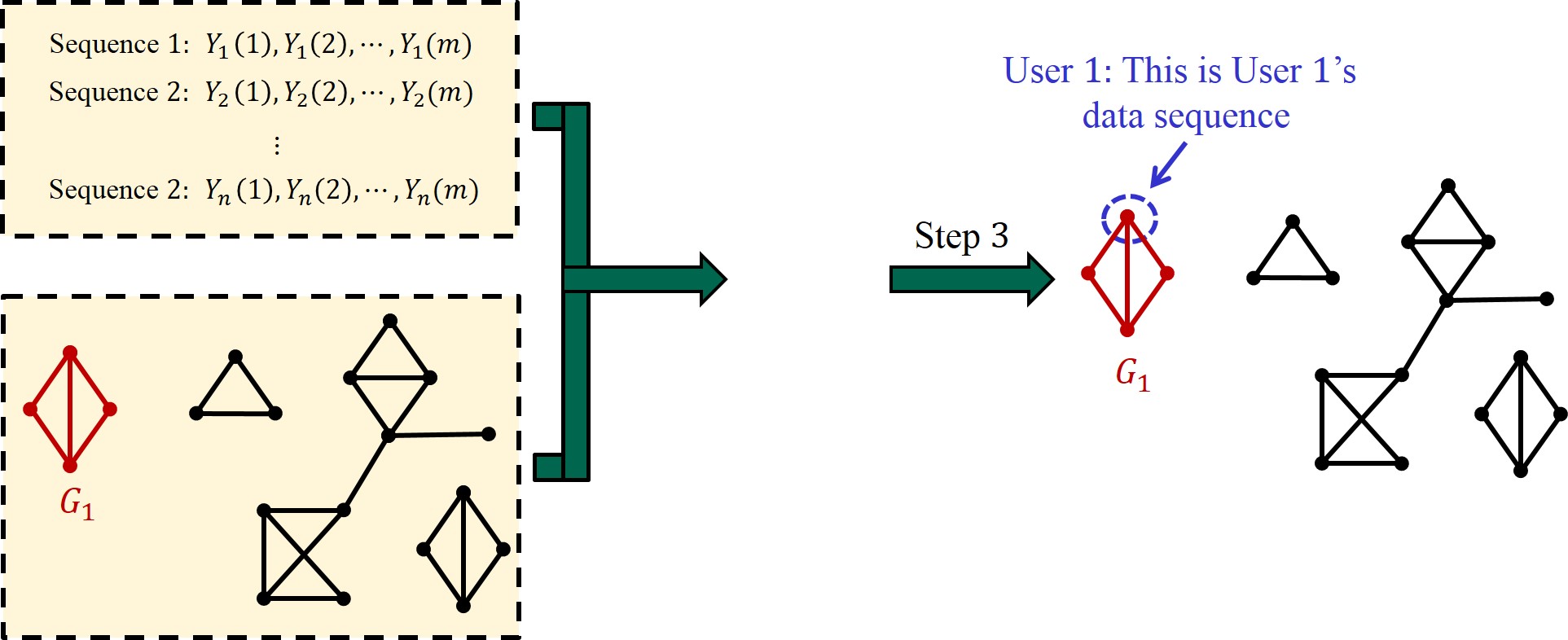}%
			\label{fig:lem3}
		}
		
		\caption{The algorithm of the adversary to estimate data points of user $1$ with vanishing error probability.}
		\label{fig:algorithm}
		
	\end{figure}
}

	The first part of the proof exploits the fact that the adversary can readily reconstruct the association graph of the anonymized data in $m=\left(\log n\right)^3$. It is the second and third parts that give rise to the condition $m=\Omega\left(n^{\frac{2}{s(r-1)}+\alpha}\right)$, and it is in the second part where we see the mechanism for the speed-up of the adversary's algorithm relative to the case where user traces are independent. In particular, due to the dependence between users breaking them into groups, the key search for the adversary now involves finding a set of users corresponding to a length-$s$ vector of probabilities rather than searching for a single user associated with a given probability.

	\hspace{-0.18 in}\textbf{First step: Reconstruction of the association graph:} In this step, we use Lemma~\ref{lem1}. More specifically, since $n^{\frac{2}{s(r-1)}}>(\log n)^3$ for large enough $n$, we can use Lemma~\ref{lem1} to conclude that the adversary can reconstruct the association graph with arbitrarily small error probability.

	\hspace{-0.18 in}\textbf{Second step: Identifying Group $1$ among all of the groups:} Now, assume the size of Group $1$ is $s$. Without loss of generality, suppose the members of Group $1$ are users $\{1,2, \cdots, s\}$. Note that there are at most $\frac{n}{s}$ isolated groups of size $s$ in the association graph. We call these Groups $1, 2, \cdots, \frac{n}{s}$. The adversary needs to first identify Group $1$ among all of these groups.

	First, for all $u\in\{1, 2, \cdots, n\}$ and all $i \in \{1, 2, \cdots, r-1\}$, the adversary computes $\widetilde{p_u(i)}$ as:
	\begin{align}
\widetilde{p_{u}(i)}=\frac{\abs*{\left\{k: Y_u(k)=i\right\}}}{m}=\frac{\widehat{M}_u(i)}{m},\ \
	\label{pu-tilde}
	\end{align}
and as a result,	
	\begin{align}
\widetilde{p_{\Pi(u)}(i)}=\frac{\abs*{\left\{k: X_u(k)=i\right\}}}{m}=\frac{M_{u}(i)}{m},\ \
	\label{pu-tilde2}
	\end{align}
	where $\widehat{M}_{u}(i)= \abs*{\left\{k: Y_u(k)=i\right\}}$ and $M_{u}(i)= \abs*{\left\{k: X_u(k)=i\right\}}.$
	 Let $\textbf{p}_u$ be the collection of ${{p}}_u(i)$ and $\widetilde{\textbf{p}_{\Pi(u)}}$ be the collection of ${\widetilde{{p}_{\Pi(u)}(i)}}$ for all $i \in \{1, 2, \cdots, r-1\}$:
	\[\widetilde{{\textbf{p}}_u} = \begin{bmatrix}
	\widetilde{p_u(1)} \\ \widetilde{p_u(2)}\\ \vdots \\ \widetilde{p_u(r-1)} \end{bmatrix} , \ \ \ \widetilde{{\textbf{p}}_{\Pi(u)}} = \begin{bmatrix}
	\widetilde{p_{\Pi(u)}(1)} \\ \widetilde{p_{\Pi(u)}(2)}\\ \vdots \\ \widetilde{p_{\Pi(u)}(r-1)} \end{bmatrix}.
	\]

	Now, define $\Sigma_s$ as the set of all permutations on $s$ elements; for $\sigma \in \Sigma_s$, $\sigma:\{1, 2, \cdots, s\}\to \{1, 2, \cdots, s\}$ is a one-to-one mapping.

	First, we provide the definition of a distance measure $D\left(\mathbf{\Phi}, \mathbf{\Psi}\right)$ for vectors
	\[\mathbf{\Phi}=\left[\mathbf{\Phi}_1 \ \ \mathbf{\Phi}_2 \ \ \cdots \ \ \mathbf{\Phi}_s\right],\]
	\[\mathbf{\Psi}=\left[\mathbf{\Psi}_1 \ \ \mathbf{\Psi}_2 \ \ \cdots \ \ \mathbf{\Psi}_s\right],\]
	where $\mathbf{\Phi}_u \in \mathbb{R}^{r-1}$ and $\mathbf{\Psi}_u \in \mathbb{R}^{r-1}$. Define
	\begin{align}
	\no D\left(\mathbf{\Phi},\mathbf{\Psi}\right)=\min\limits_{\sigma \in \Sigma_s}\left \{\max\big \{||\mathbf{\Phi}_1-\mathbf{\Psi}_{\sigma(1)}||_{\infty}, ||\mathbf{\Phi}_2-\mathbf{\Psi}_{\sigma(2)}||_{\infty}, \cdots, ||\mathbf{\Phi}_s-\mathbf{\Psi}_{\sigma(s)}||_{\infty}\big \}\right\},
	\end{align}
	where for all $u \in \{1, 2, \cdots, s\}$,
	{\begin{align}
		\nonumber ||\mathbf{\Phi}_u-\mathbf{\Psi}_{\sigma(u)}||_{\infty}=\max\left\{|\Phi_u(i)-\Psi_{\sigma(u)}(i)|:i = 1, 2, \cdots, r-1\right\}. \
		\end{align}
	}
	
	Here, let $\textbf{P}^{(l)}$ be a vector which contains probability distributions of users belonging to Group $l$, and $\widetilde{\textbf{P}_{\Pi}^{(l)}}$ be a vector which contains the estimate of the adversary about the probability distribution of users belong to Group $l$. For example, for Group $1$, we have
	\[\textbf{P}^{(1)}=\left[\textbf{p}_1 \ \ \textbf{p}_2 \ \ \cdots \ \ \textbf{p}_s\right],\]
	and
	\[\widetilde{\textbf{P}_{\Pi}^{(1)}}=\left[\widetilde{\textbf{p}_{\Pi(1)}} \ \ \widetilde{\textbf{p}_{\Pi(2)}} \ \ \cdots \ \ \widetilde{\textbf{p}_{\Pi(s)}}\right].\] 		
	Now, we claim for $m=cn^{\frac{2}{s(r-1)}+\alpha}$ and large enough $n$,
	\begin{itemize}
		\item $\mathbb{P}\left(D\left(\textbf{P}^{(1)}, \widetilde{\textbf{P}_{\Pi}^{(1)}}\right)\leq \Delta_n\right) \to 1,$
		\item $\mathbb{P}\left(\bigcup\limits_{l=2}^{\frac{n}{s}} \left\{ D\left(\textbf{P}^{(1)},\widetilde{\textbf{P}_{\Pi}^{(l)}}\right)\leq \Delta_n \right\}\right) \to 0$ ,
	\end{itemize}
	where $\Delta_n={n^{-\frac{1}{s(r-1)}-\frac{\alpha}{4}}}$.

	Define the hypercubes of $\mathcal{F}^{(n)}$ and $\mathcal{H}^{(n)}$ as
	\begin{align}
	\no \mathcal{F}^{(n)}= & \left\{(\textbf{x}_1, \textbf{x}_2, \cdots ,\textbf{x}_{s}) \in \left(\mathbb{R}^{(r-1)}\right)^s:
	\max\limits_u\{|\textbf{x}_u-\textbf{p}_u|\}\leq \Delta_n, u=1, 2, \cdots ,s\right\},
	\end{align}
	\begin{align}
	\no \mathcal{H}^{(n)}= &\left\{(\textbf{x}_1, \textbf{x}_2, \cdots ,\textbf{x}_{s}) \in \left(\mathbb{R}^{(r-1)}\right)^s:\max\limits_u\{|\textbf{x}_u-\textbf{p}_u|\} \leq 2\Delta_n, u=1, 2, \cdots ,s\right\},
	\end{align}
	Figure~\ref{setBC} shows sets $\mathcal{F}^{(n)}$ and $\mathcal{H}^{(n)}$ in the case $r=s=2$.

	\begin{figure}
		\centering
		\includegraphics[width=.3\linewidth]{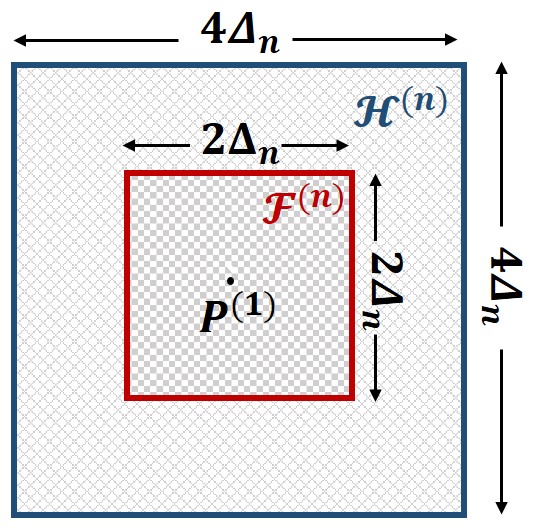}
		\centering
		\caption{$\textbf{P}^{(1)}$, sets $\mathcal{F}^{(n)}$ and $\mathcal{H}^{(n)}$ for the case $r=s=2$.}
		\label{setBC}
	\end{figure}
	
	First, we prove $\widetilde{\textbf{P}_{\Pi}^{(1)}}$ is in set $\mathcal{F}^{(n)}$, thus, $D\left(\textbf{P}^{(1)}, \widetilde{\textbf{P}_{\Pi}^{(1)}}\right)\leq \Delta_n$. Note that for all $u\in\{1, 2, \cdots, n\}$ and all $i \in \{1, 2, \cdots, r-1\}$,
	a Chernoff bound yields
	\begin{align}
	\no \mathbb{P}\left(\bigg{|}\frac{M_{u}(i)}{m}-p_{u}(i)\bigg{|}\geq \Delta_n\right)&\leq 2e^{-\frac{m\Delta_n^2}{3p_u}}\\
	\no &=2e^{-\left(cn^{\frac{2}{s(r-1)}+\alpha }\right)\left(\frac{1}{n^{\frac{1}{s(r-1)}+\frac{\alpha}{4}}}\right)^2\left(\frac{1}{3p_u}\right)}\\
	&\leq 2e^{-\frac{c}{3}n^{\frac{\alpha}{2}}}.\ \
	\label{chern}
	\end{align}
	Thus, for all $u \in \text{Group }1$ and all $i \in\{1, 2, \cdots, r-1\}$, (\ref{chern}) and the union bound yield
	\begin{align}
	\no \mathbb{P}\left(D\left(\textbf{P}^{(1)},\widetilde{\textbf{P}_{\Pi}^{(1)}}\right)\geq \Delta_n\right) & \leq \sum\limits_{u=1}^{s} \sum\limits_{i=1}^{r-1} \mathbb{P}\left(\bigg{|}\frac{M_{u}(i)}{m}-p_{u}(i)\bigg{|}\geq \Delta_n\right) \\
	\no &\leq 2s(r-1)e^{-\frac{c}{3}n^{\frac{\alpha}{2}}} \to 0,\ \
	\end{align}
	as $n \to \infty.$ As a result, $D\left(\textbf{P}^{(1)}, \widetilde{\textbf{P}_{\Pi}^{(1)}}\right)\leq \Delta_n$ with high probability.

	In the next step, we prove $\mathbb{P}\left(\bigcup\limits_{l=2}^{\frac{n}{s}} \left\{ D\left(\textbf{P}^{(1)},\widetilde{\textbf{P}_{\Pi}^{(l)}}\right)\leq \Delta_n \right\}\right) \to 0$. Note that for all groups other than Group $1$, we have
	\[(4 \Delta_n)^{s(r-1)}\delta_1\leq \mathbb{P}\left( \textbf{P}^{(l)} \in \mathcal{H}^{(n)}\right) \leq (4 \Delta_n)^{s(r-1)}\delta_2,\]
	and as a result,
	\begin{align}
	\no \mathbb{P}\left(\textbf{P}^{(l)} \in \mathcal{H}^{(n)}\right) &\leq \delta_2 (4\Delta_n)^{s(r-1)}\\
	\no &=\delta_2 4^{s(r-1)}\frac{1}{n^{1+\frac{\alpha}{4}s(r-1)}}.\ \
	\end{align}
	Similarly, for any $\sigma \in \Sigma_s$,
	\begin{align}
	\no \mathbb{P}\left(\textbf{P}^{(l)}_\sigma \in \mathcal{H}^{(n)}\right) &\leq \delta_2 (4\Delta_n)^{s(r-1)}\\
	\no &=\delta_2 4^{s(r-1)}\frac{1}{n^{1+\frac{\alpha}{4}s(r-1)}},\ \
	\end{align}
	and since $|\Sigma_s|=s!$, by a union bound,
	\begin{align}
	\no \mathbb{P}\left(\bigcup\limits_{l=2}^{\frac{n}{s}} \left\{ \bigcup\limits_{\sigma \in \Sigma_s} \left\{ \textbf{P}^{(l)}_\sigma \in \mathcal{H}^{(n)}\right\} \right\}\right) &\leq \sum\limits_{l=2}^{\frac{n}{s}} \sum\limits_{\sigma \in \Sigma_s} \mathbb{P}\left(\textbf{P}^{(l)}_\sigma \in \mathcal{H}^{(n)} \right)\\
	\no &\leq \frac{n}{s}s!\delta_2 4^{s(r-1)}\frac{1}{n^{1+\frac{\alpha}{4}s(r-1)}}\\
	\no &=(s-1)!4^{s(r-1)}\delta_2{n^{-\frac{\alpha}{4}s(r-1)}}\to 0,\ \
	\end{align}
	as $n \to \infty.$
	Thus, all $\textbf{P}^{(l)}$'s are outside of $\mathcal{H}^{(n)}$ with high probability.
	
	Now, given the fact that all $\textbf{P}^{(l)}$'s are outside of $\mathcal{H}^{(n)}$, we prove $\mathbb{P}\left(\bigcup\limits_{l=2}^{\frac{n}{s}} \left\{ D\left(\textbf{P}^{(1)},\widetilde{\textbf{P}_{\Pi}^{(l)}}\right)\leq \Delta_n \right\}\right) \to 0$. We show that $\widetilde{\textbf{P}_{\Pi}^{(l)}}$'s are close to $\textbf{P}^{(l)}$'s, and as a result, they will be outside of $\mathcal{F}^{(n)}$. 
For all $u \in \text{Group }l$ and all $i \in\{1, 2, \cdots, r-1\}$, (\ref{chern}) and the union bound yield,
	\begin{align}
	\no \mathbb{P}\left(D\left(\textbf{P}^{(1)},\widetilde{\textbf{P}_{\Pi}^{(l)}}\right)\leq \Delta_n\right)&=\mathbb{P}\left(D\left(\textbf{P}^{(l)},\widetilde{\textbf{P}_{\Pi}^{(l)}}\right)\geq \Delta_n\right)\\
	\no & \leq \sum\limits_{u=1}^{s} \sum\limits_{i=1}^{r-1} \mathbb{P}\left(\bigg{|}\frac{M_{u}(i)}{m}-p_{u}(i)\bigg{|}\geq \Delta_n\right) \\
	\no &\leq 2s(r-1)e^{-\frac{c}{3}n^{\frac{\alpha}{2}}}. \ \
	\end{align}
	Now by using a union bound, again, we have
	\begin{align}
	\no \no \mathbb{P}\left(\bigcup\limits_{l=2}^{\frac{n}{s}} \left\{ D\left(\textbf{P}^{(l)},\widetilde{\textbf{P}_{\Pi}^{(l)}}\right)\geq \Delta_n \right\}\right)& \leq \sum\limits_{l=2}^{\frac{n}{s}} \mathbb{P}\left(D\left(\textbf{P}^{(l)},\widetilde{\textbf{P}_{\Pi}^{(l)}}\right)\geq \Delta_n\right)\\
	\no & \leq \frac{n}{s}2s(r-1)e^{-\frac{c}{3}n^{\frac{\alpha}{2}}}\\
	\no &= 2n(r-1)e^{-\frac{c}{3}n^{\frac{\alpha}{2}}}\to 0,\ \
	\end{align}
	as $n \to \infty.$ Thus, for all $l \in \{2, 3, \cdots, \frac{n}{s}\}$, $\widetilde{\textbf{P}_{\Pi}^{(l)}}$'s are close to $\textbf{P}^{(l)}$'s, thus, they will be outside of $\mathcal{F}^{(n)}$ with high probability. Now, we can conclude as $n \to \infty$,
	\begin{align}
	\no \mathbb{P}\left(\bigcup\limits_{l=2}^{\frac{n}{s}} \left\{ D\left(\textbf{P}^{(1)},\widetilde{\textbf{P}_{\Pi}^{(l)}}\right)\leq \Delta_n \right\}\right)\to 0. \ \
	\end{align}
	
	This means that with high probability all $\widetilde{\textbf{P}_{\Pi}^{(l)}}$'s are outside of $\mathcal{F}^{(n)}$, so the adversary can successfully identify Group $1$.
	
	\hspace{-0.18 in}\textbf{Third step: Identifying user $1$ among all of the members of Group $1$:} In this step, we prove that, after identifying Group $1$, the adversary can correctly identify each member. This step can be done using a similar approach to the one above. We define two sets $\mathcal{B}^{(n)}$ and $\mathcal{C}^{(n)}$ around $\textbf{p}_1$. We will show that with high probability, the true estimated value of $\textbf{p}_1$ (shown as $\widetilde{\textbf{p}_1}$) is inside of $\mathcal{B}^{(n)}$. Also, all $\textbf{p}_u$'s of other members of Group 1 are outside of $\mathcal{C}^{(n)}$, and since their estimated values are close to ${\textbf{p}}_u$'s, the estimated values will be outside of $\mathcal{B}^{(n)}$. Therefore, the adversary can successfully invert the permutation $\Pi$ within Group $1$ and identify all of the members. Below are the details.
	
	From (\ref{pu-tilde}) and (\ref{pu-tilde2}), for all $u\in\{1, 2, \cdots, s\}$ and all $i \in \{1, 2, \cdots, r-1\}$, we have
	\begin{align}
	\no \widetilde{p_{u}(i)}=\frac{\abs*{\left\{k: Y_u(k)=i\right\}}}{m},\ \
	\end{align}
	and as a result,
	\begin{align}
	\no \widetilde{p_{\Pi(u)}(i)}=\frac{\abs*{\left\{k: X_u(k)=i\right\}}}{m}=\frac{M_{u}(i)}{m},\ \
	\end{align}
	where $M_{u}(i)= \abs*{\left\{k: X_u(k)=i\right\}}.$
Let's define sets $\mathcal{B}^{(n)}$ and $\mathcal{C}^{(n)}$ as
	\begin{align}
	\no \mathcal{B}^{(n)}= & \bigg\{(x_1, x_2, \cdots, x_{r-1}) \in \mathcal{R}_{\textbf{P}}: | x_i-p_1(i)| \leq\Delta_n, i=1, 2, \cdots, r-1\bigg\},
	\end{align}
	\begin{align}
	\no \mathcal{C}^{(n)}= &\bigg \{(x_1, x_2, \cdots, x_{r-1}) \in \mathcal{R}_{\textbf{P}}:
	|x_i-p_1(i)| \leq 2 \Delta_n, i=1, 2, \cdots, r-1\bigg\},
	\end{align}
	where $\Delta_n = {n^{-\frac{1}{s(r-1)}-\frac{\alpha}{4}}}.$ Figure~\ref{R_P} shows $\textbf{p}_1$, sets $\mathcal{B}^{(n)}$ and $\mathcal{C}^{(n)}$ in range of $\mathcal{R}_{\textbf{P}}$ for case $r=3.$
	\begin{figure}
		\centering
		\includegraphics[width=.5\linewidth, height=0.5 \linewidth]{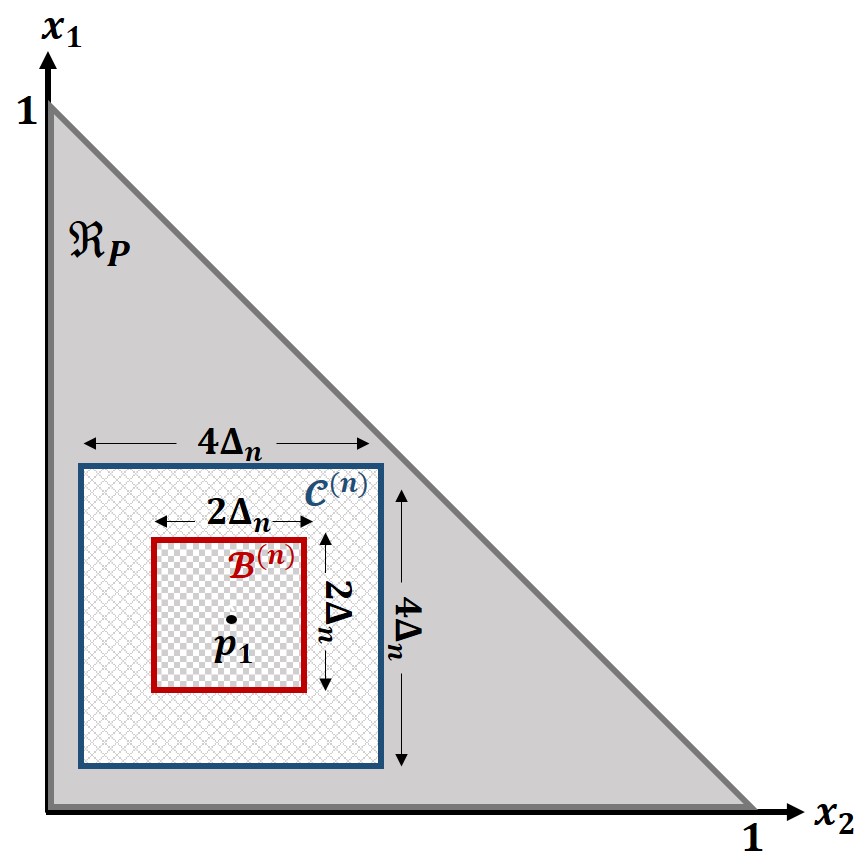}
		\centering
		\caption{$\textbf{p}_1$, sets $\mathcal{B}^{(n)}$ and $\mathcal{C}^{(n)}$ in $\mathcal{R}_\textbf{P}$ for case $r=3$.}
		\label{R_P}
	\end{figure}
Now, we claim for $m =cn^{\frac{2}{s(r-1)} + \alpha}$,
	\begin{enumerate}
		\item $\mathbb{P}\left( {\widetilde{\textbf{p}_{\Pi(1) }}}\in \mathcal{B}^{(n)}\right) \to 1,$
		\item $\mathbb{P}\left( \bigcup\limits_{u=2}^{s} \left\{{\widetilde{\textbf{p}_{\Pi(u)}}}\in \mathcal{B}^{(n)}\right\}\right) \to 0,$
	\end{enumerate}
	as $n \to \infty$.
	Thus, the adversary can identify $\Pi(1)$ by examining $\widetilde{\textbf{p}_u}$'s and choosing the only one that belongs to $\mathcal{B}^{(n)}$.
	
	First, we want to show that as $n$ goes to infinity,
	\[\mathbb{P}\left( {\widetilde{\textbf{p}_{\Pi(1)} }}\in \mathcal{B}^{(n)}\right) \to 1.\]
	For all $i \in \{1, 2, \cdots, r-1\}$, By using (\ref{chern}) and the union bound, we have
	\begin{align}
	\no \mathbb{P}\left({\widetilde{\textbf{p}_{\Pi(1)}}} \notin \mathcal{B}^{(n)}\right) &\leq \sum\limits_{i=1}^{r-1}\mathbb{P}\left(\abs*{\frac{M_1(i)}{m}-p_1(i)} \geq \Delta_n \right)\\
	\no &\leq (r-1)\left(2e^{-\frac{c}{3}n^{\frac{\alpha}{2}}}\right),\ \
	\end{align}	
thus,
	\begin{align}
	\no \mathbb{P}\left({\widetilde{\textbf{p}_{\Pi(1)}}} \in \mathcal{B}^{(n)}\right) &\geq 1-2(r-1)e^{-\frac{c}{3}n^{\frac{\alpha}{2}}}\to 1,\ \
	\end{align}
as $n \to \infty$.

	Now, we need to show that as $n$ goes to infinity,
	\[\mathbb{P}\left( \bigcup\limits_{u=2}^{s} \left\{{\widetilde{\textbf{p}_{\Pi(u)}}}\in \mathcal{B}^{(n)}\right\}\right) \to 0.\]
	First, we show as $n$ goes to infinity,
	\[\mathbb{P}\left(\bigcup\limits_{u=2}^s \left\{\textbf{p}_u \in \mathcal{C}^{(n)}\right\} \right)\to 0. \]
	Note
	\[4 \left(\Delta_n\right)^{r-1} \delta_1 < \mathbb{P}\left( \textbf{p}_u\in \mathcal{C}^{(n)}\right) < 4 \left(\Delta_n\right)^{r-1} \delta_2,\]
	and according to the union bound, for large enough $n$, we have
	\begin{align}
	\no \mathbb{P}\left( \bigcup\limits_{u=2}^s \left\{\textbf{p}_u \in \mathcal{C}^{(n)}\right\} \right) &\leq \sum\limits_{u=2}^s \mathbb{P}\left( \textbf{p}_u\in \mathcal{C}^{(n)}\right) \\
	\nonumber &\leq 4s \left(\Delta_n\right)^{r-1} \delta_2\\
	\nonumber &\leq 4s \frac{1}{n^{\frac{1}{s}+{\frac{\alpha(r-1)}{4}}}} \delta_2 \to 0; \ \
	\end{align}
	thus, all $\textbf{p}_u$'s are outside of $\mathcal{C}^{(n)}$ with high probability.
	
	Now, we claim that given all $\textbf{p}_u$'s are outside of $\mathcal{C}^{(n)}$, $\mathbb{P}\left({\widetilde{\textbf{p}_{\Pi (u)}}} \in \mathcal{B}^{(n)}\right)$ is small.
	Note, for all $i \in \{1, 2, \cdots, r-1\}$, by using (\ref{chern}) and the union bounds, we have
	\begin{align}
	\no \mathbb{P}\left({\widetilde{\textbf{p}_{\Pi(u)}}}\in \mathcal{B}^{(n)}\right) &\leq \mathbb{P}\left(\abs*{{\widetilde{\textbf{p}_{\Pi(u)}}}-\textbf{p}_u } \geq\Delta_n\right) \\
	\no & \leq \sum\limits_{i=1}^{r-1}\mathbb{P}\left(\abs*{{\frac{M_u(i)}{m}}-{p}_u(i) } \geq\Delta_n\right)\\
	\no & \leq 2(r-1)e^{-\frac{c}{3}n^{\frac{\alpha}{2}}}.\ \
	\end{align}
	As a result, by using another union bound, as $n$ becomes large,
	\begin{align}
	\no \mathbb{P}\left( \bigcup\limits_{u=2}^s \left\{\abs*{\widetilde{\textbf{p}_{\Pi(u)}}-\textbf{p}_u}\geq \Delta_n\right\}\right)\leq s\left(2(r-1)e^{-\frac{c}{3}n^{\frac{\alpha}{2}}}\right) \to 0.
	\end{align}
	Thus, for all $u \in\{2, 3, \cdots, s\}$, $\widetilde{\textbf{p}_{\Pi(u)}}$'s are close to $\textbf{p}_u$'s, thus they will be outside of $\mathcal{B}^{(n)}$. Now, we can conclude as $n \to \infty$ that:
	\begin{align}
	\no \mathbb{P}\left( \bigcup\limits_{u=2}^s \left\{{\widetilde{\textbf{p}_{\Pi(u)}}}\in \mathcal{B}^{(n)}\right\}\right) \to 0.
	\end{align}

	Thus, we have proved that if $m=cn^{\frac{2}{s(r-1)}+\alpha}$, there exists an algorithm for the adversary to successfully recover user $1$. Remember, the adversary identifies the members of Group $1$ independent of the structure of the subgraph.
\end{proof}}

\subsection{{\color{black}$r$-State} Markov Chain Model}
\label{markov}
In Sections~\ref{iidr}, we assumed each user's data patterns was i.i.d.; however, in this section, users' data patterns are modeled using Markov chains in which each user's data points are dependent over time. In this model, we again assume there are $r$ possibilities for each users' data point, i.e., $X_u(k) \in \{0,1, \cdots, r-1\}$. More specifically, each user's data set is modeled by a Markov chain with $r$ states. It is assumed that the Markov chains of all users have the same structure but have different transition probabilities. Let $E$ be the set of edges in the assumed transition graph, so, $(i, j) \in E$ if there exists an edge from state $i$ to state $j$, meaning that $p_u(i,j)=\mathbb{P}\left(X_u(k+1)=j|X_u(k)=i\right)>0$. The transition matrix is a square matrix used to describe the transitions of a Markov chain; thus, different users can have different transition probability matrices. Note for each state $i$, we have
$\sum\limits_{j=1}^{r-1} p_u(i,j)=1,$
so, the adversary can focus on a subset of size $d=|E|-r$ of the transition probabilities for recovering the entire transition matrix. Let $\textbf{p}_u$ be the vector that contains these transition probabilities for user $u$. We write
	\[\textbf{p}_u= \begin{bmatrix}
p_u(1) \\ p_u(2) \\ \vdots \\p_u(|E|-r) \end{bmatrix}, \ \ \ \textbf{p} =\left[\textbf{p}_{1}\ \ \textbf{p}_{2}\ \ \cdots\ \ \textbf{p}_{n}\right].
\]

We also consider all $p_u(i)$'s are drawn independently from some continuous density function, $f_\textbf{P}(\textbf{p}_u)$, on the $(0,1)^{|E|-r}$ hypercube. Define the range of distribution as
\begin{align}
	\no \mathcal{R}_{\textbf{P}} &= \left\{ (x_1, x_2, \cdots, x_{|E|-r}) \in (0,1)^{|E|-r}: x_i > 0 , x_1+x_2+\cdots+x_{|E|-r} < 1\right\},
\end{align}
and as before, we assume there are $ \delta_1, \delta_2 >0$, such that
\begin{equation}
\begin{cases}
\no \delta_1\leq f_{\textbf{P}}(\textbf{p}_u) \leq \delta_2, & \textbf{p}_u \in \mathcal{R}_{\textbf{p}}.\\
f_{\textbf{P}}(\textbf{p}_u)=0, & \textbf{p}_u \notin \mathcal{R}_{\textbf{p}}.
\end{cases}
\end{equation}
Now, we can repeat the similar steps as the previous sections to prove the following theorem.
\begin{thm}\label{markov_thm}
	For an irreducible, aperiodic Markov chain model, if ${\textbf{Y}}$ is the anonymized version of $\textbf{X}$ as defined above, the size of the group including user $1$ is $s$, and
	\begin{itemize}
		\item {\color{black}$m=\Omega\left(n^{\frac{2}{s(|E|-r)}+\alpha}\right)$}, for any $\alpha > 0$;
	\end{itemize}
then, user $1$ has no privacy at time $k$.
\end{thm}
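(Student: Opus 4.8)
The plan is to reuse, essentially verbatim, the three-part architecture of the proofs of Theorems \ref{two_state_thm} and \ref{r_state_thm}: (i) reconstruct the association graph of the anonymized data; (ii) single out group $1$ among the at most $\frac{n}{s}$ groups of its size; and (iii) invert the permutation $\Pi$ inside group $1$ to recover each member. The only genuinely new ingredient is that each user's trace is now a realization of an irreducible, aperiodic Markov chain rather than an i.i.d.\ sequence, so every empirical frequency that previously concentrated via a Chernoff bound for binomials must instead be shown to concentrate around the corresponding Markov parameter. I would therefore isolate this single concentration statement, prove it once, and then feed it into all three parts. Throughout, the adversary works with the $d=|E|-r$ free transition probabilities $\mathbf{p}_u$, and the natural analogue of the earlier thresholds is $\mathcal{T}_n = n^{-\frac{1}{sd}-\frac{\alpha}{4}}$.

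For part (i), I would first note that $m=cn^{\frac{2}{s(|E|-r)}+\alpha}$ exceeds $(\log n)^3$ for large $n$, since any positive power of $n$ dominates a polylogarithm, so Lemma \ref{lem1} applies once its covariance estimator is shown to concentrate in the Markov setting. Because each chain is stationary and ergodic, the time average $\frac{1}{m}\abs*{\left\{k: X_u(k)=i,\, X_{u'}(k)=i'\right\}}$ converges to the stationary joint probability, and the $m^{-\frac{1}{5}}$ thresholding still separates correlated from uncorrelated pairs once a Markov concentration bound replaces the binomial Chernoff bound. Thus the adversary recovers $\widetilde{G}=G$ with high probability exactly as before.

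For parts (ii) and (iii), the adversary estimates each transition probability by $\widetilde{p_u(i,l)} = N^u_{il}/N^u_i$, where $N^u_i$ counts visits to state $i$ and $N^u_{il}$ counts $i\to l$ transitions. I would show that with high probability $N^u_i = \Theta(m)$ for every state and user (using that irreducibility and aperiodicity force a strictly positive stationary distribution), and that, conditioned on these visit counts, the estimate satisfies $\abs*{\widetilde{p_u(i,l)}-p_u(i,l)}\leq \mathcal{T}_n$ with probability at least $1-2e^{-c' m\mathcal{T}_n^2}$. The exponent bookkeeping is identical to the i.i.d.\ case: $m\mathcal{T}_n^2 = \Theta(n^{\alpha/2})\to\infty$, so each deviation is superpolynomially small, while the volume of a radius-$\mathcal{T}_n$ box in $\mathbb{R}^{sd}$ is $\Theta(\mathcal{T}_n^{sd})=\Theta(n^{-1-\frac{\alpha}{4}sd})$, which after multiplying by the $\frac{n}{s}\cdot s!$ competing (group, permutation) pairs and the bounded density $\delta_2$ still tends to zero. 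These two estimates reproduce the claims $P(D(\mathbf{P}^{(1)},\widetilde{\mathbf{P}^{(1)}})\leq \mathcal{T}_n)\to 1$ and $P(\bigcup_{j\geq 2}\{D(\mathbf{P}^{(1)},\widetilde{\mathbf{P}^{(j)}})\leq \mathcal{T}_n\})\to 0$, and the same ball argument with radius $\Delta_n = n^{-\frac{1}{sd}-\frac{\alpha}{4}}$ then isolates each individual member of group $1$.

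I expect the main obstacle to be the concentration step for the Markov chain, precisely because the time samples are no longer independent. The cleanest route is to condition on the visit sequence to each state: given $N^u_i$, the successive states entered from $i$ form an i.i.d.\ multinomial sequence, which reduces the transition-probability deviation to the i.i.d.\ analysis already carried out. What remains is to control $N^u_i$ itself, for which I would invoke a standard Markov-chain concentration bound (for example via the regeneration/renewal structure, or a spectral-gap Hoeffding-type inequality) to guarantee $N^u_i=\Theta(m)$ uniformly over states and users with superpolynomially small failure probability. Once this single bound is in place, every remaining estimate is a transcription of the $r$-states proof with $r-1$ replaced by $d=|E|-r$.
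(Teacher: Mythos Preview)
Your proposal is correct and follows essentially the same approach as the paper: the paper too reduces the Markov case to the i.i.d.\ analysis by observing that, conditioned on the visit counts $M_u(i)$, the transitions out of each state form a multinomial sample with parameters $p_u(i,l)$, and then reruns the hypercube/ball argument of Theorems \ref{two_state_thm} and \ref{r_state_thm} with $d=|E|-r$ in place of $r-1$ and thresholds $\mathcal{T}''_n,\Delta''_n = n^{-\frac{1}{s(|E|-r)}-\frac{\alpha}{4}}$. If anything, your outline is more careful than the paper's sketch, since you explicitly flag the need for a uniform $N^u_i=\Theta(m)$ bound (the paper just invokes convergence to stationarity) and note that Lemma \ref{lem1} must be rechecked with a Markov concentration bound in place of the binomial Chernoff bound.
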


\begin{proof}
The basic ideas behind the proof of Theorem~\ref{markov_thm} are similar to the ones for Theorems~\ref{r_state_thm}; thus, in this part we just focus on the differences and key ideas.

Define the random variable $M_u(i)$ as the total number of visits by user $u$ to state $i$, for all $u\in \{ 1, 2, \cdots, n\}$ and $i\in \{0, 1, \cdots, r-1\}$. Since the Markov chain is irreducible and aperiodic, and $m \to \infty$, all $\frac{M_i(u)}{m}$ converge to their stationary values{\color{black}~\cite{principles_Levy}}. Given $M_u(i) = m_u(i)$, the transitions from state $i$ to state $j$ for user $u$ has a multinomial distribution with probabilities $p_u(i,j)$. Now, considering the fact that the vector $\textbf{p}_u$ uniquely determines the user $u$, the adversary can invert the anonymization permutation function in a similar way to the i.i.d.\ case by focusing on $\textbf{p}_u$'s.
Let
\[\mathbf{\Phi}=\left[\mathbf{\Phi}_1\ \ \mathbf{\Phi}_2 \ \ \cdots \ \ \mathbf{\Phi}_s\right],\]
\[\mathbf{\Psi}=\left[\mathbf{\Psi}_1\ \ \mathbf{\Psi}_2\ \ \cdots \ \ \mathbf{\Psi}_s\right],\]
where $\mathbf{\Phi}_u \in \mathbb{R}^{|E|-r}$ and $\mathbf{\Psi}_u \in \mathbb{R}^{|E|-r}$. Define
\begin{align}
\no D\left(\mathbf{\Phi},\mathbf{\Psi}\right)=\min\limits_{\sigma \in \Sigma_s}\left \{\max\big \{||\mathbf{\Phi}_1-\mathbf{\Psi}_{\sigma(1)}||_{\infty}, ||\mathbf{\Phi}_2-\mathbf{\Psi}_{\sigma(2)}||_{\infty}, \cdots, ||\mathbf{\Phi}_s-\mathbf{\Psi}_{\sigma(s)}||_{\infty}\big \}\right\},
\end{align}
where for $u \in \{1, 2, \cdots, s\}$,
{\begin{align}
	\nonumber ||\mathbf{\Phi}_u-\mathbf{\Psi}_{\sigma(u)}||_{\infty}=\max\left\{|\Phi_u(i)-\Psi_{\sigma(u)}(i)|:i = 1, 2, \cdots, |E|-r\right\}. \
	\end{align}
}
and we claim for $m=cn^{\frac{2}{s(|E|-r)}+\alpha}$ and large enough $n$,
\begin{itemize}
	\item $\mathbb{P}\left(D\left(\textbf{P}^{(1)}, \widetilde{\textbf{P}_{\Pi}^{(1)}}\right)\leq \Delta'_n\right) \to 1,$
	\item $\mathbb{P}\left(\bigcup\limits_{l=2}^{\frac{n}{s}} \left\{ D\left(\textbf{P}^{(1)},\widetilde{\textbf{P}_{\Pi}^{(l)}}\right)\leq \Delta'_n \right\}\right) \to 0$ ,
\end{itemize}
where $\Delta'_n={n^{-\frac{1}{s(|E|-r)}-\frac{\alpha}{4}}}$. This can be shown similar to the proof of Theorem~\ref{r_state_thm}. First, define $\mathcal{F}'^{(n)}$ and $\mathcal{H}'^{(n)}$ as
\begin{align}
\no \mathcal{F}'^{(n)}= & \left\{(\textbf{x}_1, \textbf{x}_2, \cdots ,\textbf{x}_{s}) \in \left(\mathbb{R}^{(|E|-r)}\right)^s:
\max\limits_u\{|\textbf{x}_u-\textbf{p}_u|\}\leq \Delta'_n, u=1, 2, \cdots ,s\right\};
\end{align}
\begin{align}
\no \mathcal{H}'^{(n)}= &\left\{(\textbf{x}_1, \textbf{x}_2, \cdots ,\textbf{x}_{s}) \in \left(\mathbb{R}^{(|E|-r)}\right)^s:\max\limits_u\{|\textbf{x}_u-\textbf{p}_u|\} \leq 2\Delta'_n, u=1, 2, \cdots ,s\right\};
\end{align}
then, prove that the adversary can identify Group $1$ successfully.

In the next step, the adversary has to identify each member of Group $1$ correctly.
Define sets $\mathcal{B}'^{(n)}$ and $\mathcal{C}'^{(n)}$ as
\begin{align}
	\no \mathcal{B}'^{(n)}= & \bigg\{(x_1, x_2, \cdots ,x_{d}) \in \mathcal{R}_{\textbf{P}}: | x_i-p_1(i)| \leq \Delta'_n, i=1,2, \cdots , d\bigg\},
\end{align}
\begin{align}
	\no \mathcal{C}'^{(n)}= &\bigg\{(x_1, x_2, \cdots, x_{d}) \in \mathcal{R}_{\textbf{P}}:
	|x_i-p_1(i)| \leq 2 \Delta'_n, i=1, 2, \cdots, d\bigg\},
\end{align}
where $\Delta'_n = {n^{-\frac{1}{s(|E|-r)}-\frac{\alpha}{4}}}.$ Now, we claim for $m =cn^{\frac{2}{s(|E|-r)} + \alpha}$,
\begin{enumerate}
	\item $\mathbb{P}\left( {\widetilde{\textbf{p}_{\Pi(1) }}}\in \mathcal{B}'^{(n)}\right) \to 1,$
	\item $\mathbb{P}\left( \bigcup\limits_{u=2}^{s} \left\{{\widetilde{\textbf{p}_{\Pi(u)}}}\in \mathcal{B}'^{(n)}\right\}\right) \to 0,$
\end{enumerate}
as $n \to \infty $. This can be shown similar to the proof of Theorem~\ref{r_state_thm}, so the adversary can successfully recover data traces of user $1$.
\end{proof}
{\color{black}
	\hspace{-0.18 in} \textbf{{Discussion $5$}:} Note that the i.i.d.\ case can also be written as a Markov chain with a transition matrix with identical rows; then, $|E|=r^2$. However, for the i.i.d.\ case, if the adversary knows $r-1$ elements of a row, they know that row and all of the others. In other words, if we restrict the users' data models to i.i.d., then we are using a different model where $p_u(i)$'s are restricted in a way to create an i.i.d.\ sequence. This is a different model and is not compatible to our model for the Markov chain where $p_u(i)$'s are drawn independently from some continuous density function, $f_\textbf{P}(\textbf{p}_u)$, on the $(0,1)^{|E|-r}$ hypercube. Thus, the results of Theorem 2 cannot be applied to the i.i.d.\ case.
}

\section{Impact of Dependency on Privacy using Anonymization and Obfuscation}
\label{obfs}
Here, we consider the case when both anonymization and obfuscation techniques are employed, as shown in Figure~\ref{fig:xyz}. We assume similar obfuscation to~\cite{Nazanin_IT}. To obfuscate the users' data points, for each user $u$, we independently generate a random variable $R_u$ that is uniformly distributed between $0$ and $a_n$, where $a_n \in (0,1]$. The value of $R_u$ shows the probability that the user's data point is changed to a different value by obfuscation, and $a_n$ is termed the ``noise level' of the system.
{\color{black}Let $\textbf{Z}_u$ be the vector that contains
the obfuscated version of user $u$'s data points, and ${\textbf{Z}}$ be the collection of ${\textbf{Z}}_u$ for all users,
\[{\textbf{Z}}_u = \begin{bmatrix}
{Z}_u(1) \\ {Z}_u(2) \\ \vdots \\ {Z}_u(m) \end{bmatrix} , \ \ \ {\textbf{Z}} =\left[ {\textbf{Z}}_{1} \ \ {\textbf{Z}}_{2}\ \ \cdots \ \ {\textbf{Z}}_{n}\right].
\]
Thus, the adversary's observation ${\textbf{Y}}$ is the anonymized version of $\textbf{Z}$;
\begin{align}
\no {\textbf{Y}} &=\textrm{Perm}\left(\textbf{Z}_{1}, \textbf{Z}_{2}, \cdots, \textbf{Z}_{n}; \Pi \right) \\
\nonumber &=\left[ \textbf{Z}_{\Pi^{-1}(1)}\ \ \textbf{X}_{\Pi^{-1}(2)}\ \ \cdots \ \ \textbf{Z}_{\Pi^{-1}(n)}\right ] \\
\nonumber &=\left[ {\textbf{Y}}_{1}\ \ {\textbf{Y}}_{2}\ \ \cdots\ \ {\textbf{Y}}_{n}\right]. \ \
\end{align}
}

\subsection{{\color{black}$r$-State} i.i.d.\ Model}
\label{iidr_obf}
Now, assume users' data points can have $r$ possibilities $\left(0, 1, \cdots, r-1\right)$. Similar to Section~\ref{iidr}, we assume $\textbf{p}_u$'s are drawn independently from some continuous density function, $f_\textbf{P}(\textbf{p}_u)$, which has support on a subset of the $(0,1)^{r-1}$ hypercube, and $\textbf{p}_u$, $f_\textbf{P}(\textbf{p}_u)$, and $\mathcal{R}_\textbf{P}$ are defined as in Section~\ref{iidr}.

{\color{black}
To create a noisy version of data samples, for each user $u$, we independently generate a
random variable $R_u$ that is uniformly distributed between $0$ and $a_n$, where $a_n \in (0,1]$\footnote{{\color{black}It is desirable that our results are true over the largest set of strategies that users can employ. In fact, our results would apply to a general set of distributions and are true for any random noise with support that extends out to the maximum amount of $a_n$. The reason that we have used a uniformly random noise is that we want to have a similar mechanism as~\cite{Nazanin_IT} to have a good comparison between the results of this paper and~\cite{Nazanin_IT} to show that dependency is a significant detriment to the privacy of users.}}.} Then, the obfuscated data is obtained by passing the users' data through an $r$-ary symmetric channel with a random error probability $R_u$, so for $j \in \{ 0, 1, \cdots, r-1\}$:
\[
\mathbb{P}({Z}_{u}(k)=j| X_{u}(k)=i) =\begin{cases}
1-R_u, & \textrm{for } j=i.\\
\frac{R_u}{r-1}, & \textrm{for } j \neq i.
\end{cases}
\]

The effect of the obfuscation is to alter the probability distribution function of each user's data points in a way that is unknown to the adversary, since it is independent of all past activity of the user, and hence, the obfuscation inhibits user identification. For each user, $R_u$ is generated once and is kept constant for the collection of data points of length $m$, thus providing a very low-weight obfuscation algorithm. {\color{black}

{\color{black}
Now, define
$$Q_u(i)=\mathbb{P}\left(Z_u(k)=i\right),$$
where
\begin{align}
\no Q_u(i)&=P_u(i)(1-R_u)+(1-P_u(i))R_u\\
&=P_u(i)+(1-2P_u(i))R_u.\ \
\label{Q}
\end{align}	
The vectors $\textbf{Q}_u$ and $\textbf{Q}$ which contain the obfuscated probabilities are defined as below:
\[\textbf{Q}_u= \begin{bmatrix}
Q_u(1) \\ Q_u(2) \\ \vdots \\Q_u(r-1) \end{bmatrix}, \ \ \ \textbf{Q} =\left[\textbf{Q}_{1}\ \ \textbf{Q}_{2}\ \ \cdots\ \ \textbf{Q}_{n}\right],
\]
and the vector containing the permutation of those probabilities after anonymization is $\textbf{W}$. Thus,
\begin{align}
\no {\textbf{W}} &=\textrm{Perm}\left(\textbf{Q}_1, \textbf{Q}_2, \cdots, \textbf{Q}_n; \Pi \right) \\
\nonumber &=\left[\textbf{Q}_{\Pi^{-1}(1)}\ \ \textbf{Q}_{\Pi^{-1}(2)}\ \ \cdots\ \ \textbf{Q}_{\Pi^{-1}(n)}\right ] \\
\nonumber &=\left[{\textbf{W}}_{1}\ \ {\textbf{W}}_{2}\ \ \cdots\ \ {\textbf{W}}_{n}\right]. \ \
\end{align}
}

\begin{thm}\label{r_state_thm_converse_obfs}
	For the above $r$-state model, if ${\textbf{Z}}$ is the obfuscated version of $\textbf{X}$, and ${\textbf{Y}}$ is the anonymized version of ${\textbf{Z}}$ as defined above, the size of the group including user $1$ is $s$, and
	\begin{itemize}
		\item {\color{black}$m =\Omega\left(cn^{\frac{2}{s(r-1)} + \alpha}\right)$} for any $\alpha>0$;
		\item $R_u \sim \Unif [0, a_n]$, where {\color{black}$a_n = O\left(n^{-\frac{1}{s(r-1)}-\beta}\right)$} for any $\beta>\frac{\alpha}{4}$;
	\end{itemize}
then, user $1$ has no privacy at time $k$.
\end{thm}

		\hspace{-0.18 in}\textbf{{Discussion $6$}:}
It is insightful to compare this result to~\cite[Theorem 2]{Nazanin_IT}. We can see that when users' traces are dependent, the required level of obfuscation and anonymization to achieve privacy is significantly higher.
Therefore, we see that dependency can significantly reduce the privacy of users. However, note that the asymptotic noise level is still zero in this case. Specifically, if $A_m(u) = \frac{\abs*{\left\{k:{Z}_u(k) \neq X_u(k)\right\}}}{m}$, then
\[\mathbb{E}[A_m(u)]=\mathbb{E}[A_m]= O\left({n^{-\frac{1}{s(r-1)}-\beta}}\right) \to 0,\]
implying that the asymptotic noise level is zero.
$\\$

{\color{black}
\noindent \textbf{Proof of Theorem~\ref{r_state_thm_converse_obfs}:}
\begin{proof}
	The proof of Theorem~\ref{r_state_thm_converse_obfs} is similar to the proof of Theorem~\ref{r_state_thm} and consists of three parts:
	\begin{itemize}
		\item \textbf{First step:} Showing the adversary can reconstruct the association graph of the obfuscated and anonymized version of data with an arbitrarily small error probability.
		\item \textbf{Second step:} Showing the adversary can uniquely identify Group $1$ with an arbitrarily small error probability.
		\item \textbf{Third step:} Showing the adversary can successfully identify all of the members of Group $1$ with an arbitrarily small error probability.
	\end{itemize}

	\hspace{-0.18 in}\textbf{First step: Reconstruction of the association graph:} In Lemma~\ref{lem1}, we show that for the case of anonymization, the adversary can reconstruct the entire association graph of the anonymized data with an arbitrarily small error probability if the number of the adversary's observations per user $(m)$ is bigger than $(\log n)^3$. Since obfuscation is done independently (from other users' obfuscation and from users' data), it does not change the association graph. Therefore, since $n^{\frac{2}{s(r-1)}}>(\log n)^3$, we can use Lemma~\ref{lem1} to show the adversary can reconstruct the association graph of the obfuscated and anonymized data with an arbitrarily small error probability.

\hspace{-0.18 in}\textbf{Second step: Identifying Group $1$ among all of the groups:} Now, assume the size of Group $1$ is $s$. Without loss of generality, suppose the members of Group $1$ are users $\{1,2, \cdots , s\}$, so there are at most $\frac{n}{s}$ groups of size $s$. We call these Groups $1, 2, \cdots, \frac{n}{s}$. The adversary needs to first identify the Group $1$ among all of these groups.
	
	According to Section~\ref{iidr}, $\Sigma_s$ is defined as the set of all permutation on $s$ elements, and $\textbf{P}^{(l)}$ is a vector which contains probability distributions of users belong to Group $l$, and $\widetilde{\textbf{P}_{\Pi}^{(l)}}$ is a vector which contains the estimate of adversary about the probability distribution of users belong to Group $l$. For example, For Group $1$, we have
	\[\textbf{P}^{(1)}=\left[\textbf{p}_1 \ \ \textbf{p}_2 \ \ \cdots \ \ \textbf{p}_s\right],\]
	and
	\[\widetilde{\textbf{P}_{\Pi}^{(1)}}=\left[\widetilde{\textbf{p}_{\Pi(1)}} \ \ \widetilde{\textbf{p}_{\Pi(2)}} \ \ \cdots \ \ \widetilde{\textbf{p}_{\Pi(s)}}\right].\] 		
	We claim for $m=cn^{\frac{2}{s(r-1)}+\alpha}$, $a_n = c'n^{-\left(\frac{1}{s(r-1)}+\beta\right)}$, and large enough $n$,
	\begin{itemize}
		\item $\mathbb{P}\left(D\left(\textbf{P}^{(1)}, \widetilde{\textbf{P}_{\Pi}^{(1)}}\right)\leq \Delta_n\right) \to 1, $
		\item $\mathbb{P}\left(\bigcup\limits_{l=2}^{\frac{n}{s}} \left\{ D\left(\textbf{P}^{(1)},\widetilde{\textbf{P}_{\Pi}^{(l)}}\right)\leq \Delta_n \right\}\right) \to 0$ ,
	\end{itemize}
	where $\Delta_n= {n^{-\frac{1}{s(r-1)}-\frac{\alpha}{4}}}$. As in Section~\ref{iidr},
	\begin{align}
	\no \mathcal{F}^{(n)}= & \left\{(\textbf{x}_1, \textbf{x}_2, \cdots ,\textbf{x}_{s}) \in \left(\mathbb{R}^{(r-1)}\right)^s:
	\max\limits_u\{|\textbf{x}_u-\textbf{p}_u|\}\leq \Delta_n, u=1, 2, \cdots ,s\right\},
	\end{align}
	\begin{align}
	\no \mathcal{H}^{(n)}= &\left\{(\textbf{x}_1, \textbf{x}_2, \cdots ,\textbf{x}_{s}) \in \left(\mathbb{R}^{(r-1)}\right)^s:\max\limits_u\{|\textbf{x}_u-\textbf{p}_u|\} \leq 2\Delta_n, u=1, 2, \cdots ,s\right\}.
	\end{align}

	First, we prove, for large enough $n$,
	\[\mathbb{P}\left(D\left(\textbf{P}^{(1)},\widetilde{\textbf{P}_{\Pi}^{(1)}}\right)\leq \Delta_n\right) \to 1.\]
	Note that for all $u\in\{1, 2, \cdots, n\}$ and all $i \in \{1, 2, \cdots, r-1\}$, the adversary computes $\widetilde{p_u(i)}$ as follow:
\begin{align}
\widetilde{p_{u}(i)}=\frac{\abs*{\left\{k: Y_u(k)=i\right\}}}{m},\ \
\label{tild3}
\end{align}
and as a result,
\begin{align}
\widetilde{p_{\Pi(u)}(i)}=\frac{\abs*{\left\{k: Z_u(k)=i\right\}}}{m}=\frac{\widebreve{M}_{u}(i)}{m},\ \
\label{tild4}
\end{align}
where $\widebreve{M}_{u}(i)= \abs*{\left\{k: Z_u(k)=i\right\}}.$
	Now, for all $u \in \{1,2, \cdots,n\}$ and all $i \in \{0,1, \cdots,r-1\}$, we have
	\begin{align}
	\no \mathbb{P}\left(\bigg |\frac{\widebreve{M}_u(i)}{m}-p_u(i)\bigg |\leq \Delta_n\right) &= \mathbb{P}\left(p_u(i)-\Delta_n\leq \frac{\widebreve{M}_u(i)}{m}\leq p_u(i)+\Delta_n \right)\\
	\nonumber &= \mathbb{P}\left(p_u(i)-\Delta_n-q_u(i)\leq\frac{\widebreve{M}_u(i)}{m}-q_u(i)\leq p_u(i)+\Delta_n-q_u(i) \right).\ \
	\end{align}
	Note that for all $u \in \{1,2, \cdots,n\}$ and all $i \in \{0,1, \cdots,r-1\}$, we have
	\begin{align}
	\no \abs*{p_u(i)-q_u(i)} &=|1-2p_u(i)|R_u\\
	\no & \leq R_u \leq a_n,
	\end{align}
	so, we can conclude for all $u \in \{1,2, \cdots,n\}$ and all $i \in \{0,1, \cdots,r-1\}$,
	\begin{align}
	\no \mathbb{P}\left(\bigg |\frac{\widebreve{M}_u(i)}{m}-p_u(i)\bigg |\leq \Delta_n\right) &= \mathbb{P}\left(p_u(i)-\Delta_n-q_u(i)\leq \frac{\widebreve{M}_u(i)}{m}-q_u(i)\leq p_u(i)+\Delta_n-q_u(i) \right)\\
	\nonumber &\geq\mathbb{P}\left(-\Delta_n+ a_n\leq \frac{\widebreve{M}_u(i)}{m}-q_u(i)\leq -a_n+\Delta_n \right)\\
	&= \mathbb{P}\left(\abs*{\frac{\widebreve{M}_u(i)}{m}-q_u(i)}\leq m (\Delta_n-a_n) \right).\ \
	\label{obfs1}
	\end{align}
	By employing a Chernoff bound, we have
	\begin{align}
	\no \mathbb{P}\left(\abs*{\frac{\widebreve{M}_u(i)}{m}-q_u(i)}\leq m (\Delta_n-a_n) \right) &\geq 1-2e^{-\frac{(\Delta_n-a_n)^2}{3q_u(i)}} \\
	\no &\geq 1-2e^{-\left(\frac{1}{3q_u(i)}\right)\left(cn^{\frac{2}{s(r-1)}+\alpha}\right)\left(\frac{1}{n^{\frac{1}{s(r-1)}+\frac{\alpha}{4}}}- \frac{c'}{n^{\frac{1}{s(r-1)} + \beta}}\right)^2 }\\
	&\geq 1-2e^{-\frac{1}{3}\left(cn^{\frac{2}{s(r-1)}+\alpha}\right)\left(\frac{1}{n^{\frac{1}{s(r-1)}+\frac{\alpha}{4}}}- \frac{c'}{n^{\frac{1}{s(r-1)} + \beta}}\right)^2}\ \
	\label{obfs2}
	\end{align}
	Now from (\ref{obfs1}) and (\ref{obfs2}), we can conclude for all $u \in \{1,2, \cdots,n\}$ and all $i \in \{0,1, \cdots,r-1\}$,
	\begin{align}
	\no \mathbb{P}\left(\bigg |\frac{\widebreve{M}_u(i)}{m}-p_u(i)\bigg |\leq \Delta_n\right) \geq 	 1-2e^{-\frac{1}{3}\left(cn^{\frac{2}{s(r-1)}+\alpha}\right)\left(\frac{1}{n^{\frac{1}{s(r-1)}+\frac{\alpha}{4}}}- \frac{c'}{n^{\frac{1}{s(r-1)} + \beta}}\right)^2},\ \
	\end{align}		
	and
	\begin{align}
 \mathbb{P}\left(\bigg |\frac{\widebreve{M}_u(i)}{m}-p_u(i)\bigg |\geq \Delta_n\right) &\leq 	 2e^{-\frac{1}{3}\left(cn^{\frac{2}{s(r-1)}+\alpha}\right)\left(\frac{1}{n^{\frac{1}{s(r-1)}+\frac{\alpha}{4}}}- \frac{c'}{n^{\frac{1}{s(r-1)} + \beta}}\right)^2}.\ \
	\label{obfs3}
	\end{align}	
	
	Now, for all $u \in \text{Group }1$, all $i \in \{1, 2, \cdots, r-1\}$ and any $\beta>\frac{\alpha}{4}$, (\ref{obfs3}) and the union bound yield
	\begin{align}
	\no \mathbb{P}\left(D\left(\textbf{P}^{(1)},\widetilde{\textbf{P}_{\Pi}^{(1)}}\right)\geq \Delta_n\right) &\leq \sum\limits_{u=1}^{s}\sum\limits_{i=1}^{r-1} \mathbb{P}\left(\bigg |\frac{\widebreve{M}_u(i)}{m}-p_u(i)\bigg |\geq \Delta_n\right) \\
	\no &\leq 2s(r-1)e^{-\frac{1}{3}\left(cn^{\frac{2}{s(r-1)}+\alpha}\right)\left(\frac{1}{n^{\frac{1}{s(r-1)}+\frac{\alpha}{4}}}- \frac{c'}{n^{\frac{1}{s(r-1)} + \beta}}\right)^2} \to 0,\ \
	\end{align}
	as $n \to \infty.$
	As a result,
	\begin{align}
	\no \mathbb{P}\left(D\left(\textbf{P}^{(1)},\widetilde{\textbf{P}_{\Pi}^{(1)}}\right)\leq \Delta_n\right) \to 1,\ \
	\end{align}
as $n \to \infty$.

	In the next step, we prove $\mathbb{P}\left(\bigcup\limits_{l=2}^{\frac{n}{s}} \left\{ D\left(\textbf{P}^{(1)},\widetilde{\textbf{P}_{\Pi}^{(l)}}\right)\leq \Delta_n \right\}\right) \to 0$. For all groups other than Group $1$, we have
	\[(4 \Delta_n)^{s(r-1)}\delta_1\leq \mathbb{P}\left( \textbf{P}^{(l)} \in \mathcal{H}'^{(n)}\right) \leq (4 \Delta_n)^{s(r-1)}\delta_2,\]
	and as a result,
	\begin{align}
	\no \mathbb{P}\left(\textbf{P}^{(l)} \in \mathcal{H}^{(n)}\right) &\leq \delta_2 (4\Delta_n)^{s(r-1)}\\
	\no &=\delta_2 4^{s(r-1)}\frac{1}{n^{1+\frac{\alpha}{4}s(r-1)}}.\ \
	\end{align}
	Similarly, for any $\sigma \in \Sigma_s$,
	\begin{align}
	\no \mathbb{P}\left(\textbf{P}^{(l)}_\sigma \in \mathcal{H}'^{(n)}\right) &\leq \delta_2 (4\Delta_n)^{s(r-1)}\\
	\no &=\delta_2 4^{s(r-1)}\frac{1}{n^{1+\frac{\alpha}{4}s(r-1)}},\ \
	\end{align}
	and since $|\Sigma_s|=s!$, by a union bound,
	\begin{align}
	\no \mathbb{P}\left(\bigcup\limits_{l=2}^{\frac{n}{s}} \left\{ \bigcup\limits_{\sigma \in \Sigma_s} \left\{ \textbf{P}^{(l)}_\sigma \in \mathcal{H}^{(n)}\right\} \right\}\right) &\leq \sum\limits_{l=2}^{\frac{n}{s}} \sum\limits_{\sigma \in \Sigma_s} \mathbb{P}\left(\textbf{P}^{(l)}_\sigma \in \mathcal{H}^{(n)} \right)\\
	\no &\leq \frac{n}{s}s!\delta_2 4^{s(r-1)}\frac{1}{n^{1+\frac{\alpha}{4}s(r-1)}}\\
	\no &=(s-1)!4^{s(r-1)}\delta_2{n^{-\frac{\alpha}{4}s(r-1)}}\to 0,\ \
	\end{align}
	as $n \to \infty.$
	Thus, all $\textbf{P}^{(l)}$'s are outside of $\mathcal{H}^{(n)}$ with high probability.
	
	Now, we claim that given all $\textbf{P}^{(l)} $'s are outside of $\mathcal{H}^{(n)}$, $\mathbb{P}\left(\bigcup\limits_{l=2}^{\frac{n}{s}} \left\{ D\left(\textbf{P}^{(1)},\widetilde{\textbf{P}_{\Pi}^{(l)}}\right)\leq \Delta_n \right\}\right)$ is arbitrarily small. In other words, by using a Chernoff bound, it is shown
	$\widetilde{\mathbf{P}^{(l)}}$'s are close to $\mathbf{P}^{(l)}$'s, and they will be outside of $\mathcal{F}^{(n)}$. Thus, for all $u \in \text{Group }l$ and all $i \in \{1, 2, \cdots, r-1\}$, (\ref{obfs3}) and the union bound yield
	\begin{align}
	\no \mathbb{P}\left(D\left(\textbf{P}^{(1)},\widetilde{\textbf{P}_{\Pi}^{(l)}}\right)\leq \Delta_n\right)&=\mathbb{P}\left(D\left(\textbf{P}^{(l)},\widetilde{\textbf{P}_{\Pi}^{(l)}}\right)\geq \Delta_n\right) \\
	\no &\leq \sum\limits_{u=1}^{s}\sum\limits_{i=1}^{r-1} \mathbb{P}\left(\bigg{|}\frac{\widebreve{M}_u(i)}{m}-p_u(i))\bigg{|} \leq \Delta_n \right) \\\no &\leq 2s(r-1)e^{-\frac{1}{3}\left(cn^{\frac{2}{s(r-1)}+\alpha}\right)\left(\frac{1}{n^{\frac{1}{s(r-1)}+\frac{\alpha}{4}}}- \frac{c'}{n^{\frac{1}{s(r-1)} + \beta}}\right)^2}.\ \
	\end{align}
	Now, by using a union bound again, we can conclude, for any $\beta>\frac{\alpha}{4}$,
	\begin{align}
	\no \no \mathbb{P}\left(\bigcup\limits_{l=2}^{\frac{n}{s}} \left\{ D\left(\textbf{P}^{(l)},\widetilde{\textbf{P}_{\Pi}^{(l)}}\right)\leq \Delta_n \right\}\right)& \leq \sum\limits_{l=2}^{\frac{n}{s}} \mathbb{P}\left(D\left(\textbf{P}^{(l)},\widetilde{\textbf{P}_{\Pi}^{(l)}}\right)\leq \Delta_n\right)\\
	\no & \leq 2n(r-1)e^{-\frac{1}{3}\left(cn^{\frac{2}{s(r-1)}+\alpha}\right)\left(\frac{1}{n^{\frac{1}{s(r-1)}+\frac{\alpha}{4}}}- \frac{c'}{n^{\frac{1}{s(r-1)} + \beta}}\right)^2 }\to 0,\ \
	\end{align}
	as $n \to \infty$. Thus, we have shown that for all $l \in \{1, 2, \cdots,\frac{n}{s}\}$, $\widetilde{\mathbf{P}^{(l)}}$'s are close to ${\mathbf{P}^{(l)}}$, which are outside of set $\mathcal{F}^{(n)}$. As a result, as $n \to \infty$,
	\begin{align}
	\no \no \mathbb{P}\left(\bigcup\limits_{l=2}^{\frac{n}{s}} \left\{ D\left(\textbf{P}^{(1)},\widetilde{\textbf{P}_{\Pi}^{(l)}}\right)\leq \Delta_n \right\}\right) \to 0.
	\end{align}
	
	\hspace{-0.18 in}\textbf{Third step: Identifying user $1$ among all of the members of Group $1$:} In this step, we need to prove that after identifying Group $1$, the adversary can correctly identify each member. In other words, the adversary should identify the permutation of Group $1$.
	
	From (\ref{tild3}) and (\ref{tild4}), for all $u\in \{1, 2, \cdots, s\}$ and all $i \in\{1,2, \cdots, r-1\}$, we have
	\begin{align}
	\no \widetilde{p_{u}(i)}=\frac{\abs*{\left\{k: Y_u(k)=i\right\}}}{m},\ \
	\end{align}
	and as a result,
	\begin{align}
	\no \widetilde{p_{\Pi(u)}(i)}=\frac{\abs*{\left\{k: Z_u(k)=i\right\}}}{m}=\frac{\widebreve{M}_{u}(i)}{m},\ \
	\end{align}
	where $\widebreve{M}_{u}(i)= \abs*{\left\{k: Z_u(k)=i\right\}}.$

	As in Section~\ref{iidr}, we define sets $\mathcal{B}^{(n)}$ and $\mathcal{C}^{(n)}$ as
	\begin{align}
	\no \mathcal{B}^{(n)}= & \bigg\{(x_1, x_2, \cdots, x_{r-1}) \in \mathcal{R}_{\textbf{P}}: | x_i-p_1(i)| \leq\Delta'_n, i=1, 2, \cdots, r-1\bigg\},
	\end{align}
	\begin{align}
	\no \mathcal{C}^{(n)}= &\bigg \{(x_1, x_2, \cdots, x_{r-1}) \in \mathcal{R}_{\textbf{P}}:
	|x_i-p_1(i)| \leq 2 \Delta'_n, i=1, 2, \cdots, r-1\bigg\},
	\end{align}
	where $\Delta_n = {n^{-\frac{1}{s(r-1)}-\frac{\alpha}{4}}}.$
	We claim that for $m =cn^{\frac{2}{s(r-1)} + \alpha}$ and $a_n = c'n^{-\left(\frac{1}{s(r-1)}+\beta\right)}$,
	\begin{enumerate}
		\item $\mathbb{P}\left( {\widetilde{\textbf{p}_{\Pi(1) }}}\in \mathcal{B}^{(n)}\right) \to 1,$
		\item $\mathbb{P}\left( \bigcup\limits_{u=2}^{s} \left\{{\widetilde{\textbf{p}_{\Pi(u)}}}\in \mathcal{B}^{(n)}\right\}\right) \to 0,$
	\end{enumerate}
	as $n \to \infty$.
	Thus, the adversary can identify $\Pi(1)$ by examining $\widetilde{\textbf{p}_u}$'s and choosing the only one that belongs to $\mathcal{B}^{(n)}$.

	First, we show that as $n$ goes to infinity,
	\[\mathbb{P}\left( {\widetilde{{\textbf{p}}_{\Pi(1) }}}\in \mathcal{B}^{(n)}\right) \to 1.\]
	According to (\ref{obfs3}) and the union bound, for all $u \in $ Group 1 and all $i \in \{1,2, \cdots, r-1\}$, we have
	\begin{align}
	\no \mathbb{P}\left({\widetilde{\textbf{p}_{\Pi(1)}}} \notin \mathcal{B}^{(n)}\right) &\leq \sum\limits_{i=1}^{r-1}\mathbb{P}\left(\abs*{\frac{\widebreve{M}_1(i)}{m}-p_1(i)} \geq \Delta_n \right)\\
	\no &\leq (r-1)\left(2e^{-\frac{1}{3}\left(cn^{\frac{2}{s(r-1)}+\alpha}\right)\left(\frac{1}{n^{\frac{1}{s(r-1)}+\frac{\alpha}{4}}}- \frac{c'}{n^{\frac{1}{s(r-1)} + \beta}}\right)^2}\right),\ \
	\end{align}
thus,
	\begin{align}
	\no \mathbb{P}\left({\widetilde{\textbf{p}_{\Pi(1)}}} \in \mathcal{B}^{(n)}\right) &\leq 1- (r-1)\left(2e^{-\frac{1}{3}\left(cn^{\frac{2}{s(r-1)}+\alpha}\right)\left(\frac{1}{n^{\frac{1}{s(r-1)}+\frac{\alpha}{4}}}- \frac{c'}{n^{\frac{1}{s(r-1)} + \beta}}\right)^2}\right) \to 1,\ \
	\end{align}
as $n \to \infty$.
	
	Now, we need to show that as $n$ goes to infinity,
	\[\mathbb{P}\left( \bigcup\limits_{u=2}^{s} \left\{{\widetilde{\textbf{p}_{\Pi(u)}}}\in \mathcal{B}^{(n)}\right\}\right) \to 0.\]
	First, we show as $n$ goes to infinity,
	\[\mathbb{P}\left(\bigcup\limits_{u=2}^s \left\{\textbf{p}_u \in \mathcal{C'}^{(n)}\right\} \right)\to 0. \]
	Note for all $u \in \{2, 3, \cdots,s\}$,
	\[4 \left(\Delta_n\right)^{r-1} \delta_1 < \mathbb{P}\left( \textbf{p}_u\in \mathcal{C'}^{(n)}\right) < 4 \left(\Delta_n\right)^{r-1} \delta_2,\]
	and according to the union bound,
	\begin{align}
	\no \mathbb{P}\left( \bigcup\limits_{u=2}^s \left\{\textbf{p}_u \in \mathcal{C}^{(n)}\right\} \right) &\leq \sum\limits_{u=2}^s \mathbb{P}\left( \textbf{p}_u\in \mathcal{C}^{(n)}\right) \\
	\nonumber &\leq 4s \left(\Delta_n\right)^{r-1} \delta_2\\
	\nonumber &\leq 4s \frac{1}{n^{\frac{1}{s}+{\frac{\alpha(r-1)}{4}}}} \delta_2 \to 0; \ \
	\end{align}
as $n \to \infty$. Thus, all $\textbf{p}_u$'s are outside of $\mathcal{C}^{(n)}$ with high probability.
	
	Now, we claim that given all $\textbf{p}_u$'s are outside of $\mathcal{C}^{(n)}$, $\mathbb{P}\left({\widetilde{\textbf{p}_{\Pi (u)}}} \in \mathcal{B}^{(n)}\right)$ is arbitrarily small.
	Note that for all $u \in \{2,3, \cdots, s\}$ and all $i \in \{1,2, \cdots, r-1\}$, (\ref{obfs3}) and the union bounds yield
	\begin{align}
	\no \mathbb{P}\left({\widetilde{\textbf{p}_{\Pi(u)}}}\in \mathcal{B}^{(n)}\right) &\leq \mathbb{P}\left(\abs*{{\widetilde{\textbf{p}_{\Pi(u)}}}-\textbf{p}_u } \geq\Delta_n\right) \\
	\no & \leq \sum\limits_{i=1}^{r-1}\mathbb{P}\left(\abs*{{\widetilde{{p}_{\Pi(u)}(i)}}-{p}_u(i) } \geq\Delta_n\right)\\
	\no & \leq 2(r-1)e^{-\frac{1}{3}\left(cn^{\frac{2}{s(r-1)}+\alpha}\right)\left(\frac{1}{n^{\frac{1}{s(r-1)}+\frac{\alpha}{4}}}- \frac{c'}{n^{\frac{1}{s(r-1)} + \beta}}\right)^2}.\ \
	\end{align}
	As a result, by using a union bound again, as $n$ becomes large,
	\begin{align}
	\no \mathbb{P}\left( \bigcup\limits_{u=2}^s \left\{\abs*{\widetilde{\textbf{p}_{\Pi(u)}}-\textbf{p}_u}\geq \Delta_n\right\}\right)\leq s\left(2(r-1)e^{-\frac{1}{3}\left(cn^{\frac{2}{s(r-1)}+\alpha}\right)\left(\frac{1}{n^{\frac{1}{s(r-1)}+\frac{\alpha}{4}}}- \frac{c'}{n^{\frac{1}{s(r-1)} + \beta}}\right)^2}\right) \to 0.
	\end{align}
	Thus, for all $u\in \{2, 3, \cdots, s\}$, $\widetilde{\textbf{p}_{\Pi(u)}}$'s are close to $\textbf{p}_u$'s, thus they will be outside of $\mathcal{B}^{(n)}$. Now, we can conclude as $n \to \infty$ that:
	
	\begin{align}
	\no \mathbb{P}\left( \bigcup\limits_{u=2}^s \left\{{\widetilde{\textbf{p}_{\Pi(u)}}}\in \mathcal{B}^{(n)}\right\}\right) \to 0.
	\end{align}
So the adversary can successfully recover $Z_1(k)$. Since $Z_{1}(k)=X_1(k)$ with probability $1-R_u=1-o(1)$, the adversary can recover $X_{1}(k)$ with vanishing error probability.		
\end{proof}

}

}

\subsection{{\color{black}$r$-State} Markov Chain Model}
\label{markov_thm_converse_obfs}
In this section, users' data patterns are modeled using Markov chains and there are $r$ possibilities for users' data patterns. Similar to Section~\ref{markov}, we assume $p_u(i)$'s are drawn independently from some continuous density function, $f_\textbf{P}(\textbf{p}_u)$, on the $(0,1)^{|E|-r}$ hypercube, and $\textbf{p}_u$, $f_\textbf{P}(\textbf{p}_u)$, and $\mathcal{R}_\textbf{P}$ are defined in Section~\ref{markov}.

By using the general idea stated in Section~\ref{markov}, we can now repeat the similar reasoning as Theorem~\ref{r_state_thm_converse_obfs} to show the following theorem.

\begin{thm}\label{markov_thm_obfs}
	For an irreducible, aperiodic Markov chain model, iff ${\textbf{Z}}$ is the obfuscated version of $\textbf{X}$, and ${\textbf{Y}}$ is the anonymized version of ${\textbf{Z}}$ as defined above, the size of the group including user $1$ is $s$, and
	\begin{itemize}
		\item {\color{black}$m =\Omega\left(n^{\frac{2}{s(|E|-r)} + \alpha}\right)$} for any $\alpha>0$;
		\item $R_u \sim \Unif [0, a_n]$, where {\color{black}$a_n = O\left(n^{-\frac{1}{s(|E|-r)}-\beta}\right)$} for any $\beta>\frac{\alpha}{4}$;
	\end{itemize}
then, user $1$ has no privacy at time $k$.
\end{thm}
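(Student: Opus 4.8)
The plan is to reproduce the three-part architecture of the proof of Theorem~\ref{two_state_thm_converse_obfs}, but with the full transition-probability vector $\textbf{p}_u \in \mathbb{R}^{|E|-r}$ playing the role of the scalar $p_u$, and to borrow wholesale the hypercube-separation sets $\mathcal{F}''^{(n)}, \mathcal{H}''^{(n)}, \mathcal{B}''^{(n)}, \mathcal{C}''^{(n)}$ together with the scales $\mathcal{T}''_n$ and $\Delta''_n$ already introduced for the anonymization-only Markov case in Section~\ref{markov}. First I would dispose of the association graph. Because each user's obfuscation random variable $R_u$ is drawn independently of every other user's data and of the permutation $\Pi$, obfuscation can neither create nor destroy cross-user dependence, so the association graph of the obfuscated traces is still $G$. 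Since $m = cn^{\frac{2}{s(|E|-r)}+\alpha} \gg (\log n)^3$, Lemma~\ref{lem1} applied to the obfuscated traces gives $P(\widetilde{F}=F)\to 1$, which settles Part~1 exactly as in the i.i.d.\ obfuscation proof.

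For Parts 2 and 3 I would form, from the observed sequence $Y_u(1),\dots,Y_u(m)$, the empirical transition estimates $\widetilde{\textbf{p}}_u$ as the usual ratio of consecutive-pair counts to state-visit counts (computed on the $Z$-trace, so with hatted counts $\widehat{M}_u(i)$), and show that $\widetilde{\textbf{p}}_u$ concentrates on the true $\textbf{p}_u$ at scale $\mathcal{T}''_n \sim n^{-(\frac{1}{s(|E|-r)}+\frac{\alpha}{4})}$. Concretely, I would prove $P\!\left(D(\textbf{P}^{(1)},\widetilde{\textbf{P}^{(1)}}) \le \mathcal{T}''_n\right)\to 1$, while the density bounds $\delta_1 \le f_{\textbf{P}} \le \delta_2$ on $\mathcal{R}_{\textbf{P}}$ force the probability that any other group's vector (of total dimension $s(|E|-r)$) lands in the enlarged cube $\mathcal{H}''^{(n)}$ to be at most $\frac{n}{s}\, s!\, \delta_2 (4\mathcal{T}''_n)^{s(|E|-r)} = O\!\left(n^{-s(|E|-r)\frac{\alpha}{4}}\right) \to 0$. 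This dimension count is precisely what pins the threshold exponent $\frac{2}{s(|E|-r)}$ and isolates group~$1$; the per-member step then reuses $\mathcal{B}''^{(n)}, \mathcal{C}''^{(n)}$ and $\Delta''_n$ to invert the within-group permutation. Having recovered $Z_1(k)$, I would finish by noting $Z_1(k)=X_1(k)$ with probability $1-R_1 = 1-o(1)$, so $X_1(k)$ is recovered with vanishing error and user~$1$ has no privacy.

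The one genuinely new obstacle, and the place where the hypothesis $\beta > \frac{\alpha}{4}$ is consumed, is controlling the bias that obfuscation injects into $\widetilde{\textbf{p}}_u$. Passing $X_u$ through the $r$-ary symmetric channel independently at each time step makes the observed trace a hidden Markov process rather than a Markov chain, so the empirical pair- and visit-counts no longer estimate $\textbf{p}_u$ exactly. Expanding the channel $W(\,\cdot \mid \cdot\,)$, whose off-diagonal mass is at most $R_u \le a_n$, I would show the joint law of $(Z_u(k),Z_u(k+1))$ agrees with the noiseless law up to an additive $O(a_n)$, hence each coordinate satisfies $\widetilde{p}_u(i) = p_u(i) + O(a_n) + (\text{statistical fluctuation})$. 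Since $a_n = c'n^{-(\frac{1}{s(|E|-r)}+\beta)}$ with $\beta > \frac{\alpha}{4}$ yields $a_n = o(\mathcal{T}''_n)$ and $a_n = o(\Delta''_n)$, the obfuscation bias is asymptotically negligible against the separation scale, and the Chernoff exponent $m(\mathcal{T}''_n-a_n)^2 \sim n^{\frac{\alpha}{2}} \to \infty$ survives exactly as in (\ref{obfs2}). The temporal dependence introduced by the Markov structure would be handled as in Section~\ref{markov}: first condition on the visit counts $\widehat{M}_u(i)$, which converge to their stationary values because the chain is irreducible and aperiodic and $m\to\infty$, and then apply the multinomial Chernoff bound to the transitions out of each state.
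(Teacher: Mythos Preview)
Your proposal is correct and follows essentially the same approach the paper intends: the paper gives no standalone proof for Theorem~\ref{markov_thm_obfs} but merely states that one ``repeat[s] the same reasoning as the last sections,'' i.e., merges the obfuscation-bias argument of Theorem~\ref{two_state_thm_converse_obfs} with the $\mathcal{F}''^{(n)},\mathcal{H}''^{(n)},\mathcal{B}''^{(n)},\mathcal{C}''^{(n)}$ machinery and the condition-on-visit-counts multinomial step of Theorem~\ref{markov_thm}, which is exactly what you do. If anything, your write-up is more explicit than the paper's, since you flag the hidden-Markov nature of the obfuscated trace and argue directly that the induced $O(a_n)$ bias in the empirical transition estimates is $o(\mathcal{T}''_n)$ under $\beta>\frac{\alpha}{4}$; the paper leaves this point implicit.
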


{\color{black}
	\section{More General Setting for the Association Graph}
	\label{appendixA}
The association graph structure that we have studied so far was somewhat general except for one aspect: We assumed that people in a group can have dependency but they are independent from members of other groups. It is natural to assume that there could be dependency between members of each group and outside members. Here we discuss how to apply the developed results to this more general setting. 

Similar to~\cite{Com1, Com2, Com3,Com4, Com5, Com6,Com7,Com8, Com_Negar, Com_Negar2}, we consider a community structure with strong intra-community connections and weak inter-community connection. In the community structure the nodes of the network can be grouped into sets of users such that each set of users is densely connected internally as shown in Figure~\ref{fig:com1}. Here, we also assume that the adversary has some knowledge about all of the covariances between users in addition to the marginal probability distributions: they know whether the value of each covariance is less than or higher than a specific threshold. We show that the adversary can reliably reconstruct the entire association graph for \textit{the anonymized version of the data} (i.e., the observed data traces) with relatively few observations.
	
	\begin{figure*}[t!]
		\centering
		\subfloat[A sketch of a small network displaying community structure.]{
			\includegraphics[width=0.5\columnwidth]{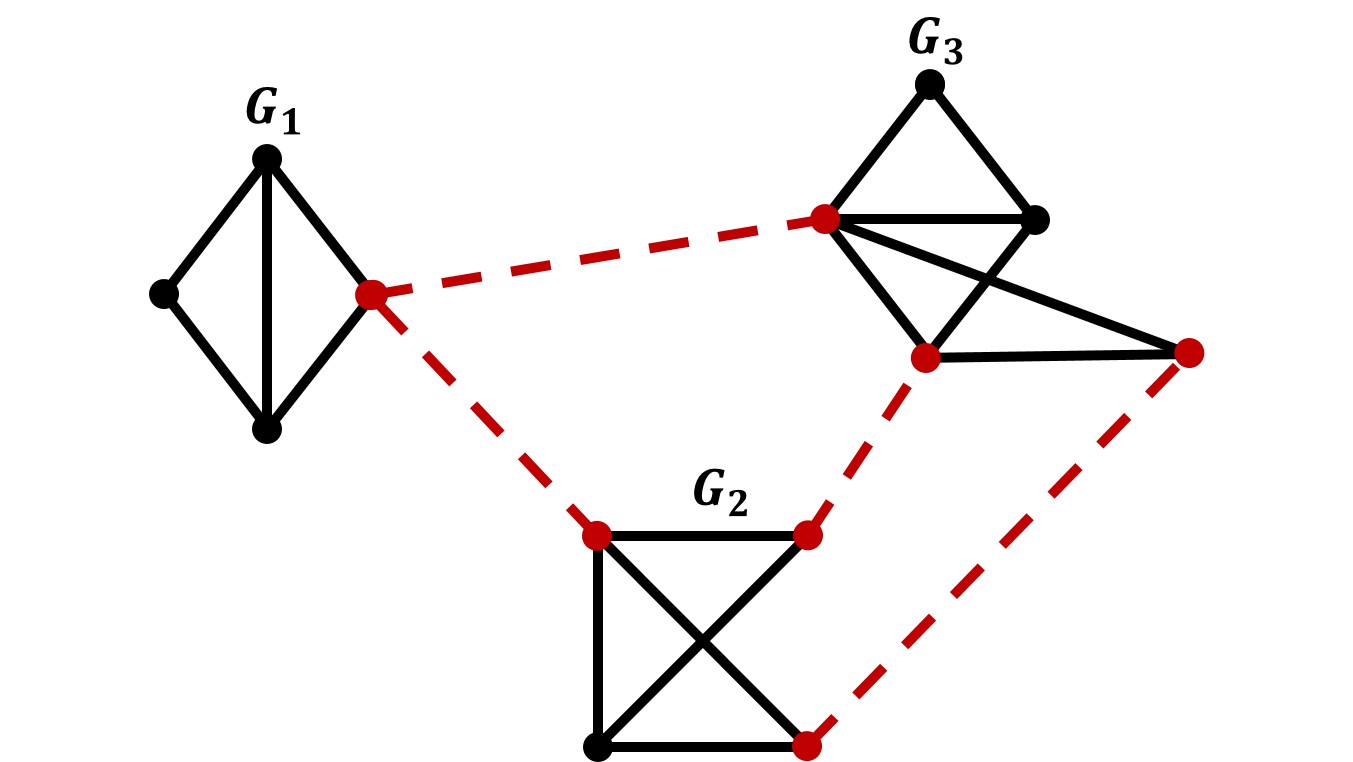}
			\label{fig:com1}
		}
		\subfloat[The association graph consists of disjoint subgraphs.]{
			\includegraphics[width=0.5\columnwidth]{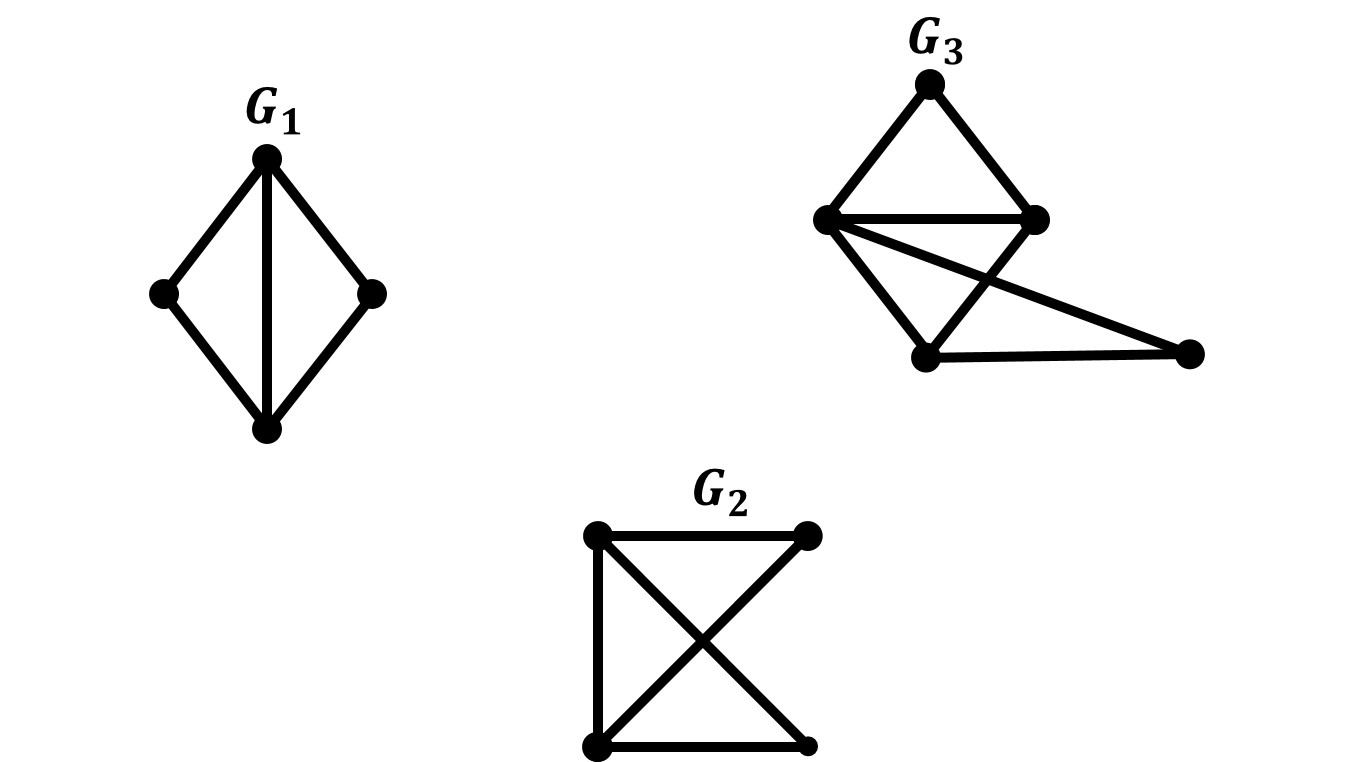}
			\label{fig:com2}
		}
		\caption{The adversary uses their prior knowledge to break inter-community edges.}
		\label{fig:com33}
	\end{figure*}
	
	Let $G(\mathcal{V},F)$ denote the association graph with set of nodes $\mathcal{V}$, $(|\mathcal{V}|=n)$, and set of edges $F$. In this case, we use an association graph based on a threshold as follows: we assume two vertices (users) are connected if their data sets are strongly correlated, and are not connected if their data sets are weakly correlated. More specifically,
\begin{itemize}
	\item $(u,u') \notin F$ iff $\cov\left(X_u(k); X_{u'}(k)\right)\leq\epsilon_1,$
	\item $(u,u') \in F$ iff $\cov\left(X_u(k);X_{u'}(k)\right)\geq \epsilon_2,$
\end{itemize}
where $\cov\left(X_u(k); X_{u'}(k)\right)$ is the covariance between the $k^{th}$ data point of user $u$ and user $u'$.

\begin{lem}
	\label{lem3}
		Consider a general association graph, $G(\mathcal{V},F)$, based on the threshold as described above. If the adversary obtains $m=(\log n)^3$ anonymized observations per user, they can construct $\widetilde{G}=\widetilde{G}(\widetilde{\mathcal{V}}, \widetilde{F})$, where $\widetilde{\mathcal{V}}=\{\Pi(u):u \in \mathcal{V}\}=\mathcal{V}$, such that with high probability, for all $u, u' \in \mathcal{V}$; $ (u,u')\in F$ iff $\left(\Pi(u),\Pi(u')\right)\in \widetilde{F}$. We write this statement as $\mathbb{P}(\widetilde{G}\simeq G)\to 1.$
\end{lem}
	
	\begin{proof}

		Note for $u, u' \in \{1, 2, \cdots, n\}$, we write $v=\Pi(u)$ and $v'=\Pi(u')$. We provide an algorithm for the adversary that with high probability obtains all edges of $F$ correctly. For each pair $w$ and $w'$, the adversary computes $\widetilde{Cov_{vv'}}$ as follows:
		\begin{align}
		\widetilde{Cov_{vv'}}&=\frac{\sum\limits_{i=1}^{r-1}\sum\limits_{j=1}^{r-1}ij\widehat{M}_{vv'}(i,j)}{m}-\frac{\sum\limits_{i=1}^{r-1}i\widehat{M}_{v}(i)}{m}\frac{\sum\limits_{i=1}^{r-1}i\widehat{M}_{v'}(i)}{m}\ \
		\label{cov}
		\end{align}
		where
		\[\widehat{M}_{vv'}(i,j)= \abs*{\left\{k: Y_v(k)=i, Y_{v'}(k)=j\right\}}.\]
		\[\widehat{M}_{v}(i)= \abs*{\left\{k: Y_v(k)=i\right\}}.\]
		\[\widehat{M}_{v'}(j)= \abs*{\left\{k: Y_{v'}(k)=j\right\}}.\]
		After observing $m=\left(\log n\right)^3$ data points per user and computing the above expressions, the adversary constructs $\widetilde{G}$ in the following way:
		\begin{itemize}
			\item If $|\widetilde{Cov_{vv'}}|\leq\epsilon_1$, then $(v,v')\notin \widetilde{F}.$
			\item If $|\widetilde{Cov_{vv'}}|\geq\epsilon_2$, then $(v,v')\in \widetilde{F}.$
		\end{itemize}
		We show the above method yields $\mathbb{P}(\widetilde{G}\simeq G)\to 1$ as $n \to \infty$, as follows. Note
		\[\widehat{M}_{vv'}(i,j) \sim { \Bino } (m, w_{vv'}(i,j)),\] \[\widehat{M}_{v}(i) \sim { \Bino } (m, w_{v}(i)),\] \[M_{v'}(i) \sim { \Bino } (m, w_{v'}(i)),\]
		where $w_{vv'}(i,j)=\mathbb{P}\left(Y_v(k)=i, Y_{v'}(k)=j\right)$, $w_{v}(i)=\mathbb{P}\left(Y_v(k)=i\right)$, and $w_{v'}(i)=\mathbb{P}\left(Y_{v'}(k)=i\right)$. From proof of Lemma~\ref{lem1}, by using (\ref{eq_1_1}), for all $v,v' \in \{1,2,\cdots, n\}$ and all $i, j \in \{0, 1,\cdots, r-1\}$, we have
	\begin{align}
	0\leq mw_{vv'}(i,j)-m^{\frac{3}{4}} \leq \widehat{M}_{vv'}(i,j) \leq mw_{vv'}(i,j)+m^{\frac{3}{4}}.\ \
	\label{M4}
	\end{align}
	\begin{align}
	0\leq mw_v(i)-m^{\frac{3}{4}} \leq \widehat{M}_v(i) \leq mw_v(i)+m^{\frac{3}{4}}.\ \
	\label{M5}
	\end{align}
	\begin{align}
	0\leq mw_{v'}(i)-m^{\frac{3}{4}} \leq \widehat{M}_{v'}(i) \leq mw_{v}(i)+m^{\frac{3}{4}}.\ \
	\label{M6}
	\end{align}
$A_{vv'}(i,j)$ is defined as the event that (\ref{M4}), (\ref{M5}), and (\ref{M6}) are all valid, thus, based on proof of Lemma~\ref{lem1}, we have
\begin{align}
\mathbb{P}\left(\bigcap\limits_{v=1}^n \bigcap\limits_{v'=1}^n\bigcap\limits_{i=0}^{r-1} \bigcap\limits_{j=0}^{r-1}\left\{A_{vv'}(i,j)\right\}\right) \to 1,\ \
\label{A2}
\end{align}
as $n \to \infty$. Let us define $ C_{vv'
}=\bigcap\limits_{i=1}^{r-1}\bigcap\limits_{j=1}^{r-1}\left\{A_{vv'}(i,j)\right\}$. Now, if $C_{vv'}$ is true for some $v,v' \in \{1,2,\cdots, n\}$, according to (\ref{cov}), we have
	\begin{align}
		\no \widetilde{Cov_{vv'}}&=\frac{\sum\limits_{i=1}^{r-1}\sum\limits_{j=1}^{r-1}ij\widehat{M}_{vv'}(i,j)}{m}-\frac{\sum\limits_{i=1}^{r-1}i\widehat{M}_{v}(i)}{m}\frac{\sum\limits_{i=1}^{r-1}i\widehat{M}_{v'}(i)}{m}\\
		\no &\leq\frac{\sum\limits_{i=1}^{r-1}\sum\limits_{j=1}^{r-1}ij\left(mw_{vv'}(i,j)+m^{\frac{3}{4}}\right)}{m}-\frac{\sum\limits_{i=1}^{r-1}i\left(mw_{v}(i)-m^{\frac{3}{4}}\right)}{m}\frac{\sum\limits_{i=1}^{r-1}i\left(mw_{v'}(i)-m^{\frac{3}{4}}\right)}{m}\\
		\no&= \sum\limits_{i=1}^{r-1}\sum\limits_{j=1}^{r-1}ij w_{vv'}(i,j)-\sum\limits_{i=1}^{r-1}iw_v(i)\sum\limits_{i=1}^{r-1}iw_{v'}(i)\\
		\no & \hspace{0.5 in}+\frac{r^2(r-1)^2}{4}m^{-\frac{1}{4}}+\frac{r(r-1)}{2} \sum\limits_{i=1}^{r-1}i(w_v(i)+w_{v'}(i))m^{-\frac{1}{4}}-\frac{r^2(r-1)^2}{4}m^{-\frac{1}{2}}\\
		&\leq Cov_{vv'}+\frac{r^2(r-1)^2}{4}m^{-\frac{1}{4}}+\frac{r(r-1)}{2} \sum\limits_{i=1}^{r-1}i(w_v(i)+w_{v'}(i))m^{-\frac{1}{4}}+\frac{r^2(r-1)^2}{4}m^{-\frac{1}{2}},\ \
		\label{eq11}
		\end{align}
		where $Cov_{vv'}=\sum\limits_{i=1}^{r-1}\sum\limits_{j=1}^{r-1}ij w_{vv'}(i,j)-\sum\limits_{i=1}^{r-1}iw_v(i)\sum\limits_{i=1}^{r-1}iw_{v'}(i)$.
		Similarly,
		\begin{align}
		\no \widetilde{Cov_{vv'}}&=\frac{\sum\limits_{i=1}^{r-1}\sum\limits_{j=1}^{r-1}ij\widehat{M}_{vv'}(i,j)}{m}-\frac{\sum\limits_{i=1}^{r-1}i\widehat{M}{w}(i)}{m}\frac{\sum\limits_{i=1}^{r-1}i\widehat{M}_{v'}(i)}{m}\\
		\no &\geq\frac{\sum\limits_{i=1}^{r-1}i\left(mw_{v}(i)-m^{\frac{3}{4}}\right)}{m}-\frac{\sum\limits_{i=1}^{r-1}i\left(mw_{v}(i)+m^{\frac{3}{4}}\right)}{m}\frac{\sum\limits_{i=1}^{r-1}i\left(mw_{v'}(i)+m^{\frac{3}{4}}\right)}{m}\\
		&= Cov_{vv'}-\frac{r^2(r-1)^2}{4}m^{-\frac{1}{4}}-\frac{r(r-1)}{2} \sum\limits_{i=1}^{r-1}i(w_v(i)+w_{v'}(i))m^{-\frac{1}{4}}-\frac{r^2(r-1)^2}{4}m^{-\frac{1}{2}}.\ \
		\label{eq12}
		\end{align}
		Now, by using (\ref{eq11}) and (\ref{eq12}), we have
		\begin{align}
		\abs*{\widetilde{Cov_{vv'}}-Cov_{vv'}} \leq \frac{r^2(r-1)^2}{4}m^{-\frac{1}{4}}+\frac{r(r-1)}{2} \sum\limits_{i=1}^{r-1}i(w_v(i)+w_{v'}(i))m^{-\frac{1}{4}}+\frac{r^2(r-1)^2}{4}m^{-\frac{1}{2}}.\ \
		\label{cov1}
		\end{align}
Let us define event $D_{vv'}$ as the event that (\ref{cov1}) is valid, thus, we have shown, for all $v,v' \in \{1,2,\cdots, n\}$, $ C_{vv'} \subseteq D_{vv'}$, and consequently,
$$\left\{\bigcap\limits_{v=1}^n \bigcap\limits_{v'=1}^n\left\{ C_{vv'}\right\} \right\} \subseteq \left\{\bigcap\limits_{v=1}^n \bigcap\limits_{v'=1}^n\left\{D_{vv'}\right\} \right\}. $$
As a result,
\begin{align}
\no \mathbb{P}\left(\bigcap\limits_{v=1}^n \bigcap\limits_{v'=1}^n\left\{D_{vv'}\right\}\right) \geq \mathbb{P}\left(\bigcap\limits_{v=1}^n \bigcap\limits_{v'=1}^n\left\{C_{vv'}\right\}\right).\ \
\end{align}
Thus, by using (\ref{A2}), we have 
\begin{align}
\no \mathbb{P}\left(\bigcap\limits_{v=1}^n \bigcap\limits_{v'=1}^n\left\{D_{vv'}\right\}\right) \to 1,\ \
\end{align}
as $n \to \infty$. Hence, with high probability, for all $v,v' \in \{1,2,\cdots, n\}$ , we have
	
	\begin{align}
		\abs*{\widetilde{Cov_{vv'}}-Cov_{vv'}} \leq \frac{r^2(r-1)^2}{4}m^{-\frac{1}{4}}+\frac{r(r-1)}{2} \sum\limits_{i=1}^{r-1}i(w_v(i)+w_{v'}(i))m^{-\frac{1}{4}}+\frac{r^2(r-1)^2}{4}m^{-\frac{1}{2}}.\ \
	\end{align}
Thus, we can conclude, with high probability, for all $v,v' \in\{1, 2, \cdots, n\}$, $\widetilde{Cov_{vv'}}$'s are close to $Cov_{vv'}$'s.

Now, if $(u,u')$ is an inter-community edge, the adversary knows $Cov_{uu'} \leq \epsilon_1$, and as a result, $Cov_{vv'} \leq \epsilon_1$, thus, the adversary removes that edge. Now, we can conclude $(v,v') \notin \widetilde{F}$, and in other words, $(\Pi(u), \Pi(u')) \notin \widetilde{F}.$
This is true with high probability, simultaneously for all $u, u' \in \{1, 2, \cdots, n\}$ where $(u,u')$ is an inter-community edge.

 In addition, if $(u,u')$ is an intra-community edge, the adversary knows $Cov_{uu'} \geq \epsilon_2$, and as a result, $Cov_{vv'} \geq \epsilon_2$. Now, we can conclude $(v,v') \in \widetilde{F}$, and in other words, $(\Pi(u), \Pi(u')) \in \widetilde{F}.$ This is true with high probability, simultaneously for all $u, u' \in \{1, 2, \cdots, n\}$ where $(u,u')$ is an intra-community edge.

		As a result, for large enough $n$, we have $\mathbb{P}\left(\widetilde{G}\simeq G\right)\to 1$, so the adversary can reconstruct the association graph of the anonymized version of the data which is based on a threshold with an arbitrarily small error probability.	
	\end{proof}
	Now, the adversary has a graph structure shown in Figure~\ref{fig:com2}, where subgraph $G_1$ is a connected graph with $s_1$ vertices which is disjoint from the reminder of the association graph $(G'=G-G_1)$. In other words,
	\[G=G_1 \cup G'.\]

Now, we can repeat the same reasoning as that in the proof of Theorem~\ref{r_state_thm}, Theorem~\ref{markov_thm}, Theorem~\ref{r_state_thm_converse_obfs}, and Theorem~\ref{markov_thm_obfs} to obtain the same results for this case.
	
\hspace{-0.18 in}\textbf{{Discussion $7$}:}
The stochastic block model is a generative model for random graphs~\cite{block1,block2,block3,block4,block5, block6,block7}. Note that there are two key differences between the stochastic block model and the work here. First, in the stochastic block model, the edge set is sampled at random and the probability distributions of edges are the key part of the work, while here the analysis is based on the users' data traces, and the statistical knowledge of the adversary is a key part. Second, in the stochastic block model, nodes within a community connect to nodes in other communities in an equivalent way. In other words, any two vertices $u\in C_{i}$ and $v\in C_{j}$ are connected by an edge with probability $p_{ij}$, where $C_i$ and $C_j$ are different blocks, so all edges between two communities have the same weights or strengths. While here, as shown in Figure~\ref{fig:com1}, there is no need that the inter-community edges, corresponding to the covariance of nodes in separate communities, has the same value as others; in other words, there is no need for the nodes in a community to connect to the nodes in other communities in an equivalent way. In our work, for each of intra-community edges, $Cov_{uu'} \geq \epsilon_2$, and for each of inter-community edges, $Cov_{uu'} \leq \epsilon_1$, thus, edges have different weights.
}

%
%

\section{{\color{black}Improving} Privacy in the Presence of Dependency }
\label{perfect}
In the previous parts of this paper, we argued and demonstrated that inter-user dependency degrades the privacy provided by standard privacy-preserving mechanisms (PPMs). In this section, we discuss how to design PPMs considering inter-user dependency in order to better preserve privacy.
First, note that independent obfuscation alone cannot be sufficient even at a high noise level, because it cannot change the association graph. Therefore, the adversary can still reconstruct the association graph with a small number of observations if we add independent obfuscation noise. To mitigate this issue, we suggest that associated users collaborate in applying the noise when deploying a PPM.

For clarity, we focus on the two-state i.i.d.\ case ($r=2$). In the first part, we also focus on the case the association graph consists of subgraphs with the size of each of them less than or equal to $2$ $(s_l \leq 2)$. Thus, according to Figure~\ref{fig:simpleexam}, there are some connected users and there are also some isolated users. First, we state the following lemma.

\begin{figure}[h]
	\centering
	\includegraphics[width = 0.5\linewidth]{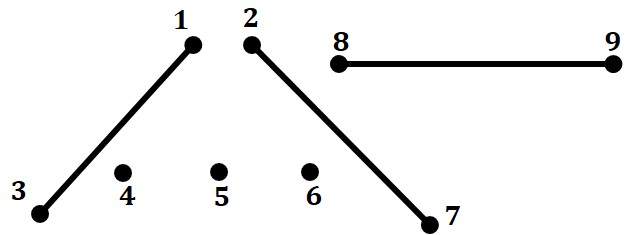}
	\caption{Graph $G$ consists of some subgraphs ($G_l$) with $s_l \leq 2$.}
	\label{fig:simpleexam}
\end{figure}

\begin{lem}
	\label{lem2}
	Let $X_u(k) \sim \Bern (p_u)$ and $X_{u'}(k) \sim \Bern (p_{u'})$; then, there exists an obfuscation technique with an asymptotic noise level
	\[\check{a}(u,{u'})= \frac{{\cov(X_u(k), X_{u'}(k))}}{\max \{p_u, p_{u'}, 1-p_u, 1-p_{u'}\}},\]
	for the dataset of user $u$ and user ${u'}$ such that $\check{Z}_u(k)$ and $\check{Z}_{u'}(k)$ are independent from each other. Note $\check{Z}_u(k)$ and $\check{Z}_{u'}(k)$ are the $k^{th}$ (reported) data point of user $u$ and ${u'}$, respectively, after applying obfuscation with the noise level equal to $\check{a}(u,{u'})$.
\end{lem}

\begin{proof}
	We explain the idea behind this lemma by an example.
	\begin{example}
		Let $X_u(k) \sim \Bern \left(\frac{3}{5}\right)$ and $X_{u'}(k) \sim \Bern \left(\frac{1}{5}\right)$, and let Table~\ref{tab:mathfunc} show the joint probability mass function of $X_u(k)$ and $X_{u'}(k)$.
		
		\begin{table}[ht]
 			 \centering
			\caption{Joint probability mass function of $X_u(k)$ and $X_{u'}(k)$.}
	 \begin{tabular}{l|c|c}
					\backslashbox{$X_u(k)$}{$X_{u'}(k)$} & $0$ & $1$ \\
					\hline
					$0$ & $\frac{7}{20}$& $\frac{1}{20}$ \\	\hline
					$1$ & $\frac{9}{20}$ & $\frac{3}{20}$ \\
				\end{tabular}
			\label{tab:mathfunc}
		\end{table}
		As a result, if we observe $2000$ bits of data, Table~\ref{tab:databits} shows the expected results according to Table~\ref{tab:mathfunc}.
		
		Then, to make $\check{Z}_u(k)$ and $\check{Z}_{u'}(k)$ independent, it is sufficient for $\check{Z}_u(k)|\check{Z}_{u'}(k) =0$ to have the same distribution as $\check{Z}_u(k)|\check{Z}_{u'}(k)=1 $. This means we should have
		\begin{align}
			\no \frac{\mathbb{P}\left(\check{Z}_u(k)=1,\check{Z}_{u'}(k)=1\right)}{\mathbb{P}\left(\check{Z}_{u'}(k)=1\right)}=\frac{\mathbb{P}\left(\check{Z}_u(k)=1,\check{Z}_{u'}(k)=0\right)}{\mathbb{P}\left(\check{Z}_{u'}(k)=0\right)};\ \
		\end{align}
thus, according to Table~\ref{tab:databits2},
		\[\frac{300 (1-\Upsilon)}{100+300}=\frac{900}{700+900} \to \Upsilon=\frac{1}{4},\]
		where $\Upsilon$ is the portion of data points $X_{u}(k)$ that need to be flipped in the fourth region of Table~\ref{tab:databits} (i.e., the region $X_u(k)=1, X_{u'}(k)=1$).
Now, we need to change $\frac{1}{4}\cdot 300=75$ of data bits. As a result, if $X_u(k)=1$ and $X_{u'}(k)=1$, then, we pass $X_u(k)$ through a $BSC(\frac{1}{4})$, and obtain $\check{Z}_u(k)$. Hence, asymptotic noise level is equal to
		\[\check{a}(u,{u'})= \frac{3}{20}\cdot\frac{1}{4}=3.75\%.\]
	\begin{table}[ht]
		 \small
 			 \centering
			\caption{(a) The expected results of $X_u(k)$ and $X_{u'}(k)$ according to Table~\ref{tab:mathfunc} after observing $2000$ bits of data, and (b) The desired results to make $\check{Z}_u(k)$ and $\check{Z}_{u'}(k)$ independent from each other.\\}
			\vspace{-1cm}
			 \subfloat[]{%
			 \hspace{2cm}%
				\begin{tabular}{l|c|c}
					\backslashbox{$X_u(k)$}{$X_{u'}(k)$} & $0$ & $1$ \\
					\hline
					$0$ & $700$& $100$ \\	\hline
					$1$ & $900$ & $300$
					\label{tab:databits}
				\end{tabular}
				 \hspace{2cm}%
			}
						 \subfloat[]{%
			 \hspace{2cm}%
					\begin{tabular}{l|c|c}
					\backslashbox{$\check{Z}_u(k)$}{$\check{Z}_{u'}(k)$} & $0$ & $1$ \\
					\hline
					$0$ & $700$& $175$ \\	\hline
					$1$ & $900$ & $225$
					\label{tab:databits2}
				\end{tabular}
				 \hspace{2cm}%
			}
		\end{table}

It is easy to check that the asymptotic noise level will be given by the equation in Lemma~\ref{lem2}. Specifically, for the above example,
		\[\cov(X_u(k), X_{u'}(k))=\mathbb{P}(X_u(k)=1, X_{u'}(k)=1)-\mathbb{P}(X_u(k)=1)\mathbb{P}(X_{u'}(k)=1)=\frac{3}{20}-\frac{3}{5}\cdot\frac{1}{5}=\frac{3}{100},\]
and we have
		\[\check{a}(u,{u'})=\frac{\cov(X_u(k), X_{u'}(k))}{\max \{p_u, p_{u'}, 1-p_u, 1-p_{u'}\}} = \frac{\frac{3}{100}}{\frac{4}{5}}=3.75\%.\]
	\end{example}
	
	Now, to prove the lemma, apply the above procedure to a general table for a probability mass function. Let $X_u(k) \sim \Bern \left(p_u\right)$ and $X_{u'}(k) \sim \Bern \left(p_{u'}\right)$. Then, to make these two sequences independent, it suffices if $\check{Z}_u(k)|\check{Z}_{u'}(k)=0$ has the same distributions as $\check{Z}_u(k)|\check{Z}_{u'}(k)=1 $. We only prove for the case $\max \{p_u,p_{u'},1-p_u,1-p_{u'}\}=1-p_{u'}$, and the proofs of the other cases are similar to this one. Now, If $X_u(k)=1$ and $X_{u'}(k)=1$, we pass $X_u(k)$ through a $BSC(\Upsilon)$ in order to obtain $\check{Z}_u(k)$. Thus,
	\[\frac{\mathbb{P}\left(X_u(k)=1,X_{u'}(k)=0\right) }{1-p_{u'}}=\frac{\mathbb{P}\left(X_u(k)=1,X_{u'}(k)=1\right)(1-\Upsilon)}{p_{u'}},\]
	and $\Upsilon$ can be calculated as
	\[\Upsilon=1-\frac{p_{u'}}{1-p_{u'}}\frac{\mathbb{P}\left(X_u(k)=1,X_{u'}(k)=0\right)}{\mathbb{P}\left(X_u(k)=1,X_{u'}(k)=1\right)}.\]
Now, we can conclude,
\begin{align}
\no \check{a}(u,{u'})&=\Upsilon \mathbb{P}\left(X_u(k)=1,X_{u'}(k)=1\right)\\
\no &=\left(1-\frac{p_{u'}}{1-p_{u'}}\frac{\mathbb{P}\left(X_u(k)=1,X_{u'}(k)=0\right)}{\mathbb{P}\left(X_u(k)=1,X_{u'}(k)=1\right)}\right) \mathbb{P}\left(X_u(k)=1,X_{u'}(k)=1\right)\\
\no &= \mathbb{P}\left(X_u(k)=1,X_{u'}(k)=1\right)-\frac{p_{u'}}{1-p_{u'}}{\mathbb{P}\left(X_u(k)=1,X_{u'}(k)=0\right)}\\
\no &= \frac{\mathbb{P}\left(X_u(k)=1,X_{u'}(k)=1\right) -p_{u'}\left(\mathbb{P}\left(X_u(k)=1,X_{u'}(k)=1\right)+\mathbb{P}\left(X_u(k)=1,X_{u'}(k)=0\right)\right)}{1-p_{u'}}\\
\no &=\frac{\mathbb{P}\left(X_u(k)=1,X_{u'}(k)=1\right)-p_up_{u'}}{1-p_{u'}}\\
\no &=\frac{\cov(X_u(k), X_{u'}(k))}{\max\{p_u,p_{u'},1-p_u,1-p_{u'}\}}. \ \
\end{align}

\end{proof}

Lemma~\ref{lem2} provides a method to convert correlated data to independent traces. The remaining task is to show that we can achieve perfect privacy after applying such a method. As shown in Figure~\ref{fig:xyzz}, two stages of obfuscation and one stage of anonymization are employed to achieve perfect privacy for users. Note that the first stage of obfuscation is due to Lemma~\ref{lem2} and the second stage (as will be explained in the proof) is the same obfuscation technique given in Theorem 1 of~\cite{Nazanin_IT}. In Figure~\ref{fig:xyzz}, $\check{Z}_u(k)$ shows the (reported) data point of user $u$ at time $k$ after applying the first stage of obfuscation with the noise level equal to
\[\check{a}(u,u')=\frac{\cov(X_u(k), X_{u'}(k))}{\max\{p_u,p_{u'},1-p_u,1-p_{u'}\}}\]
for the dataset of user $u$ and user ${u'}$, ${Z}_u(k)$ shows the (reported) data point of user $u$ at time $k$ after applying the second stage of obfuscation with the noise level equal to
\[a_n=c'n^{-\left(\frac{1}{s}-\beta\right)},\]
and $Y_u(k)$ shows the (reported) data point of user $u$ at time $k$ after applying anonymization.
\begin{figure}[h]
	\centering
	\includegraphics[width = \linewidth]{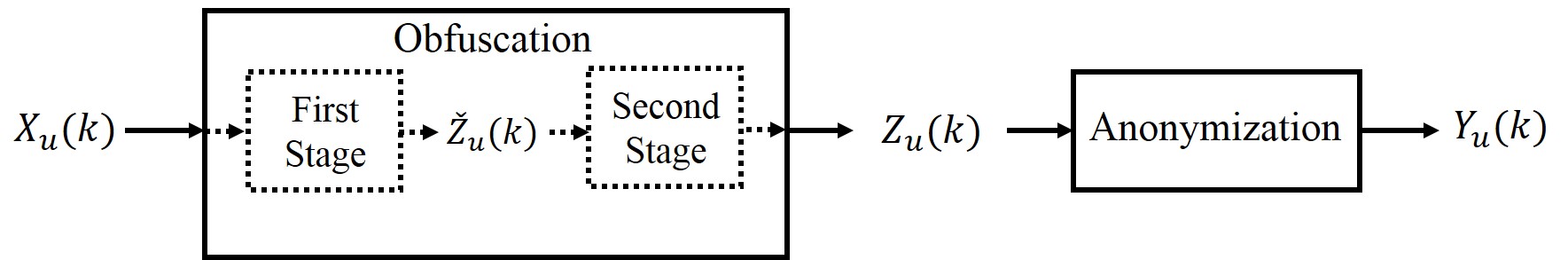}
	\caption{Applying obfuscation and anonymization techniques to the users' data points.}
	\label{fig:xyzz}
\end{figure}

 Consider $G(\mathcal{V},F)$, where $s_l \leq 2$. We have the same model for $p_u$ as in the previous sections: $p_u$ is chosen from some density $f_P(p_u)$ such that, for $\delta_1, \delta_2>0$:
\begin{equation}
\no\begin{cases}
\delta_1<f_P(p_u) <\delta_2, & p_u \in (0,1).\\
f_P(p_u)=0, & p _u\notin (0,1).
\end{cases}
\end{equation}

Also, if $(u,u') \in F$, $\rho_{uu'}$ is chosen according to some density $f_{P} (\rho_{uu'} | p_u,p_{u'})$ with range of $\left[0, \min \left\{\sqrt{\frac{p_u(1-p_{u'})}{p_{u'}(1-p_{u})}}, \sqrt{\frac{p_{u'}(1-p_u)}{p_u(1-p_{u'})}}\right\}\right].$ The following theorem states that we can indeed achieve perfect privacy if we allow collaboration between users.

\begin{thm}
	\label{thm:noise}
	For the two-state model, if ${\textbf{Z}}$ is the obfuscated version of $\textbf{X}$, ${\textbf{Y}}$ is the anonymized version of ${\textbf{Z}}$, and the size of all subgraphs are less than or equal to $2$, there exists an anonymization/obfuscation scheme such that for all $(u,u') \in F$, the asymptotic noise level for users $u$ and $u'$ is at most
	\[a(u,u')= \frac{{\cov (X_u(k), X_{u'}(k))}}{\max \{p_u, p_{u'}, 1-p_u, 1-p_{u'}\}},\]
	to achieve perfect privacy for all users. The anonymization parameter $m=m(n)$ can be made arbitrarily large.
\end{thm}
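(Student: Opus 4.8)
The plan is to reduce the correlated problem to the independent-user setting treated in \cite{nazanin_ISIT2017}, using Lemma \ref{lem2} as the decorrelation primitive. Because every subgraph of $G$ has at most two vertices, the users split into disjoint correlated pairs $(u,u')\in F$ and isolated singletons. First I would apply the collaborative obfuscation of Lemma \ref{lem2} to each pair $(u,u')$, producing obfuscated samples $Z_u(k)$ and $Z_{u'}(k)$ that are independent of one another and whose asymptotic noise level for the pair is exactly $a_e(u,u')=\abs{\cov(X_u(k),X_{u'}(k))}/\max\{p_u,p_{u'},1-p_u,1-p_{u'}\}$. Since the original dependencies live only within pairs, the resulting collection $\{Z_u\}$ is mutually independent across all users.

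Next I would add a second, independent layer of obfuscation to every user: pass each $Z_u(k)$ through a BSC with $R_u\sim\mathrm{Uniform}[0,a_n]$ where $a_n=\omega(n^{-1})$ but $a_n\to 0$ (e.g.\ $a_n=n^{-1}\log n$). This has two purposes. It guarantees that each individual user---in particular the partner $u'$ in a pair, who receives no noise from the asymmetric construction of Lemma \ref{lem2}, and every isolated user---carries obfuscation noise strictly above the $O(n^{-1})$ threshold required for privacy against arbitrary-length traces. At the same time, because $a_n\to 0$, this extra layer contributes vanishing asymptotic noise, so the asymptotic noise level of each pair remains $a_e(u,u')$ and the stated bound is preserved. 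The two noise layers are applied independently, so the reported traces stay mutually independent across users.

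Having arranged independent reported traces, each obfuscated at a level exceeding $O(n^{-1})$, I would then invoke the independent-user perfect-privacy result cited in the discussion of Theorem \ref{two_state_thm_converse_obfs} (Theorem~1 of \cite{nazanin_ISIT2017}): for independent users obfuscated at a level larger than $O(n^{-1})$, $\lim_{n\to\infty} I(X_u(k);\mathbf{Y})=0$ for every $u$ regardless of how large $m=m(n)$ is. Applying this after the anonymization permutation yields perfect privacy for all users with $m$ arbitrary, which is exactly the claim.

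The main obstacle is the reduction step: verifying that the traces produced by Lemma \ref{lem2} genuinely satisfy the hypotheses under which the independent-user theorem was proved. Two points need care. First, the decorrelating noise of Lemma \ref{lem2} is data-dependent and applied asymmetrically within a pair, so I must check that the marginal law of each $Z_u(k)$ is still $\mathrm{Bernoulli}(q_u)$ with $q_u$ possessing a density bounded away from $0$ and $\infty$ on its range---this is what lets the prior matching/anonymization argument go through. Second, I must confirm that cross-user independence is exact and not merely an absence of correlation, so that the joint structure of $\{Z_u\}$ matches the i.i.d.-across-users model of \cite{nazanin_ISIT2017}; Lemma \ref{lem2} delivers full independence within a pair and independence across pairs is inherited from $G$, so this holds, but it is the step that must be stated explicitly. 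Once these are established, the conclusion follows directly from the cited result.
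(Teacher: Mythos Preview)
Your overall plan matches the paper's: decorrelate each pair via Lemma~\ref{lem2}, add a vanishing second layer of independent obfuscation, and invoke Theorem~1 of \cite{nazanin_ISIT2017}. You also correctly flag the reduction step as the main obstacle. However, your phrasing of the obstacle understates it, and you have not supplied the idea that actually resolves it.

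The difficulty is not merely that each $q_u$ needs a bounded marginal density; the cited independent-user theorem requires the parameters to be drawn \emph{i.i.d.}\ from such a density. After Lemma~\ref{lem2}, the $q_u$'s within a pair are \emph{dependent}: which partner is left intact is determined by both $p_u$ and $p_{u'}$, and the affected partner's $q_u$ is a function of the entire joint law of the pair. So you cannot hand the full population $\{q_u\}_{u=1}^n$ to Theorem~1 of \cite{nazanin_ISIT2017} and be done; your ``second'' check (sample-level independence of the $Z_u(k)$) does not address this parameter-level dependence. The paper's resolution is to observe that at least $n/2$ users---one per pair (the one whose $p_u$ is farther from $\frac{1}{2}$), plus all singletons---are left \emph{unaffected} by the decorrelation step; call this set $\mathcal U$. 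Restricted to $\mathcal U$, the parameters $q_u=p_u$ \emph{are} i.i.d., with conditioned density
\[
f_{Q'}(q')=2f_P(q')\int_{\min(q',1-q')}^{\max(q',1-q')}f_P(x)\,dx,
\]
which is bounded above by $2\delta_2^2$ and below by $2\delta_1^2\,|1-2q'|$ in a neighborhood of $p_1$ (so the argument goes through whenever $p_1\neq\frac{1}{2}$, a probability-one event). These $\geq n/2$ users form the ``crowd'' that guarantees perfect privacy for user~1 after the second-stage obfuscation and anonymization. Without this restriction-to-$\mathcal U$ idea, your proposed density check for the affected users would be genuinely hard, and the i.i.d.\ hypothesis of the cited theorem would fail.
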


\begin{proof}
	There are two main steps.
	
	Step $1$: De-correlate based on Lemma~\ref{lem2}. In particular, note that for at least half of the users, no noise is added in this step. More specifically, define
	\[\mathcal{U}= \text{Set of unaffected users} =\{u: \text{no noise is added to user } u \text{ in this step}\}.\]
Then after step $1$, we have $\check{Z}_u(k) \sim \Bern (\check{q}_u)$. As a result,
	\begin{itemize}
		\item For $u \in \mathcal{U}$; $\check{Z}_u(k)=X_u(k)$ and $\check{q}_u=p_u.$
		\item For $u \in \{1, 2, \cdots, n\}-\mathcal{U}$; $\check{Z}_u(k)\neq X_u(k)$ and $\check{q}_u\neq p_u.$
	\end{itemize}
	Note $|\mathcal{U}| \geq \frac{n}{2}$, because the main graph consists of some subgraphs with $s_l\leq 2$.
	
	Step $2$: Assume $\check{q}_u$'s are known to the adversary. The setup is now very similar to Theorem $1$ in~\cite{Nazanin_IT}, where perfect privacy is proved for the i.i.d.\ data. But there is a difference here. Specifically, although the users' data $\check{Z}_u(k)$ are now independent, the distribution of $\check{q}_u$'s are not, since they are the result of the data-dependent obfuscation technique of Lemma~\ref{lem2}. Luckily, this issue can be easily resolved so that we can show perfect privacy for user $1$. The main idea is to use the fact that as stated above, at least $\frac{n}{2}$ of the users are not impacted by the de-correlation step. As we see below, these users will be sufficient to ensure perfect privacy for user $1$ (which may or may not be in the set $\mathcal{U}$).

Let's explore the distributions of $\check{Q}_u=\check{q}_u$ for users in the set $\mathcal{U}$. For any correlated pair of users, the method of Lemma~\ref{lem2} leaves the one whose $p_u$ is farthest from $\frac{1}{2}$ intact. Since $p_u$'s are chosen independently from each other and each user is correlated with only one user, it is easy to see that for users in the set $\mathcal{U}$, the $\check{q}_u$'s are i.i.d.\ with the following probability density function
\[f_{\check{Q}}(\check{q}_u)=2 f_P(\check{q}_u) \int_{\min(\check{q}_u,1-\check{q}_u)}^{\max(\check{q}_u,1-\check{q}_u)} f_P(x) dx. \]
Therefore, the setup is the same as Theorem $1$ in~\cite{Nazanin_IT} where we want to prove perfect privacy for user $1$, and we have $\frac{n}{2}$ users who are independent from user $1$ and their parameter $\check{q}_u$ is chosen i.i.d.\ according to a density function. However, we need to check that the density function $f_{\check{Q}}(\check{q})$ satisfies the condition $\check{\delta}_1 <f_{\check{Q}}(\check{q}
_u)<\check{\delta}_2$ for some $\check{\delta}_1$ and $\check{\delta}_2$ on a neighborhood $\check{q}_u \in [p_u-\epsilon',p_u+\epsilon']$.
First, note that
\begin{align*}
 f_{\check{Q}}(\check{q}_u) &=2 f_P(\check{q}_u) \int_{\min(\check{q}_u,1-\check{q}_u)}^{\max(\check{q}_u,1-\check{q}_u)} f_P(x) dx. \\
 & < 2 \delta_2^2=\check{\delta}_2.
\end{align*}
Next,
\begin{align*}
 f_{\check{Q}}(\check{q}_u) &=2 f_P(\check{q}_u) \int_{\min(\check{q}_u,1-\check{q}_u)}^{\max(\check{q}_u,1-\check{q}_u)} f_P(x) dx. \\
 & > 2 \delta_1^2 |1-2\check{q}_u|=\check{\delta}_1.
\end{align*}
Thus, as long as $p_u \neq \frac{1}{2}$, the condition is satisfied \footnote{The case $p_u = \frac{1}{2}$ has zero probability, and thus need not be considered. Nevertheless, the result can be shown for $p_u=\frac{1}{2}$, as all we require is a number of users proportional to the length of the interval in the vicinity of $p_u$.}. Therefore, we can show perfect privacy for user $1$. Note that here, in the second step, we need to apply a second stage of obfuscation and apply anonymization according to Theorem $1$ in~\cite{Nazanin_IT}. Nevertheless, since the noise level $a_n \to 0$ for this second stage, the asymptotic noise level will stay the same as that for step $1$, i.e.
\[a(u,u')= \check{a}(u,u')=\frac{\abs{\cov (X_u(k), X_{u'}(k))}}{\max \{p_u, p_{u'}, 1-p_u, 1-p_{u'}\}}.\]
\end{proof}

Now, the above method can be extended to the case where $s_l>2$. Let $s_l=3$. From Figure~\ref{Sj=3}, there are two different situations in this case:
\begin{enumerate}

	\item Case $1$: As shown in Figure~\ref{Sj_case2}, user $1$ and user $2$ are correlated, and user $2$ and user $3$ are correlated. In the first step, we de-correlate user $2$ and user $3$ based on Lemma~\ref{lem2}. {\color{black}Now, we face a similar situation as that in the case $s_l=2$ (as shown in Figure~\ref{Sj_case1}), and we de-correlate them based on Lemma~\ref{lem2}.} Hence, we can make all of the users independent from each other and then according to Theorem~\ref{thm:noise}, we can achieve perfect privacy for all of them.
	\item Case $2$: As shown in Figure~\ref{Sj_case3}, all three users are correlated to each other. In the first step, we use Lemma~\ref{lem2} to make user $1$ and user $3$ uncorrelated. Now, we have a similar situation as case $1$, so we can make all the users independent from each other and then, according to Theorem~\ref{thm:noise}, we can achieve perfect privacy for all of them.
\end{enumerate}
\begin{figure*}[t!]
	\centering
	\subfloat[$s_l=2$.]{
		\includegraphics[width=0.23\columnwidth]{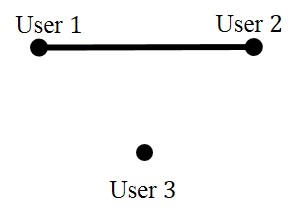}
		\label{Sj_case1}
	}
	\subfloat[$s_l=3$: Case $1$.]{
		\includegraphics[width=0.23\columnwidth]{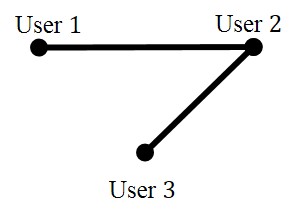}
		\label{Sj_case2}
	}
	\subfloat[$s_l=3$: Case $2$.]{
	\includegraphics[width=0.23\columnwidth]{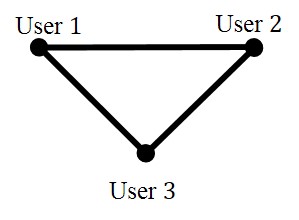}
	\label{Sj_case3}
}
	\caption{Three different ways which $3$ users can be correlated to each other.}
	\label{Sj=3}
\end{figure*}

\hspace{-0.18 in}\textbf{{Discussion $8$}:} Note that obfuscating data by adding non-zero asymptotic noise may degrade utility significantly. Therefore, in practice, it is usually not possible to de-correlate \emph{all} dependent users without imposing substantial utility degradation. In addition, in order to convert correlated data to independent data, users should collaborate together and disclose their private data to each other, which degrades privacy unless users trust each other. In such a setting, a possible approach in applying our technique is to only add {\color{black}de-correlation} noise to the data of highly-dependent users (e.g., spouses and close friends), and leave data of less-dependent users (e.g., co-workers) unchanged.




{\color{black}
	\begin{table}[]
		\caption{Summary of the results for the case anonymization is employed as a PPM for "no privacy" as a function of number of adversary's observations per user $(m)$. Here, $s$ is the size of group of users whose data traces are dependent, $r$ is the number of possible values for each user's data point, $|E|$ is the size of set of edges in the Markov chain, and the results hold for any $\alpha >0.$}
		{\color{black}\begin{center}
				\begin{tabular}{|c|c|c|}
					\hline
					\multirow{2}{*}{Users' data model} & {Independent users~\cite{tifs2016}} & {Dependent users} \\ \cline{2-3} 
					& $m$ & $m$ \\ \hline
					Two-state i.i.d. model & $\Omega\left(n^{{2}+\alpha}\right)$ &$\Omega\left(n^{\frac{2}{s}+\alpha}\right)$ \\ \hline
					$r$-state i.i.d. model & $\Omega\left(n^{\frac{2}{r-1}+\alpha}\right)$ & $\Omega\left(n^{\frac{2}{s(r-1)}+\alpha}\right)$ \\ \hline
					$r$-state Markov chain model& $\Omega\left(n^{\frac{2}{|E|-r}+\alpha}\right)$ &$\Omega\left(n^{\frac{2}{s(|E|-r)}+\alpha}\right)$ \\ \hline
				\end{tabular}
		\end{center}}
		\label{table1}
	\end{table}
}

{\color{black}
	\begin{table}[]
		\caption{Summary of the results for the case both obfuscation and anonymization are combined to be employed as a PPM for "no privacy" as a function of number of adversary's observations per user $(m)$ and the amount of noise level $(a_n)$. Here, $s$ is the size of group of users whose data traces are dependent, $r$ is the number of possible values for each user's data point, $|E|$ is the size of set of edges in the Markov chain, and the results hold for any $\alpha >0$.}
		{\color{black}\begin{center}
				\begin{tabular}{|c|c|c|c|c|}
					\hline
					\multirow{2}{*}{Users' data model} & \multicolumn{2}{c|}{Independent users~\cite{Nazanin_IT}} & \multicolumn{2}{c|}{Dependent users} \\ \cline{2-5}
					& $m$ & $a_n$& $m$ & $a_n$ \\ \hline
					Two-state i.i.d. model & $\Omega\left(n^{2+\alpha}\right)$ & $O\left(n^{-1-\beta}\right)$ & $\Omega\left(n^{\frac{2}{s}+\alpha}\right)$ & $O\left(n^{-\frac{1}{s}-\beta}\right)$ \\ \hline
					$r$-state i.i.d. model & $\Omega\left(n^{\frac{2}{r-1}+\alpha}\right)$ & $O\left(n^{-\frac{1}{r-1}-\beta}\right)$ & $\Omega\left(n^{\frac{2}{s(r-1)}+\alpha}\right)$ & $O\left(n^{-\frac{1}{s(r-1)}-\beta}\right)$ \\ \hline
					$r$-state Markov chain model & $\Omega\left(n^{\frac{2}{|E|-r}+\alpha}\right)$ & $O\left(n^{-\frac{1}{|E|-r}-\beta}\right)$ & $\Omega\left(n^{\frac{2}{s(|E|-r)}+\alpha}\right)$ & $O\left(n^{-\frac{1}{s(|E|-r)}-\beta}\right)$ \\ \hline
				\end{tabular}
		\end{center}}
		\label{table2}
	\end{table}
	
}
{\color{black}
\section{Acknowledgment}
\label{Ackno}
The authors would like to thank Dr. Farhad Shirani Chaharsooghi (the New York University) for his valuable suggestions and
discussions regarding graph alignments.}

\section{Conclusion}
\label{conclusion}
Resourceful adversaries can leverage statistical matching based on the prior behavior of users in order to break the privacy provided by PPMs.
Our previous work has considered the requirements on anonymization and obfuscation for ``perfect'' user privacy when traces are independent between users. However, in practice users have correlated data traces, as relationships between users establish dependence in their behavior. In this paper, we demonstrated that such dependency degrades the privacy of PPMs, as the anonymization employed must be significantly increased to preserve perfect privacy, and often no degree of independent obfuscation of the traces can be effective. {\color{black}The summary of the results is shown in Tables~\ref{table1} and~\ref{table2}.} We have also presented preliminary results on dependent obfuscation to {\color{black}improve users' privacy}.



\bibliographystyle{IEEEtran}
\bibliography{REF}
%
%
%
%

 \end{document}